\def\BibTeX{{\rm B\kern-.05em{\sc i\kern-.025em b}\kern-.08em
    T\kern-.1667em\lower.7ex\hbox{E}\kern-.125emX}}
\newcommand{\vect}[1]{\text{vec}\text{$(#1)$}}
\newcommand{\rank}[1]{\text{rank}\text{$(#1)$}}
\newcommand{\F}[1]{\text{${#1}$}}
\newcommand{\E}[1]{\text{${#1}$}}
\theoremstyle{definition}
\newtheorem{theorem}{Theorem}
\newtheorem{proposition}{Proposition}
\newtheorem{corollary}{Corollary}
\newtheorem{lemma}{Lemma}
\newtheorem{example}{Example}
\newtheorem{assumption}{Assumption}
\newtheorem{definition}{Definition}
\newtheorem{problem}{Problem}
\theoremstyle{remark}
\newtheorem{remark}{Remark}
\newcommand{\mygamma}{\Psi}
\begin{document}
\title{Output Regulation of Linear Stochastic Systems}
\author{Alberto Mellone, \IEEEmembership{Student Member, IEEE}, and Giordano Scarciotti, \IEEEmembership{Member, IEEE}
\thanks{Alberto Mellone and Giordano Scarciotti are with the Department of Electrical and Electronic \mbox{Engineering}, Imperial College London, London, SW7 2AZ, UK. {\tt\small \{a.mellone18,g.scarciotti\}@imperial.ac.uk}}}

This article has been accepted for publication by IEEE Transactions on Automatic Control.

\vspace{1cm}

The manuscript included in this file is the open access accepted version. 

\vspace{1cm}

This open access version is released on arXiv in accordance with the IEEE copyright agreement.

\vspace{1cm}

The final version will be available at (not open access) \url{https://doi.org/10.1109/TAC.2021.3064829}

\maketitle

\begin{abstract}
We address the output regulation problem for a general class of linear stochastic systems. Specifically, we formulate and solve the \emph{ideal} full-information and output-feedback problems, obtaining perfect, but non-causal, asymptotic regulation. A characterisation of the problem solvability is deduced. We point out that the ideal problems cannot be solved in practice because they unrealistically require that the Brownian motion affecting the system is available for feedback. Drawing from the ideal solution, we formulate and solve \emph{approximate} versions of the full-information and output-feedback problems, which do not yield perfect asymptotic tracking but can be solved in a realistic scenario. These solutions rely on two key ideas: first we introduce a discrete-time \emph{a-posteriori} estimator of the variations of the Brownian motion obtained causally by sampling the system state or output; second we introduce a hybrid state observer and a hybrid regulator scheme which employ the estimated Brownian variations. The approximate solution tends to the ideal as the sampling period tends to zero. The proposed theory is validated by the regulation of a circuit subject to electromagnetic noise.
\end{abstract}

\begin{IEEEkeywords}
Output regulation, stochastic systems, hybrid systems, uncertain systems.
\end{IEEEkeywords}

\section{Introduction}
The problem of output regulation is fundamental in control theory. It consists in designing a controller for a system such that the closed-loop system attains the properties of stability and regulation. In particular, the former is meant as the convergence of the state of the closed-loop system to zero in absence of external inputs. The latter requires the output of the system to track reference signals and/or to reject disturbances. Both references and disturbances are assumed to be generated by a so-called exogenous system (sometimes referred to as exosystem or signal generator).

Since the problem was firstly addressed and solved for linear deterministic systems by Davison, Francis and Wonham, \cite{Davison1976}, \cite{Francis1977}, \cite{FrancisWonham1976} and \cite{Wonham1985}, it has been widely studied and research has followed different directions: the problem has been extended to and solved for the nonlinear counterpart in~\cite{IsidoriByrnes1990} and \cite{HuangRugh1990}; additional results have been achieved in more general nonlinear frameworks in, \emph{e.g.}, \cite{Byrnes1997}, \cite{Serrani2001}, \cite{Byrnes2003}, \cite{Byrnes2004}, \cite{Huang2004}, \cite{Pavlov2006}, \cite{Marconi2007}, \cite{Marconi2008}; finally, the problem has been further extended to other classes of dynamical systems, such as time-varying \cite{Ichikawa2006}, hybrid \cite{Marconi2013}, \cite{Carnevale2016} or networked \cite{Davison1976bis}, \cite{Wieland2011}.

In this paper we deal with the problem of output regulation of a class of stochastic systems described by stochastic differential equations. This modeling framework allows the designer to account for uncertainties which have probabilistic properties that the classical robust deterministic framework cannot address. A series of examples motivating the need for the use of stochastic modeling in control engineering are provided in \cite[Section 1.9]{Damm2004} and all the references therein. These examples include mechanical systems (the inverted pendulum, the two-cart system, the quarter-car model, the car-steering system), aerospace applications (the satellite dynamics along the pitch and yaw directions), electrical and electro-mechanical systems (the suspended gyro) and mathematical finance. Other methodological applications include the filtering and optimal control problems which are addressed in \emph{e.g.} \cite{Yong1999} and \cite{Oksendal2003}, and the stochastic $H_\infty$ control problem, see, \emph{e.g.}, \cite{Hinrichsen1998}, \cite{Gershon2013} and \cite{Hua2018}.

Although control of stochastic systems is a longstanding research area, a systematic theory of output regulation is missing. Some results in this direction have been obtained in \cite{He2013}, \cite{He2014}; therein randomness is induced by the system dynamics switching in a Markovian fashion among underlying linear deterministic subsystems. However, stochastic differential equations are not involved.

This paper is intended to solve the problem of output regulation for single-input single-output systems described by general (state, control and reference/disturbance signals appear in both the drift and diffusion terms) linear stochastic differential equations and, in doing so, lay the foundations to further extend the results beyond the linear case. To achieve these goals we formulate and solve four regulation problems: \emph{ideal full information}, \emph{approximate full information}, \emph{ideal output feedback} and \emph{approximate output feedback}. The ideal problems require zero tracking error at steady state but their solutions are not causal, whereas the approximate problems allow for a non-zero tracking error but are causal.

We first deal with the full-information problem, \emph{i.e.} we assume that the state of the system is available for feedback. In this regard, we first provide the solution as well as necessary and sufficient conditions for the solvability of the problem in the ideal, but unrealistic, case in which the Brownian motion affecting the system is available for feedback. This hypothesis is not practically reasonable and this motivates the name \emph{ideal} which we give to this framework. We then show that the ideal solution is instrumental in introducing and solving an \emph{approximate}, yet practically sound, problem. In particular, as the noise is not available, the reference tracking requirement cannot be met perfectly. Thus, we propose a feedback control scheme which allows us to trade accuracy of regulation for practical implementation. To do so, we introduce a hybrid architecture that periodically performs an \emph{a-posteriori} partial estimation of the noise that affected the system between sampling times. The full-information problem is then revisited and an approximate yet implementable solution is provided. Most importantly, we show that the tracking error decreases as the sampling frequency is increased. In an analogous way, we solve the ideal output-feedback problem, \emph{i.e.} by assuming knowledge of the Brownian motion, we use a measured output to synthesise a dynamic regulator achieving perfect state estimation and asymptotic tracking. Again, this ideal, but unrealistic, framework is instrumental in formulating and solving an approximate counterpart. In this case, the Brownian motion is partially estimated by comparing successive samples of an additional measured output. This \emph{a-posteriori} partial estimation of the noise is used to design a causal state observer and hence a causal dynamic regulator achieving approximate reference tracking. We show that, if samples are acquired with higher frequency, both the state estimation error and the tracking error decrease.

Preliminary results have been published in \cite{Scarciotti2018}, \cite{Mellone2019} and \cite{Mellone2019bis}. The additional contributions of this paper are both theoretical and practical: 1) all the results are now proved, providing a substantial theoretical contribution. 2) Differently from the preliminary publications, all the results are now independent of a specific feedback gain. This is a major improvement, which also reconciles the stochastic framework with the deterministic output regulation. 3) The solvability of the regulator equations, as well as the solvability of the ideal problems, is characterised and shown to be a generalisation of the deterministic non-resonance condition. 4) The results are generalized to systems with any relative degree. 5) The design of a regulator achieving the stochastic internal model property is provided. 6) All the results are hereby revisited and reorganised in an organic and systematic way. 7) Finally, we present a practical example that validates and justifies the development of the theory.

The rest of the paper is organised as follows. In Section~\ref{sec:general_framework} we set up the framework and report some preliminary notions. In Section~\ref{section:ideal_fi} we state the ideal full-information problem and provide its solution. We additionally characterise the solvability of the problem and discuss the stochastic internal model property. In Section~\ref{section:approximate_fi} we state an approximate version of the full-information problem. We then show how to obtain an \emph{a-posteriori} approximation of the variations of the Brownian motion and we use this to synthesise a hybrid regulator that solves the problem. In Section~\ref{section:ideal_ef} we state the problem of output regulation via output feedback in the ideal case and report its solution. In Section~\ref{section:approximate_ef} we discuss additional challenges in the implementation of the ideal dynamic compensator, which justify the statement of an approximate problem. A new way of obtaining estimates of the variation of the Brownian motion is introduced; then, a hybrid dynamic regulator that solves the approximate problem is designed. In Section \ref{sec:example} an example is provided to illustrate the results. Finally, Section \ref{sec:conclusions} contains some concluding remarks. 

\textbf{Notation.} The symbol $\mathbb{Z}$ denotes the set of integer numbers, while $\mathbb{R}$ and $\mathbb{C}$ denote the fields of real and complex numbers, respectively; by adding the subscript ``$<\!0$'' (``$\ge \!0$'', ``$0$'') to any symbol indicating a set of numbers, we denote that subset of numbers with negative (non-negative, zero) real part. By $\Vert a \Vert$ we denote the Euclidean norm of $a\in \mathbb{R}^{n}$. If a function $g$ experiences a jump variation at time $\bar t$, we use the notation $g(\bar{t}^+) = \lim_{t\rightarrow \bar{t}^+}g(t)$ to indicate the value of $g$ after the jump, while we indicate the limit from the left, \emph{i.e.} $\lim_{t\rightarrow \bar{t}^-}g(t)$, simply by $g(\bar t)$. The identity matrix is denoted by~$I$. $A^\top$ indicates the transpose of $A$. The symbol $\otimes$ denotes the Kronecker product. $(\nabla, \mathcal{A}, \mathcal{P})$ is a probability space given by the set $\nabla$, the $\sigma$-algebra $\mathcal{A}$ defined on $\nabla$ and the probability measure $\mathcal{P}$ on the measurable space $(\nabla, \mathcal{A})$. $E[X]$ denotes the expected value of the random variable $X:\nabla \rightarrow (\nabla, \mathcal{A})$ \cite[Section 1.3]{Arnold1974}. A \textit{stochastic process} with state space $\mathbb{R}^n$ is a family $\{x_t,\,t\in\mathbb{R}\}$ of $\mathbb{R}^n$-valued random variables, \textit{i.e.} for every fixed $t\in\mathbb{R}$, $x_t(\cdot)$ is an $\mathbb{R}^n$-valued random variable and, for every fixed $w \in \nabla$, $x_{\cdot}(w)$ is an $\mathbb{R}^n$-valued function of time \cite[Section 1.8]{Arnold1974}. For ease of notation, we often indicate a stochastic process $\{x_t,\,t\in\mathbb{R}\}$ simply with $x_t$ (this is common in the literature, see \textit{e.g.} \cite{Arnold1974}). A stochastic process with the superscript $ss$ denotes the steady state of such process. The symbol $\mathcal{W}_t$ indicates a standard Wiener process, also referred to as Brownian motion, defined on the probability space $(\nabla, \mathcal{A}, \mathcal{P})$ \cite[Chapter 3]{Arnold1974}. The stochastic integrals, and thus the solution of stochastic differential equations, are meant in It\^o's sense.

\section{Preliminaries}
\label{sec:general_framework}
Consider the linear stochastic single-input single-output system
\begin{equation}
\label{eq:general_system}
\begin{aligned}
dx_t\! &=\! (Ax_t\!+\!Bu_t\! +\! P\omega)dt\! +\! (Fx_t\! +\! Gu_t\! +\! R\omega)d\mathcal{W}_t,\\
y^c_t &= Cx_t + Du_t,\quad e_t = y^c_t + Q\omega,
\end{aligned}
\end{equation}
where  $x_t \in \mathbb{R}^n$ is the state, $u_t\in \mathbb{R}$ is the control input, $\omega(t) \in \mathbb{R}^{\nu}$ is the exogenous input, $y^c_t\in \mathbb{R}$ is the controlled output, $e_t\in \mathbb{R}$ is the tracking error, $A\in \mathbb{R}^{n\times n}$, $F \in \mathbb{R}^{n\times n}$, $B\in \mathbb{R}^{n\times 1}$, $G \in \mathbb{R}^{n\times 1}$, $P\in \mathbb{R}^{n\times \nu}$, $R \in \mathbb{R}^{n\times \nu}$, $C\in \mathbb{R}^{1\times n}$, $D\in \mathbb{R}$ and $Q\in\mathbb{R}^{1\times \nu}$. For notational simplicity we assume that the initial condition $x_0$ is deterministic. The~exogenous signal $\omega$ is assumed to be the state of a so-called exogenous system, which is described by the equations
\begin{equation}
\label{eq:exogenous_system}
\dot{\omega} = S\omega, \quad \omega(0) = \omega_0,
\end{equation}
where $S\in \mathbb{R}^{\nu\times\nu}$ and the initial condition $\omega_0$ is assumed deterministic for notational simplicity. Moreover, we often recall and assume the following.

\begin{assumption}
	\label{assumption:ex_system_marg_stable}
	All the eigenvalues of the matrix $S$ are on the imaginary axis and are simple, \emph{i.e.} they have the same algebraic and geometric multiplicity.
\end{assumption}
By Assumption~\ref{assumption:ex_system_marg_stable} and independence of $x_0$ from $\mathcal{W}_t - \mathcal{W}_0$ for all $t\in \mathbb{R}_{\ge 0}$, the initial value problem associated to~\eqref{eq:general_system} has a unique (global) solution \cite[Theorem 8.1.5]{Arnold1974}.

 It is now useful to recall the definitions of stability, stabilisability and boundedness that we use in the paper. We say that an event $J\in\mathcal{A}$ happens \emph{almost surely} if $\mathcal{P}(J) = 1$, see \emph{e.g.} \cite{Oksendal2003}. We use the definition of almost sure asymptotic stability as given in \cite{Kozin1969} or as defined equivalently for linear systems in \cite[Section 11.4]{Arnold1974}, \cite[Section 2]{Kozin1963}.

\begin{definition}
	System~\eqref{eq:general_system} is almost surely asymptotically stabilisable if there exists a matrix $K$ such that \eqref{eq:general_system} with $u_t = Kx_t$ and $\omega \equiv 0$ is almost surely asymptotically stable.
\end{definition}

\begin{remark}
There are no available criteria for the choice of a feedback gain $K$ which achieves almost sure asymptotic stability of the closed-loop system in the general case that $A$ and $F$ do not commute (see \cite[Section 8.5]{Arnold1974} for the commuting case). However, since mean-square asymptotic stability implies almost sure asymptotic stability \cite[Chapter 11]{Arnold1974}, and the former can be easily enforced \cite[Section 11.3]{Arnold1974}, we can choose the matrix $K$ so that the stability conditions in the mean-square sense are met.
\end{remark}

\begin{definition}
    The stochastic process $a_t$, with $t\in \mathcal{T}\subseteq \mathbb{R}$, is almost surely bounded if there exists $0<M<\infty$ such that 
    $
        \mathcal{P}\left(\sup_{t\in \mathcal{T}}\Vert a_t \Vert < M\right) = 1.
    $
\end{definition}
\begin{remark}
        In the remainder, unless otherwise specified, we drop the wording ``almost sure(ly)'' whenever we refer to stability, stabilisability and boundedness, with the understanding that it remains implicit. 
\end{remark}
As the output regulation problem imposes requirements on the steady state of the closed-loop system, it is essential to provide a characterisation of the steady state of system~\eqref{eq:general_system}. The steady-state response of a stochastic system is the response of the system in the limit as the initial time tends to $-\infty$, or, equivalently, after an infinite amount of time has passed. Let $\Phi_t\in \mathbb{R}^{n\times n}$ be the fundamental matrix of the homogeneous equation corresponding to system~\eqref{eq:general_system}, \emph{i.e.}
    $d\Phi_t = (Adt + Fd\mathcal{W}_t)\Phi_t$.
Then, the following lemma (the proof of which can be found in \cite{Scarciotti2018bis}) characterises the steady-state response of a linear stochastic system of the form~\eqref{eq:general_system} when $u_t = 0$.

\begin{lemma}
\label{lemma:general_steady_state}
\cite{Scarciotti2018bis} Consider the interconnection of system~\eqref{eq:general_system} and the signal generator~\eqref{eq:exogenous_system} with $u_t=0$. Suppose that Assumption~\ref{assumption:ex_system_marg_stable} holds and that system~\eqref{eq:general_system} is asymptotically stable. Then the steady-state response of the output $y^c_t$ is
    $y^{c,ss}_t = C\Pi_t^{ss}\omega(t)$,
where $\Pi_t^{ss} \in \mathbb{R}^{n\times \nu}$, given by
$$
\Pi_t^{ss} \!=\! \Phi_t\!\left[ \int_{-\infty}^t \!\!\!\!\!\Phi_\tau^{-1}(P-FR)e^{S\tau}d\tau +\! \int_{-\infty}^t \!\!\!\!\!\Phi_\tau^{-1}Re^{S\tau}d\mathcal{W}_\tau\right]\!e^{-St}\!\!,
$$
is the steady-state response of
\begin{equation}
    d\Pi_t = (A\Pi_t - \Pi_tS + P)dt + (F\Pi_t + R)d\mathcal{W}_t.
\end{equation}
\end{lemma}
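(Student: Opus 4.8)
With $u_t\equiv 0$, system~\eqref{eq:general_system} reduces to $dx_t=(Ax_t+P\omega)dt+(Fx_t+R\omega)d\mathcal{W}_t$, $y^c_t=Cx_t$, with $\omega(t)=e^{St}\omega_0$ from~\eqref{eq:exogenous_system}. The plan has three steps: (i) reduce the statement about $y^c_t$ to one about the matrix process $\Pi_t$; (ii) solve the $\Pi_t$-equation in closed form by variation of constants; (iii) let the initial time $t_0\to-\infty$ to extract the steady state. For step (i) I would posit the factorisation $x_t=\Pi_t\omega(t)$ and apply the It\^o product rule: since $\omega$ is deterministic and of finite variation, its quadratic covariation with $\Pi_t$ vanishes, so $d(\Pi_t\omega(t))=(d\Pi_t)\omega(t)+\Pi_tS\omega(t)\,dt$. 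Substituting the $\Pi_t$-dynamics $d\Pi_t=(A\Pi_t-\Pi_tS+P)dt+(F\Pi_t+R)d\mathcal{W}_t$ and cancelling the two $\Pi_tS\omega(t)\,dt$ terms shows that $x_t=\Pi_t\omega(t)$ solves the reduced system, hence (by uniqueness of solutions) is its solution and $y^c_t=C\Pi_t\omega(t)$. Because asymptotic stability makes the steady-state response of both the $x_t$- and the $\Pi_t$-dynamics independent of the initial condition, this identity carries over to the steady state, so $y^{c,ss}_t=C\Pi^{ss}_t\omega(t)$ and it remains only to identify $\Pi^{ss}_t$.

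For step (ii) I would introduce the change of variables $\Pi_t=\Phi_tM_te^{-St}$, where $\Phi_t$ is the fundamental matrix of $d\Phi_t=(A\,dt+F\,d\mathcal{W}_t)\Phi_t$, which is almost surely invertible for every $t$. Applying the It\^o product rule to this triple product (the only nonvanishing covariation being the one between $\Phi_t$ and $M_t$, since $e^{-St}$ is smooth) and matching the drift and diffusion coefficients against those of the $\Pi_t$-equation forces $dM_t=\Phi_t^{-1}(P-FR)e^{St}\,dt+\Phi_t^{-1}Re^{St}\,d\mathcal{W}_t$. Integrating from $t_0$ to $t$ and substituting back gives the general solution
\[
\Pi_t=\Phi_t\Phi_{t_0}^{-1}\Pi_{t_0}e^{S(t_0-t)}+\Phi_t\!\left[\int_{t_0}^{t}\Phi_\tau^{-1}(P-FR)e^{S\tau}\,d\tau+\int_{t_0}^{t}\Phi_\tau^{-1}Re^{S\tau}\,d\mathcal{W}_\tau\right]\!e^{-St}.
\]

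For step (iii) I would let $t_0\to-\infty$. The transient term $\Phi_t\Phi_{t_0}^{-1}\Pi_{t_0}e^{S(t_0-t)}$ should vanish: by Assumption~\ref{assumption:ex_system_marg_stable} the factor $e^{S(t_0-t)}$ stays bounded, while the transition matrix $\Phi_t\Phi_{t_0}^{-1}$ tends to zero as $t_0\to-\infty$ — heuristically because, by the cocycle structure of the fundamental matrix and the stationarity of the Brownian increments, it has the same law as $\Phi$ over an interval of length $t-t_0$, on which the assumed almost sure asymptotic stability forces decay. The two integrals then converge to their improper counterparts $\int_{-\infty}^{t}$, producing exactly the asserted formula for $\Pi^{ss}_t$ and hence $y^{c,ss}_t=C\Pi^{ss}_t\omega(t)$. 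I expect the crux of the proof to lie precisely in this last step: one must justify rigorously, in a measure-theoretically clean way, both that the transient contribution vanishes almost surely and that the improper It\^o integral $\int_{-\infty}^{t}\Phi_\tau^{-1}Re^{S\tau}\,d\mathcal{W}_\tau$ converges. The latter would be handled by a Cauchy-type estimate (in $L^2$ or in probability) exploiting the exponential decay of the transition matrix $\Phi_t\Phi_\tau^{-1}$ as $\tau\to-\infty$ together with the boundedness of $e^{S(\tau-t)}$ to control the tails; conveniently, the same estimate also shows that $\Pi^{ss}_t$ is almost surely bounded, as a genuine steady-state response should be.
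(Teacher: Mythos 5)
The paper does not contain its own proof of Lemma~\ref{lemma:general_steady_state}; it is quoted from \cite{Scarciotti2018bis}. Your argument is nonetheless the standard derivation of this result: the factorisation $x_t=\Pi_t\omega(t)$ via the It\^o product rule (with vanishing covariation since $\omega$ is of finite variation), the variation-of-constants substitution $\Pi_t=\Phi_t M_t e^{-St}$ whose coefficient matching correctly produces the It\^o correction $P-FR$ in the drift of $M_t$, and the limit $t_0\to-\infty$. The two technical points you flag yourself (almost sure decay of the transition matrix $\Phi_t\Phi_{t_0}^{-1}$ as $t_0\to-\infty$, and convergence of the improper It\^o integral) are indeed where the remaining work lies, and the Cauchy-type estimate you propose, exploiting the exponential decay guaranteed by almost sure asymptotic stability together with the boundedness of $e^{S(\tau-t)}$ under Assumption~\ref{assumption:ex_system_marg_stable}, is the right tool; I see no gap in the approach.
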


\section{Ideal Full-Information Problem}
\label{section:ideal_fi}
In this section we state the ideal full-information (\emph{i.e.} the state of the system and the Brownian motion are available for feedback) output regulation problem and we provide its solution. This result is instrumental for the development of a causal solution in the next sections.

\begin{problem}
\label{problem:IFI}
	(Ideal Full-Information Output Regulation \mbox{Problem}). Consider system~\eqref{eq:general_system}, driven by the signal generator~\eqref{eq:exogenous_system}. The \textit{ideal full-information output regulation problem} consists in determining a regulator such that the following conditions hold.
	\begin{description}
		\item[\textbf{(S$_I^F$)}] The closed-loop system obtained by interconnecting system~\eqref{eq:general_system} and the regulator with $\omega\equiv 0$ is asymptotically stable.
		\item[\textbf{(R$_I^F$)}] The closed-loop system obtained by interconnecting system~\eqref{eq:general_system}, the signal generator~\eqref{eq:exogenous_system} and the regulator satisfies $\lim_{t\rightarrow \infty} e_t = 0$ almost surely for any $(x_0, \omega_0)\in \mathbb{R}^n\times\mathbb{R}^\nu$. 
	\end{description}
\end{problem}

We look for a static regulator of the form
\begin{equation}
	\label{eq:static_regulator}
		u_t = Kx_t + \Gamma_t\omega,
	\end{equation}
where $K\in \mathbb{R}^{1\times n}$ and $\Gamma_t\in \mathbb{R}^{1\times\nu}$ is bounded. We first present a preliminary result. Namely, we assume that the condition \textbf{(S$_I^F$)} holds for some choice of the gain $K$ and we characterise the satisfaction of condition \textbf{(R$_I^F$)} via the solution of stochastic differential-algebraic equations.

\begin{lemma}
\label{lemma:preliminary_IFI}
    Consider Problem~\ref{problem:IFI} and let Assumption~\ref{assumption:ex_system_marg_stable} hold. Suppose there exists $K$ such that condition \textbf{(S$_I^F$)} holds and let $\Gamma_t\in \mathbb{R}^{1\times\nu}$ be bounded. Then condition \textbf{(R$_I^F$)} holds if and only if there exists a bounded matrix $\Pi_t\in \mathbb{R}^{n\times\nu}$ solving the equations
    \begin{equation}
	\label{eq:IFI_regulator_equations_with_K}
	\begin{aligned}
	d\Pi_t &= \left[(A+BK)\Pi_t - \Pi_t S + P + B\Gamma_t\right]dt\, +\\ &\quad \qquad\qquad \qquad \left[(F+GK)\Pi_t + R + G\Gamma_t\right]d\mathcal{W}_t,\\
	0 &= \lim_{t\rightarrow \infty}[(C+DK)\Pi_t + Q + D\Gamma_t]\quad \text{almost surely}.
	\end{aligned}
	\end{equation}
\end{lemma}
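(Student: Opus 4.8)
The plan is to exploit the change of variables that shifts the steady-state behaviour to the origin. Define the error coordinate $\tilde x_t = x_t - \Pi_t \omega(t)$, where $\Pi_t$ is a candidate bounded solution of the first (stochastic differential) equation in~\eqref{eq:IFI_regulator_equations_with_K}. First I would compute $d\tilde x_t$ using It\^o's product rule on $\Pi_t\omega(t)$, substituting the closed-loop dynamics $dx_t = [(A+BK)x_t + B\Gamma_t\omega + P\omega]dt + [(F+GK)x_t + G\Gamma_t\omega + R\omega]d\mathcal W_t$ (obtained by plugging the regulator~\eqref{eq:static_regulator} into~\eqref{eq:general_system}) together with $d\omega = S\omega\,dt$ and $d\Pi_t$ from~\eqref{eq:IFI_regulator_equations_with_K}. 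The key algebraic cancellation is that the $dt$ and $d\mathcal W_t$ terms proportional to $\omega$ vanish precisely because $\Pi_t$ satisfies the Sylvester-type stochastic differential equation, leaving the homogeneous system $d\tilde x_t = (A+BK)\tilde x_t\,dt + (F+GK)\tilde x_t\,d\mathcal W_t$. Since condition \textbf{(S$_I^F$)} says exactly that this homogeneous system is almost surely asymptotically stable, we get $\tilde x_t \to 0$ almost surely for every initial condition.

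Next I would rewrite the tracking error in these coordinates. From~\eqref{eq:general_system} and~\eqref{eq:static_regulator}, $e_t = (C+DK)x_t + D\Gamma_t\omega + Q\omega = (C+DK)\tilde x_t + [(C+DK)\Pi_t + Q + D\Gamma_t]\omega(t)$. The first term tends to zero almost surely by the stability argument above (here one uses that $C+DK$ is a fixed matrix, so a.s. convergence of $\tilde x_t$ transfers to $(C+DK)\tilde x_t$). For the second term, $\omega(t)$ is bounded by Assumption~\ref{assumption:ex_system_marg_stable}, so if the algebraic condition $\lim_{t\to\infty}[(C+DK)\Pi_t + Q + D\Gamma_t] = 0$ a.s.\ holds, the whole second term vanishes a.s., giving $e_t \to 0$ a.s.; this proves the ``if'' direction.

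For the ``only if'' direction I would argue as follows. Assume \textbf{(R$_I^F$)} holds, i.e.\ $e_t\to 0$ a.s. We must produce a bounded $\Pi_t$ solving~\eqref{eq:IFI_regulator_equations_with_K}. The natural candidate is the steady-state solution $\Pi_t^{ss}$ of the (forced) stochastic differential equation, whose existence and boundedness follow from the stability in \textbf{(S$_I^F$)} and Assumption~\ref{assumption:ex_system_marg_stable} via (the reasoning behind) Lemma~\ref{lemma:general_steady_state} applied to the closed-loop data $(A+BK, F+GK, P+B\Gamma_t, R+G\Gamma_t)$ --- note $\Gamma_t$ bounded means the forcing is bounded. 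With this $\Pi_t = \Pi_t^{ss}$, the change of variables again yields $e_t = (C+DK)\tilde x_t + [(C+DK)\Pi_t^{ss} + Q + D\Gamma_t]\omega(t)$ with $\tilde x_t\to 0$ a.s. Since $e_t\to 0$ a.s.\ by hypothesis, we deduce $[(C+DK)\Pi_t^{ss} + Q + D\Gamma_t]\omega(t)\to 0$ a.s. Because this must hold for \emph{every} $\omega_0\in\mathbb R^\nu$ and the solutions $\omega(t)$ of a marginally stable, simply-structured $S$ are almost periodic and collectively non-degenerate (span $\mathbb R^\nu$ at suitable times), one concludes $\lim_{t\to\infty}[(C+DK)\Pi_t^{ss} + Q + D\Gamma_t] = 0$ a.s.

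The main obstacle is this last step: passing from ``$[(C+DK)\Pi_t^{ss} + Q + D\Gamma_t]\,\omega(t)\to 0$ a.s.\ for all $\omega_0$'' to ``$(C+DK)\Pi_t^{ss} + Q + D\Gamma_t\to 0$ a.s.'' In the deterministic output-regulation literature this is handled by the non-resonance/persistency structure of the exosystem; here it must be done pathwise (a.s.) and one must be careful that the exceptional null set does not depend on $\omega_0$ (using separability/continuity in $\omega_0$, or choosing a countable dense set of initial conditions whose trajectories still force the conclusion). I would also need to confirm that $\Pi_t^{ss}$ (and hence $(C+DK)\Pi_t^{ss}$) is bounded, which is where Assumption~\ref{assumption:ex_system_marg_stable} and \textbf{(S$_I^F$)} are used, as in Lemma~\ref{lemma:general_steady_state}.
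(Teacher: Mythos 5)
Your proposal is correct and follows essentially the same route as the paper: the change of variables $\chi_t = x_t - \Pi_t\omega$, the resulting homogeneous error dynamics plus the decomposition of $e_t$, and, for necessity, taking $\Pi_t$ to be the (bounded) steady-state solution obtained from Lemma~\ref{lemma:general_steady_state} applied to the closed-loop data with the exponentially decaying fundamental matrix $\widetilde{\Phi}_t$ and bounded $\Gamma_t$. The one step you flag as the main obstacle --- passing from $[(C+DK)\Pi_t + Q + D\Gamma_t]\omega(t)\to 0$ for all $\omega_0$ to the vanishing of the matrix coefficient, with a null set independent of $\omega_0$ --- is simply asserted in the paper, so your treatment is, if anything, more careful on that point.
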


\begin{proof}
    By Assumption~\ref{assumption:ex_system_marg_stable}, the stability of the closed-loop system and Lemma~\ref{lemma:general_steady_state}, the matrix $\Pi_t$ solving~\eqref{eq:IFI_regulator_equations_with_K} exists. Let $u_t = Kx_t + \Gamma_t \omega$ and $\chi_t = x_t - \Pi_t \omega$. Then
        \begin{equation}
        \label{eq:closed_loop_auxiliary_state}
        \begin{aligned}
        d\chi_t &= (A+BK)\chi_tdt + (F+GK)\chi_td\mathcal{W}_t,\\
        e_t &= (C+DK)\chi_t + \left[(C+DK)\Pi_t + Q + D\Gamma_t\right]\omega.
        \end{aligned}
    \end{equation}
    Therefore, since \textbf{(S$_I^F$)} holds, $\lim_{t\rightarrow\infty} \chi_t = 0$, or, equivalently, $x_t$ converges to $\Pi_t\omega$, almost surely.
    
     \emph{Sufficiency:} assume that $\Pi_t$ satisfying~\eqref{eq:IFI_regulator_equations_with_K} is bounded. Then the steady-state response of the state of the closed-loop system is bounded. Moreover, since $\lim_{t\rightarrow\infty} \chi_t = 0$ and Assumption~\ref{assumption:ex_system_marg_stable} and the second equation in~\eqref{eq:IFI_regulator_equations_with_K} hold, then the tracking error of the closed-loop system, given by the second equation in~\eqref{eq:closed_loop_auxiliary_state}, satisfies $\lim_{t\rightarrow\infty} e_t = 0$ almost surely, \emph{i.e.} condition~\textbf{(R$_I^F$)} is satisfied.
    
    \emph{Necessity:} assume condition~\textbf{(R$_I^F$)} is satisfied. Since $\lim_{t\rightarrow\infty} \chi_t = 0$, then by the second equation in~\eqref{eq:closed_loop_auxiliary_state}
    \begin{equation}
        \lim_{t\rightarrow\infty}e_t = \lim_{t\rightarrow\infty}\left[(C+DK)\Pi_t + Q + D\Gamma_t\right]\omega.
    \end{equation}
    By assumption, $\lim_{t\rightarrow\infty}e_t = 0$ almost surely for all $\omega$, hence
    \begin{equation}
        0 = \lim_{t\rightarrow \infty}[(C+DK)\Pi_t + Q + D\Gamma_t] \quad \text{almost surely}.
    \end{equation}
    To show that $\Pi_t$ is bounded, observe that by Lemma~\ref{lemma:general_steady_state} the solution of the differential equation in~\eqref{eq:IFI_regulator_equations_with_K} is
    \begin{multline}
    \Pi_t = \widetilde{\Phi}_t\Pi_{t_0} e^{-St} +\\ \quad \quad \widetilde{\Phi}_t\left[ \int_{t_0}^t \widetilde{\Phi}_\tau^{-1}(P+B\Gamma_\tau-(F+GK)(R+G\Gamma_\tau))e^{S\tau}d\tau \right.\\ + 
            \left. \int_{t_0}^t \widetilde{\Phi}_\tau^{-1}(R+G\Gamma_\tau)e^{S\tau}d\mathcal{W}_\tau\right]e^{-St},
\end{multline}
where $\widetilde{\Phi}_t$ is the fundamental matrix associated to the autonomous system obtained from~\eqref{eq:general_system} with $u_t = Kx_t$ and $\omega \equiv 0$. Since \textbf{(S$_I^F$)} holds, $\widetilde{\Phi}_t$ converges exponentially to zero and, since the matrix $\Gamma_t$ is bounded, $\Pi_t$ is bounded as well.
\end{proof}

The following assumption is necessary for the problem to be solved.

\begin{assumption}
	\label{assumption:system_stabilisable}
	System~\eqref{eq:general_system} with $\omega \equiv 0$ is asymptotically stabilisable.
\end{assumption}

The next theorem provides the solution to the ideal full-information output regulation problem.

\begin{theorem}
	\label{theorem:IFI_solution}
	 Consider the ideal full-information regulator problem. Suppose Assumptions~\ref{assumption:ex_system_marg_stable} and \ref{assumption:system_stabilisable} hold. Then there exist matrices $K$ and $\Gamma_t$ such that the control law~\eqref{eq:static_regulator} is bounded and solves Problem~\ref{problem:IFI} if and only if there exist bounded matrices $\Pi_t\in\mathbb{R}^{n\times\nu}$ and $\Lambda_t\in\mathbb{R}^{1\times\nu}$ solving the equations
	\begin{equation}
	\label{eq:IFI_regulator_equations}
	\begin{aligned}
	d\Pi_t &= \left[A\Pi_t\! -\! \Pi_t S\! +\! P\! +\! B\Lambda_t\right]dt + \left[F\Pi_t\! +\! R\! +\! G\Lambda_t\right]d\mathcal{W}_t,\\
	0 &= \lim_{t\rightarrow \infty}[C\Pi_t + Q + D\Lambda_t]\quad \text{almost surely}.
	\end{aligned}
	\end{equation}
\end{theorem}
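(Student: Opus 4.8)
The plan is to reduce Theorem~\ref{theorem:IFI_solution} to Lemma~\ref{lemma:preliminary_IFI} via the change of variables $\Lambda_t = K\Pi_t + \Gamma_t$ (equivalently $\Gamma_t = \Lambda_t - K\Pi_t$), which removes the stabilising gain from the regulator equations. A direct substitution gives $(A+BK)\Pi_t + B\Gamma_t = A\Pi_t + B\Lambda_t$, $(F+GK)\Pi_t + G\Gamma_t = F\Pi_t + G\Lambda_t$ and $(C+DK)\Pi_t + Q + D\Gamma_t = C\Pi_t + Q + D\Lambda_t$, so the pair $(\Pi_t,\Gamma_t)$ solves~\eqref{eq:IFI_regulator_equations_with_K} if and only if the pair $(\Pi_t,\Lambda_t)$ with $\Lambda_t = K\Pi_t + \Gamma_t$ solves~\eqref{eq:IFI_regulator_equations}. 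Since $K$ is constant, boundedness of $\Pi_t$ together with boundedness of either $\Gamma_t$ or $\Lambda_t$ is equivalent to boundedness of the other, so this change of variables preserves the boundedness requirements on both sides.

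For necessity, I would assume that $K$ and a bounded $\Gamma_t$ (the latter being part of the regulator class) are such that the control law~\eqref{eq:static_regulator} solves Problem~\ref{problem:IFI}. Then condition~\textbf{(S$_I^F$)} holds for this $K$ and condition~\textbf{(R$_I^F$)} holds, so Lemma~\ref{lemma:preliminary_IFI} provides a bounded $\Pi_t$ solving~\eqref{eq:IFI_regulator_equations_with_K}. Setting $\Lambda_t := K\Pi_t + \Gamma_t$, which is bounded, the observation of the first paragraph shows that $(\Pi_t,\Lambda_t)$ is a bounded solution of~\eqref{eq:IFI_regulator_equations}.

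For sufficiency, I would start from bounded matrices $\Pi_t$ and $\Lambda_t$ solving~\eqref{eq:IFI_regulator_equations}. By Assumption~\ref{assumption:system_stabilisable} there exists a gain $K$ for which~\eqref{eq:general_system} with $\omega\equiv 0$ and $u_t=Kx_t$ is asymptotically stable, \emph{i.e.} condition~\textbf{(S$_I^F$)} holds. Define $\Gamma_t := \Lambda_t - K\Pi_t$, which is bounded; then $(\Pi_t,\Gamma_t)$ solves~\eqref{eq:IFI_regulator_equations_with_K}, so Lemma~\ref{lemma:preliminary_IFI} yields condition~\textbf{(R$_I^F$)}. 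It remains to check that $u_t = Kx_t + \Gamma_t\omega$ is bounded: since $K$ renders the closed loop stable and $\omega$ is bounded under Assumption~\ref{assumption:ex_system_marg_stable}, the closed-loop state $x_t$ is almost surely bounded (equivalently $\chi_t = x_t - \Pi_t\omega \to 0$ with $\Pi_t\omega$ bounded), and $\Gamma_t$, $\omega$ are bounded, so $u_t$ is bounded. Hence $(K,\Gamma_t)$ solves Problem~\ref{problem:IFI} with a bounded control law.

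The argument is essentially bookkeeping around Lemma~\ref{lemma:preliminary_IFI}; the only point that deserves care is that the stabilising gain used in the sufficiency direction (supplied by Assumption~\ref{assumption:system_stabilisable}) need not coincide with any gain appearing in a hypothetical solution, and it is precisely this decoupling that makes the solvability condition~\eqref{eq:IFI_regulator_equations} independent of $K$ --- the structural improvement over the gain-dependent formulation. The secondary subtlety is verifying boundedness of the control law in the sufficiency part, which follows from the stochastic stability of the closed loop under $u_t=Kx_t$ together with the boundedness of $\Pi_t$ and $\omega$.
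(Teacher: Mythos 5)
Your proposal is correct and follows essentially the same route as the paper: both directions reduce to Lemma~\ref{lemma:preliminary_IFI} via the substitution $\Lambda_t = K\Pi_t + \Gamma_t$ (resp. $\Gamma_t = \Lambda_t - K\Pi_t$), with a fresh stabilising gain supplied by Assumption~\ref{assumption:system_stabilisable} in the sufficiency direction. Your explicit check that the resulting control law is bounded is a small addition the paper leaves implicit, but it does not change the argument.
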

\begin{proof}
    \emph{Necessity:} assume there exist $K$ and $\Gamma_t$ such that the regulator~\eqref{eq:static_regulator} is bounded and satisfies conditions~\textbf{(S$_I^F$)} and \textbf{(R$_I^F$)}. Then by Lemma~\ref{lemma:preliminary_IFI} there exists a bounded $\Pi_t$ such that equations~\eqref{eq:IFI_regulator_equations_with_K} hold and $\Lambda_t = K\Pi_t + \Gamma_t$ is bounded and satisfies equations~\eqref{eq:IFI_regulator_equations}.
    
    \emph{Sufficiency:} assume bounded matrices $\Pi_t$ and $\Lambda_t$ solving~\eqref{eq:IFI_regulator_equations} exist. Then let $K$ be any matrix such that the closed-loop system is asymptotically stable (this is possible because Assumption~\ref{assumption:system_stabilisable} holds). Let $\Gamma_t = \Lambda_t - K\Pi_t$ and note that $\Gamma_t$ is bounded. With these selections of $K$ and $\Gamma_t$, condition~\textbf{(S$_I^F$)} holds. To show that condition~\textbf{(R$_I^F$)} holds it suffices to show that equations~\eqref{eq:IFI_regulator_equations_with_K} hold with these selections. Indeed, substituting $\Gamma_t$ in~\eqref{eq:IFI_regulator_equations_with_K}, equations~\eqref{eq:IFI_regulator_equations} are obtained, which hold by assumption. Therefore \textbf{(R$_I^F$)} holds.
\end{proof}

Theorem~\ref{theorem:IFI_solution} solves the ideal full-information problem. Indeed, if bounded $\Pi_t$ and $\Lambda_t$ solving~\eqref{eq:IFI_regulator_equations} are found, the control law
\begin{equation}
    u_t = Kx_t + (\Lambda_t - K\Pi_t)\omega
\end{equation}
solves Problem~\ref{problem:IFI} with any $K$ such that the closed-loop system is asymptotically stable.

\begin{remark}
The existence of bounded solutions $\Pi_t$ and $\Lambda_t$ solving the regulator equations~\eqref{eq:IFI_regulator_equations} can be characterised with the existence of steady-state solutions $\Pi_t^*$ and $\Lambda_t^*$ of
    \begin{equation}
	\label{eq:IFI_regulator_equations_ss}
	\begin{aligned}
	d\Pi_t^*\! &= \!\left[A\Pi_t^*\!\! -\! \Pi_t^* S\!\! +\!\! P\!\! +\!\! B\Lambda_t^*\right]dt\! +\! \left[F\Pi_t^*\! +\! R\! +\! G\Lambda_t^*\right]d\mathcal{W}_t,\\
	0 \!&=\! C\Pi_t^* + Q + D\Lambda_t^*\quad \text{almost surely}.
	\end{aligned}
	\end{equation}
   In fact, if such $\Pi_t$ and $\Lambda_t$ exist, for which the second regulator equation in \eqref{eq:IFI_regulator_equations} holds in the limit for $t$ tending to infinity, then there also exist two initial conditions $\Pi_{t_0}^*$ and $\Lambda_{t_0}^*$ such that the matrices $\Pi_t^*$ and $\Lambda_t^*$ satisfy the second regulator equation in \eqref{eq:IFI_regulator_equations_ss} identically for all $t> t_0$. This shows that, in the remainder of the paper, all the results  where regulator equations hold in the limit as $t$ tends to infinity can be equivalently replaced by analogous equations holding identically for all $t\ge 0$. We keep the limit formulation because it is more practical: it is easier to start from any initial condition and compute $\Pi_t$ and $\Lambda_t$ than to find $\Pi_{t_0}^*$ and $\Lambda_{t_0}^*$.
\end{remark}

\subsection{Solvability of the Regulator Equations}
We now discuss under which conditions the regulator equations~\eqref{eq:IFI_regulator_equations} are solvable. To this end, we make use of the definition of \emph{stochastic relative degree} given in \cite{Mellone2019ter}. Therein, a detailed definition of relative degree is introduced for a general class of nonlinear stochastic systems. For the purposes of the present paper, we specialise the definition to the case of linear stochastic systems of the form~\eqref{eq:general_system}. Thus, the stochastic relative degree of system~\eqref{eq:general_system} is either zero when $D\ne 0$ or the smallest $r$, with $1\le r \le n$, such that
\begin{itemize}
    \item $CA^{k}B = CA^{k}G = 0$, $CA^{k}F = 0$, $CA^{k}R = 0$ for all $k\in \{0,...,r-2\}$,
    \item $CA^{r-1}B \ne 0$ or  $CA^{r-1}G \ne 0$.
\end{itemize}

For simplicity, we first consider the case $D=0$. We assume that the relative degree of system~\eqref{eq:general_system} is $1\le r\le n$ and we set $b = CA^{r-1}B$ and $g = CA^{r-1}G$. By definition, either $b\ne 0$ or $g \ne 0$. To streamline the presentation, we set $Q_0 = Q$, $Q_i = CA^{i-1}P + Q_iS$, for $i= 1,..,r$ and $\zeta_t^{(i)} = CA^{i}x_t + Q_i\omega$ for $i= 0,..,r-1$ (note that $\zeta_t^{(0)} = e_t$).

We now provide the main result of this section, \emph{i.e.} we show that for a system with relative degree $r$ the solution of the regulator equations~\eqref{eq:IFI_regulator_equations} can be obtained by solving a system composed of a stochastic differential equation, a stochastic integral equation and $r$ algebraic equations.

\begin{lemma}
\label{lemma:regulator_equations_relative_degree}
Assume that \eqref{eq:general_system} has stochastic relative degree $r >0$. The matrices $\Pi_t$ and $\Lambda_t$ are solutions of the regulator equations~\eqref{eq:IFI_regulator_equations} if and only if they solve the equations
\begin{equation}
\label{eq:regulator_equations_differentiated}
\begin{aligned}
    d\Pi_t &= \left[A\Pi_t\! -\! \Pi_t S\! +\! P\! +\! B\Lambda_t\right]dt + \left[F\Pi_t\! +\! R\! +\! G\Lambda_t\right]d\mathcal{W}_t,\\
    0 &= \lim_{t\rightarrow\infty}[ (CA^r\Pi_t + b\Lambda_t + Q_{r})dt + \\ &\quad \, (CA^{r-1}F\Pi_t + g\Lambda_t + CA^{r-1}R)d\mathcal{W}_t]\quad \text{almost surely},\\
    0 &= \lim_{t\rightarrow\infty}[ CA^i\Pi_t + Q_i] \quad \text{almost surely}, \quad  i = 0,...,r-1.\\
\end{aligned}
\end{equation}
\end{lemma}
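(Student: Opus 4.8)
The plan is to show that the single algebraic condition $\lim_{t\to\infty}[C\Pi_t + Q + D\Lambda_t]=0$ (with $D=0$) in the regulator equations~\eqref{eq:IFI_regulator_equations} is equivalent to the chain of $r$ algebraic conditions together with the differential condition in~\eqref{eq:regulator_equations_differentiated}, by repeatedly applying It\^o's formula to the quantities $\zeta_t^{(i)}$ evaluated along $x_t = \Pi_t\omega$ and using the relative-degree structure to peel off the derivatives one at a time. Since the differential (first) equation in both systems is identical, the content is entirely in the algebraic part.

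First I would observe that, given a bounded solution $\Pi_t$ of the first equation, the process $\Pi_t\omega$ satisfies $d(\Pi_t\omega) = [A\Pi_t\omega + P\omega + B\Lambda_t\omega]dt + [F\Pi_t\omega + R\omega + G\Lambda_t\omega]d\mathcal{W}_t$ (using $\dot\omega = S\omega$ to cancel the $-\Pi_t S$ term). Define $m_t^{(i)} := CA^i\Pi_t\omega + Q_i\omega$ for $i=0,\dots,r-1$; note $m_t^{(0)} = C\Pi_t\omega + Q\omega$, which is what the original equation asks to vanish in the limit. Applying It\^o to $m_t^{(i)}$ and using $Q_i S$ in the definition of $Q_{i+1}$ to absorb the $\dot\omega$ contribution, one gets $d m_t^{(i)} = m_t^{(i+1)}dt + (CA^iF\Pi_t\omega + CA^iR\omega + CA^iG\Lambda_t\omega)d\mathcal{W}_t$; by the relative-degree hypothesis $CA^iF = CA^iR = CA^iG = 0$ for $i\le r-2$, so for those $i$ the diffusion term vanishes and $dm_t^{(i)} = m_t^{(i+1)}dt$. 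Hence $m_t^{(i)}$ is differentiable (in the ordinary sense) with derivative $m_t^{(i+1)}$ for $i=0,\dots,r-2$, and $m_t^{(r-1)}$ has the Ito differential $(CA^r\Pi_t\omega + b\Lambda_t\omega + Q_r\omega)dt + (CA^{r-1}F\Pi_t\omega + g\Lambda_t\omega + CA^{r-1}R\omega)d\mathcal{W}_t$, which is exactly the second line of~\eqref{eq:regulator_equations_differentiated} contracted with $\omega$.

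For the equivalence itself: if~\eqref{eq:IFI_regulator_equations} holds then $m_t^{(0)}\to 0$; since $m_t^{(0)}$ is $(r-1)$-times differentiable with bounded higher derivatives (boundedness of $\Pi_t$, $\Lambda_t$, and the trigonometric-polynomial structure of $\omega$ coming from Assumption~\ref{assumption:ex_system_marg_stable}, which forces these processes, restricted to the attractor, to be quasi-periodic and hence uniformly continuous together with their derivatives), a standard argument shows each derivative $m_t^{(i)}\to 0$ as well — a bounded, uniformly continuous function tending to zero whose derivative is itself uniformly continuous must have derivative tending to zero (Barbalat-type reasoning). That yields all the algebraic limits $m_t^{(i)}\to 0$ for $i=0,\dots,r-1$ and, differentiating once more at level $r-1$, the combined $dt$/$d\mathcal{W}_t$ limit condition. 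Conversely, if the conditions in~\eqref{eq:regulator_equations_differentiated} hold, then from $m_t^{(0)}$ one reads off $\lim m_t^{(0)} = 0$ directly (it is one of the listed conditions, $i=0$), so~\eqref{eq:IFI_regulator_equations} follows immediately; the extra conditions are consistent by the It\^o relations just derived. Finally, since these identities hold for every $\omega_0$, i.e. for a spanning set of $\omega(t)$, the contracted relations upgrade to the matrix relations stated in~\eqref{eq:regulator_equations_differentiated}.

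The main obstacle I anticipate is the ``peeling'' direction — showing that $m_t^{(0)}\to 0$ forces $m_t^{(i)}\to 0$ for the higher $i$. This requires care because differentiation does not in general preserve convergence to zero; one needs the uniform continuity of the derivatives, which must be justified from boundedness of $\Pi_t,\Lambda_t$ and the almost-sure boundedness/structure of $\omega$ and of the relevant stochastic integrals (so that, on the steady-state attractor, everything is effectively a finite sum of bounded sinusoids modulated by bounded processes). Once that regularity is in hand, the Barbalat-style induction is routine. The converse direction and the passage from $\omega$-contracted identities to matrix identities are straightforward.
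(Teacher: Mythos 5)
Your overall skeleton is the same as the paper's: peel off one It\^o differential of $\zeta_t^{(i)}=CA^ix_t+Q_i\omega$ at a time, use the relative-degree identities $CA^kB=CA^kG=CA^kF=CA^kR=0$ for $k\le r-2$ to kill the diffusion terms at the intermediate levels, and read off the sufficiency direction from the $i=0$ condition. The place where you diverge is the necessity direction, and that is where there is a genuine gap. You propose to work directly with the limit formulation and propagate $m_t^{(0)}\to 0$ down the chain by a Barbalat-type argument, justified by uniform continuity of the derivatives. This works for the levels $i\le r-2$, where $m_t^{(i)}$ is genuinely $C^1$ with bounded derivative. But it breaks exactly at the step you flag as the main obstacle: $m_t^{(r-1)}$ carries a non-degenerate martingale part (its diffusion coefficient $(CA^{r-1}F\Pi_t+g\Lambda_t+CA^{r-1}R)\omega$ is precisely the quantity whose vanishing you are trying to prove, so you cannot assume it is zero), and an It\^o process with bounded but non-vanishing diffusion is almost surely \emph{not} uniformly continuous on $[0,\infty)$ — the oscillation of the Brownian integral over windows of fixed length does not stay bounded as $t\to\infty$. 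Hence Barbalat cannot deliver $m_t^{(r-2)}\to 0\Rightarrow m_t^{(r-1)}\to 0$, nor the final step extracting the separate vanishing of the drift and diffusion coefficients of $dm_t^{(r-1)}$ from $m_t^{(r-1)}\to 0$ (for $r=1$ this last step is the whole proof). Your appeal to quasi-periodicity on the attractor does not rescue this: $\Pi_t^{ss}$ is itself driven by $\mathcal{W}_t$, so the ``bounded processes modulating the sinusoids'' are not uniformly continuous either.

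The paper avoids this entirely by not working with limits at all in the necessity direction. By the Remark following Theorem~\ref{theorem:IFI_solution}, the limit conditions can be traded for exact identities along the steady state: the steady-state tracking error is \emph{identically} zero, so $d\zeta_t^{(0),ss}\equiv 0$, which forces $\zeta_t^{(1),ss}\equiv 0$ (its $dt$-coefficient), and so on down the chain; at the last level the uniqueness of the semimartingale decomposition forces the drift and diffusion coefficients of $d\zeta_t^{(r-1),ss}$ to vanish separately. ``Identically zero'' propagates through It\^o differentiation trivially, so no regularity-of-derivatives argument is needed. To repair your proof, replace the Barbalat induction by this exact steady-state identity argument; the rest of your computation (the It\^o differentials of $m_t^{(i)}$, the role of $Q_{i+1}=CA^iP+Q_iS$, and the passage from $\omega$-contracted identities to matrix identities by spanning over $\omega_0$) is correct and matches the paper.
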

\begin{proof} \emph{Sufficiency:} The first equation in~\eqref{eq:regulator_equations_differentiated} is the same as in~\eqref{eq:IFI_regulator_equations}. Note that since~\eqref{eq:general_system} has stochastic relative degree $r>0$, $D=0$ and consequently the second equation in~\eqref{eq:IFI_regulator_equations} is equivalent to the third equation in~\eqref{eq:regulator_equations_differentiated} with $i = 0$. Therefore, if~\eqref{eq:regulator_equations_differentiated} hold then~\eqref{eq:IFI_regulator_equations} hold.\\
    \emph{Necessity:} By Theorem~\ref{theorem:IFI_solution}, if $\Pi_t$ and $\Lambda_t$ solve~\eqref{eq:IFI_regulator_equations}, then the steady-state tracking error of the closed-loop system is identically zero. Then necessarily $de_t = d\zeta_t^{(0)} = Cdx_t + QS\omega dt$ tends to zero. Replacing the expression of $dx_t$ with \eqref{eq:general_system} yields
    \begin{equation}
        \lim_{t\rightarrow\infty}d\zeta_t^{(0)}= \lim_{t\rightarrow\infty}[(CAx_t + Q_1\omega)dt]= \lim_{t\rightarrow\infty}[\zeta_t^{(1)}dt] = 0.
    \end{equation}
    At steady state this is verified for all $\omega$ if and only if $\lim_{t\rightarrow\infty}[CA\Pi_t + Q_1] = 0$. Repeating the same argument and using the definition of stochastic relative degree,
    \begin{equation}
        \lim_{t\rightarrow\infty}[d\zeta_t^{(i-1)}] = \lim_{t\rightarrow\infty}[\zeta_t^{(i)}dt] =\lim_{t\rightarrow\infty}[\!(CA^{i}x_t + Q_{i}\omega)dt],
    \end{equation}
    for all $i = 1,2,...,r-1$, which, at steady state, is satisfied for all $\omega$ if and only if $\lim_{t\rightarrow\infty}[CA^i\Pi_t + Q_i] = 0$ for all $i = 1,2,...,r-1$. Finally,
    \begin{multline}
        \lim_{t\rightarrow\infty}[d\zeta_t^{(r-1)}] = \lim_{t\rightarrow\infty}[\!(CA^{r}x_t + bu + Q_{r}\omega)dt +\\ (CA^{r-1}Fx_t + gu + CA^{r-1}R\omega)d\mathcal{W}_t] = 0
    \end{multline}
    is satisfied at steady state for all $\omega$ if and only if
    \begin{multline}
        0 = \lim_{t\rightarrow\infty}[ (CA^r\Pi_t + b\Lambda_t + Q_{r})dt \\+ (CA^{r-1}F\Pi_t + g\Lambda_t + CA^{r-1}R)d\mathcal{W}_t],
    \end{multline}
    which concludes the proof.
\end{proof}
Finding bounded solutions of equations~\eqref{eq:regulator_equations_differentiated}, as well as solvability conditions for said equations, is not, in general, a trivial task. However, when either $G = 0$ or $B=0$, it is possible to deduce solvability conditions for~\eqref{eq:regulator_equations_differentiated} and hence~\eqref{eq:IFI_regulator_equations}. We assume for the time being that $G=0$ (the case $B=0$ being analogous).
\begin{assumption}
\label{assumption:G_zero}
    The matrix $G$ is zero.
\end{assumption}
Let $\widehat{\Phi}_t$ be the fundamental matrix of the autonomous stochastic system
\begin{equation}
\label{eq:autonomous_pi_system}
    dx_t= A_\pi x_tdt + F_\pi x_td\mathcal{W}_t,
\end{equation}
where $A_\pi = A-BCA^rb^{-1}$ and $F_\pi = F - BCA^{r-1}Fb^{-1}$.
\begin{definition}
\label{definition:non-resonance}
    (Non-resonance Condition) Systems~\eqref{eq:general_system} and~\eqref{eq:exogenous_system} are \emph{non-resonant} if
    $
        \Upsilon_t := e^{-S^\top t}\otimes \widehat{\Phi}_t
    $
    converges exponentially to zero almost surely.
\end{definition}
We are now ready to provide the main result of this section, \emph{i.e.} the characterisation of the solvability of the regulator equations for arbitrary matrices $P$, $R$ and $Q$.

\begin{proposition}
\label{proposition:solvability_regulator_equations}
Under Assumption~\ref{assumption:G_zero}, there exist bounded solutions $\Pi_t$ and $\Lambda_t$ to the regulator equations~\eqref{eq:IFI_regulator_equations} for any $P$, $R$ and $Q$ if and only if systems~\eqref{eq:general_system} and \eqref{eq:exogenous_system} are non-resonant.
\end{proposition}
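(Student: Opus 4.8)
The plan is to collapse the regulator equations into a single linear Sylvester-type stochastic differential equation whose homogeneous part is governed exactly by $\Upsilon_t$, and to characterise the existence of a bounded solution for arbitrary forcing through the almost sure exponential decay of $\Upsilon_t$. First I would invoke Lemma~\ref{lemma:regulator_equations_relative_degree} to replace \eqref{eq:IFI_regulator_equations} by the equivalent system \eqref{eq:regulator_equations_differentiated}; recall $D=0$ and, under Assumption~\ref{assumption:G_zero}, $g=CA^{r-1}G=0$, hence $b=CA^{r-1}B\neq0$ by the definition of stochastic relative degree. Reading the second equation of \eqref{eq:regulator_equations_differentiated} as the requirement that both its $dt$-coefficient and its $d\mathcal{W}_t$-coefficient vanish at steady state (by the Remark after Theorem~\ref{theorem:IFI_solution} these limits may be taken to hold identically), the drift part gives $\Lambda_t=-b^{-1}(CA^r\Pi_t+Q_r)$. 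Substituting this into the first equation of \eqref{eq:regulator_equations_differentiated} and using the diffusion constraint $CA^{r-1}(F\Pi_t+R)=0$ together with the algebraic constraints $CA^i\Pi_t+Q_i=0$, $i=0,\dots,r-1$, one rewrites the $\Pi_t$-equation as a stochastic Sylvester equation $d\Pi_t=(A_\pi\Pi_t-\Pi_tS+\widetilde P)\,dt+(F_\pi\Pi_t+\widetilde R)\,d\mathcal{W}_t$, with $\widetilde P$ and $\widetilde R$ fixed affine functions of $P$, $R$, $Q$ (it is exactly the diffusion constraint that turns $F$ into $F_\pi$). Vectorising by means of $\mathrm{vec}(MXN)=(N^\top\otimes M)\,\mathrm{vec}(X)$ and applying It\^o's product rule together with the mixed-product property of $\otimes$, one checks that $\Upsilon_t=e^{-S^\top t}\otimes\widehat\Phi_t$ is precisely the fundamental matrix of the homogeneous part of the vectorised equation; hence non-resonance is equivalent to almost sure exponential stability of the reduced $\Pi_t$-dynamics.

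For \emph{sufficiency}, assume non-resonance. Then the reduced Sylvester equation has a (exponentially) asymptotically stable homogeneous part, so by Lemma~\ref{lemma:general_steady_state} (using also Assumption~\ref{assumption:ex_system_marg_stable}) there is a bounded steady-state matrix $\Pi_t$. I would fix its initial condition so that $CA^i\Pi_{t_0}+Q_i=0$ for $i=0,\dots,r-1$ — possible because $C,CA,\dots,CA^{r-1}$ are linearly independent by the relative-degree definition — and verify that these constraints, and the diffusion constraint, are preserved by the dynamics, the processes $CA^i\Pi_t+Q_i$ satisfying noise-free linear ordinary differential equations that keep the value zero. Then $\Pi_t$ and $\Lambda_t=-b^{-1}(CA^r\Pi_t+Q_r)$ are bounded and, by Lemma~\ref{lemma:regulator_equations_relative_degree} and Theorem~\ref{theorem:IFI_solution}, solve \eqref{eq:IFI_regulator_equations}.

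For \emph{necessity} I would argue by contraposition. If $\Upsilon_t$ does not converge exponentially to zero almost surely, the reduced homogeneous equation admits a non-decaying mode; I would then choose $P$, $R$, $Q$ so that the forcing $\mathrm{vec}(\widetilde P)$ has a nonzero component along that mode, whence every solution of the reduced Sylvester equation is unbounded. Consequently, for that choice of data no bounded pair $(\Pi_t,\Lambda_t)$ solving \eqref{eq:IFI_regulator_equations} exists, contradicting solvability for arbitrary $P$, $R$, $Q$.

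I expect the main obstacle to be twofold. First, the bookkeeping underlying the equivalence between \eqref{eq:IFI_regulator_equations} and the reduced Sylvester equation — in particular, establishing that the algebraic and diffusion constraints are consistent with, and invariant along, the $A_\pi$, $F_\pi$-dynamics, so that solving the reduced equation genuinely recovers a solution of \eqref{eq:IFI_regulator_equations}. Second, the necessity direction: producing the offending $P$, $R$, $Q$ and proving — working with the almost sure exponential growth of the fundamental matrix rather than with spectra — that the resulting stochastic Sylvester equation then admits no bounded solution; this is the stochastic counterpart of the deterministic non-resonance (no transmission zero on the imaginary axis) obstruction.
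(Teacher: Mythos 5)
Your overall architecture --- collapse the regulator equations into a stochastic Sylvester equation whose homogeneous part has fundamental matrix $\Upsilon_t$, then identify non-resonance with the existence of bounded solutions for arbitrary forcing --- is the same as the paper's. But there is a genuine gap in how you read the second equation of \eqref{eq:regulator_equations_differentiated}. You split it into two separate requirements, the drift condition $CA^r\Pi_t+b\Lambda_t+Q_r=0$ and the diffusion condition $CA^{r-1}(F\Pi_t+R)=0$, which is the forced reading if $\Lambda_t$ is an ordinary process. The paper does not do this: it defines $\Lambda_t$ through the stochastic differential $\Lambda_t\,dt=-(CA^r\Pi_t+Q_r)b^{-1}\,dt-(CA^{r-1}F\Pi_t+CA^{r-1}R)b^{-1}\,d\mathcal{W}_t$, i.e.\ $\Lambda_t$ is allowed a white-noise component that cancels the diffusion mismatch, and it is precisely this component, fed back through $B\Lambda_t\,dt$ into the first regulator equation, that produces $F_\pi=F-BCA^{r-1}Fb^{-1}$ and $R_\pi=R-BCA^{r-1}Rb^{-1}$. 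Your attribution of the passage $F\mapsto F_\pi$ to the constraint $CA^{r-1}(F\Pi_t+R)=0$ cannot be right: on that constraint set one has $F_\pi\Pi_t+R_\pi=F\Pi_t+R$, so the substitution changes nothing.

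The consequence is that your sufficiency argument does not close. The extra condition $CA^{r-1}(F\Pi_t+R)=0$ is a genuine additional affine restriction on $\Pi_t$; it is not implied by the algebraic constraints $CA^i\Pi_t+Q_i=0$ (the relative-degree definition only gives $CA^kF=0$ and $CA^kR=0$ for $k\le r-2$, leaving the $k=r-1$ terms free), and it is not invariant along the $\Pi_t$-dynamics --- unlike the processes $CA^i\Pi_t+Q_i$, which evolve noise-free, the process $CA^{r-1}(F\Pi_t+R)$ has a nontrivial martingale part. For generic $F$ and $R$ no bounded $\Pi_t$ satisfies it, so you could not conclude solvability for \emph{arbitrary} $P$, $R$, $Q$. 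You correctly flagged the invariance of the diffusion constraint as the main obstacle; the resolution is not to prove invariance (it fails) but to drop the separate diffusion requirement and let $\Lambda_t$ carry a $d\mathcal{W}_t$ term as the paper does --- harmless here exactly because Assumption~\ref{assumption:G_zero} gives $G=0$, so $\Lambda_t$ never multiplies $d\mathcal{W}_t$ in the first regulator equation. Your necessity sketch is in the same spirit as, and at a comparable level of rigour to, the paper's, and the linear independence of $C,CA,\dots,CA^{r-1}$ that you invoke does indeed follow from $b\ne 0$.
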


\begin{proof}
    \emph{Sufficiency:} assume that systems~\eqref{eq:general_system} and \eqref{eq:exogenous_system} are non-resonant. Then we have to show that it is possible to find bounded solutions to the equations~\eqref{eq:regulator_equations_differentiated}. To this end, consider the system
    \begin{equation}
    \label{eq:forced_pi_system}
        d\Pi_t = (A_\pi \Pi_t - \Pi_tS + P_{\pi,t})dt + (F_\pi\Pi_t + R_{\pi,t})d\mathcal{W}_t,
    \end{equation}
    with $P_{\pi} = P-BQ_rb^{-1}$, $R_{\pi} = R-BCA^{r-1}Rb^{-1}$. This is a stochastic differential equation in the variable~$\Pi_t$, which has solution
    \begin{multline}
    \label{eq:Pi_explicit}
    \Pi_t = \widehat{\Phi}_t\left[\Pi_{t_0} + \int_{t_0}^{t} \widehat{\Phi}^{-1}_\tau(P_{\pi,\tau} - F_\pi R_{\pi,\tau}) e^{S\tau}d\tau +\right. \\
    \left. \int_{t_0}^t\widehat{\Phi}_\tau^{-1} R_{\pi,\tau} e^{S\tau}d\mathcal{W}_\tau\right]e^{-St}.
    \end{multline}
    Applying the vectorisation operator and by the non-resonance condition,  $\lim_{t_0\rightarrow -\infty}\vect{\Pi_t}$ exists and is bounded for all $t\in \mathbb{R}$ almost surely for any $P$, $R$ and $Q$.  Now let $\Lambda_t$ be a matrix satisfying
    \begin{multline}
    \label{eq:Lambda_explicit}
        \Lambda_tdt := -(CA^r\Pi_t +  Q_r)b^{-1}dt -\\ (CA^{r-1}F\Pi_t + CA^{r-1}R)b^{-1}d\mathcal{W}_t,
    \end{multline}
    and observe that $\Lambda_t$ is bounded. Moreover, substituting this expression in~\eqref{eq:forced_pi_system} yields
	\begin{equation}
	\label{eq:regulator_equations_partial}
	\begin{aligned}
	d\Pi_t &= \left[A\Pi_t\! -\! \Pi_t S\! +\! P\! +\! B\Lambda_t\right]dt + \left[F\Pi_t\! +\! R\right]d\mathcal{W}_t,\\
	0 &= (CA^r\Pi_t + b\Lambda_t + Q_{r})dt + \\ &\quad \quad \quad\,(CA^{r-1}F\Pi_t + CA^{r-1}R)d\mathcal{W}_t\quad \text{almost surely}.
	\end{aligned}
	\end{equation}
	Finally, choosing the initial condition $\Pi_{t_0}$ such that
    \begin{equation}
        \begin{bmatrix}
        C\\ \vdots \\ CA^{r-1}
        \end{bmatrix}\Pi_{t_0} = -\begin{bmatrix}
        Q_0\\ \vdots \\ Q_{r-1}
        \end{bmatrix}
    \end{equation}
    is satisfied yields, together with the second equation in \eqref{eq:regulator_equations_partial}, the satisfaction of $CA^i\Pi_t + Q_i = 0$ for all $i=0,...,r-1$. Therefore $\Pi_t$ and $\Lambda_t$ are bounded and solve~\eqref{eq:regulator_equations_differentiated} (where $G=0$ by Assumption~\ref{assumption:G_zero}), hence \eqref{eq:IFI_regulator_equations} by Lemma~\ref{lemma:regulator_equations_relative_degree}.
	
    \emph{Necessity:} assume that bounded $\Pi_t$ and $\Lambda_t$ solving~\eqref{eq:IFI_regulator_equations} exist for any $P$, $R$ and $Q$. Then necessarily they solve equations~\eqref{eq:regulator_equations_differentiated}. Therefore let $\Sigma_t$ be a bounded matrix satisfying
    \begin{multline}
        \Sigma_tdt = (CA^r\Pi_t + b\Lambda_t + Q_{r})dt + CA^{r-1}(F\Pi_t +  R)d\mathcal{W}_t,
    \end{multline}
    hence
    \begin{equation}
        \Lambda_tdt\! =\! -[(CA^r\Pi_t - \Sigma_t + Q_{r})dt +CA^{r-1}(F\Pi_t + R)d\mathcal{W}_t]b^{-1}\!,
    \end{equation}
     Replacing this in the first equation of~\eqref{eq:regulator_equations_differentiated} yields
     \begin{equation}
        d\Pi_t = (A_\pi \Pi_t - \Pi_tS + P_{\pi,t})dt + (F_\pi\Pi_t + R_{\pi})d\mathcal{W}_t,
    \end{equation}
      with $P_{\pi,t} = P-B(Q_r - \Sigma_t )b^{-1}$. The solution $\Pi_t$ of the previous differential equation is bounded by assumption. But this holds only if $\Upsilon_t$ converges exponentially to zero almost surely, \emph{i.e.} systems~\eqref{eq:general_system} and~\eqref{eq:exogenous_system} are non-resonant.
\end{proof}

\begin{remark}
In the case of zero relative degree, \emph{i.e.} $D\ne 0$, the non-resonance condition remains formally unchanged. The assumption that either $B$ or $G$ are zero is not necessary and $A_\pi = A-BCD^{-1}$ and $F_\pi = F-GCD^{-1}$.  
\end{remark}

In the light of the previous result, the following corollary states a sufficient condition for the solvability of the ideal full-information problem.

\begin{assumption}
    \label{assumption:Pi_system_stable}
    System~~\eqref{eq:autonomous_pi_system} is asymptotically stable.
\end{assumption}

\begin{corollary}
    Under Assumptions~\ref{assumption:ex_system_marg_stable}, \ref{assumption:system_stabilisable}, \ref{assumption:G_zero} and \ref{assumption:Pi_system_stable}, Problem~\ref{problem:IFI} is solvable by the control law~\eqref{eq:static_regulator}.
\end{corollary}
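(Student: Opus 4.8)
The plan is to chain the results already established: Assumptions~\ref{assumption:ex_system_marg_stable} and~\ref{assumption:Pi_system_stable} together imply the non-resonance condition of Definition~\ref{definition:non-resonance}, and then Proposition~\ref{proposition:solvability_regulator_equations} (which needs Assumption~\ref{assumption:G_zero}) followed by Theorem~\ref{theorem:IFI_solution} (which needs Assumptions~\ref{assumption:ex_system_marg_stable} and~\ref{assumption:system_stabilisable}) deliver a solution of Problem~\ref{problem:IFI}.

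First I would prove that $\Upsilon_t = e^{-S^\top t}\otimes\widehat\Phi_t$ converges exponentially to zero almost surely. By Assumption~\ref{assumption:ex_system_marg_stable} every eigenvalue of $S$, hence of $S^\top$, lies on the imaginary axis and is simple, so the map $t\mapsto e^{-S^\top t}$ is bounded: there is $M_S<\infty$ with $\Vert e^{-S^\top t}\Vert\le M_S$ for all $t\ge 0$. By Assumption~\ref{assumption:Pi_system_stable}, system~\eqref{eq:autonomous_pi_system} is (almost surely) asymptotically stable; since it is linear, this is equivalent to almost sure exponential stability of its fundamental matrix, the top Lyapunov exponent being strictly negative \cite[Chapter~11]{Arnold1974}. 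Hence on a set of probability one there exist a finite random variable $c>0$ and a deterministic rate $\lambda>0$ with $\Vert\widehat\Phi_t\Vert\le c\,e^{-\lambda t}$ for all $t\ge 0$. Using $\Vert M\otimes N\Vert = \Vert M\Vert\,\Vert N\Vert$ for the spectral norm, $\Vert\Upsilon_t\Vert\le M_S\, c\,e^{-\lambda t}\to 0$ exponentially, almost surely. Thus systems~\eqref{eq:general_system} and~\eqref{eq:exogenous_system} are non-resonant.

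Second, with Assumption~\ref{assumption:G_zero} in force and non-resonance just established, Proposition~\ref{proposition:solvability_regulator_equations} yields bounded matrices $\Pi_t$ and $\Lambda_t$ solving the regulator equations~\eqref{eq:IFI_regulator_equations} for the (arbitrary) $P$, $R$, $Q$ of system~\eqref{eq:general_system}. Finally, Assumptions~\ref{assumption:ex_system_marg_stable} and~\ref{assumption:system_stabilisable} hold, so the sufficiency direction of Theorem~\ref{theorem:IFI_solution} applies to this pair $(\Pi_t,\Lambda_t)$: choosing any $K$ that makes the closed loop asymptotically stable (possible by Assumption~\ref{assumption:system_stabilisable}) and setting $\Gamma_t=\Lambda_t-K\Pi_t$ (bounded, being a difference of bounded matrices) produces a control law of the form~\eqref{eq:static_regulator} that is bounded and solves Problem~\ref{problem:IFI}, which is precisely the assertion of the corollary.

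The step I expect to be the main obstacle is the first one: making precise that almost sure asymptotic stability of the linear It\^o system~\eqref{eq:autonomous_pi_system} upgrades to an almost sure \emph{exponential} bound on $\widehat\Phi_t$ with a \emph{deterministic} rate $\lambda$, and then combining the deterministic bound on $e^{-S^\top t}$ with this random bound while keeping the exceptional null set fixed (the randomness of the constant $c$ is harmless). Everything downstream is a direct application of Proposition~\ref{proposition:solvability_regulator_equations} and Theorem~\ref{theorem:IFI_solution}, with no further calculation required.
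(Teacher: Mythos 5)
Your proposal is correct and follows essentially the same route as the paper: the paper's proof simply asserts that Assumptions~\ref{assumption:ex_system_marg_stable} and~\ref{assumption:Pi_system_stable} together imply the non-resonance condition, and then invokes the existence of bounded solutions to~\eqref{eq:IFI_regulator_equations} (via Proposition~\ref{proposition:solvability_regulator_equations}) and, implicitly, Theorem~\ref{theorem:IFI_solution}. Your only addition is to spell out the bound $\Vert e^{-S^\top t}\otimes\widehat\Phi_t\Vert \le M_S\,c\,e^{-\lambda t}$, which the paper leaves implicit; this is a faithful elaboration rather than a different argument.
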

\begin{proof}
    This is trivial since Assumptions~\ref{assumption:ex_system_marg_stable} and~\ref{assumption:Pi_system_stable} together imply the non-resonance condition and, therefore, the existence of bounded solutions to~\eqref{eq:IFI_regulator_equations}.
\end{proof}

In the rest of the paper, the existence of a solution to the output regulation problem is guaranteed by assuming the following.
\begin{assumption}
\label{assumption:nonresonance}
    Systems~\eqref{eq:general_system} and~\eqref{eq:exogenous_system} are non-resonant.
\end{assumption}

\begin{example}
\label{example:scalar_solvability}
To illustrate the validity of the solvability condition, consider the following scalar example: $A = 0.2$, $B = 0.5$, $F = 0.3$, $G = 0.2$, $C= c\in \mathbb{R}$, $D = 0.1$, $S = 0$, $P\in \mathbb{R}$, $R\in \mathbb{R}$ and $Q\in \mathbb{R}$. This yields  $A_\pi = 0.2-5c$ and $F_\pi = 0.3-2c$. Almost sure asymptotic stability for this scalar system is obtained if $2A_\pi - F_\pi^2 = -4c^2 - 8.8c + 0.31 <0$, \emph{i.e.} $c\in (-\infty, -2.23)\cup (0.034, +\infty)$.  Moreover, note that asymptotic mean-square stability, which implies almost sure asymptotic stability (see \cite[Section 11.4]{Arnold1974}), is achieved if $2A_\pi + F_\pi^2 = 4c^2 - 11.2c +0.49 < 0$, \emph{i.e.} for $c\in (0.044,2.75)$.
\begin{figure}
    \centering
    \includegraphics[width=0.9\columnwidth]{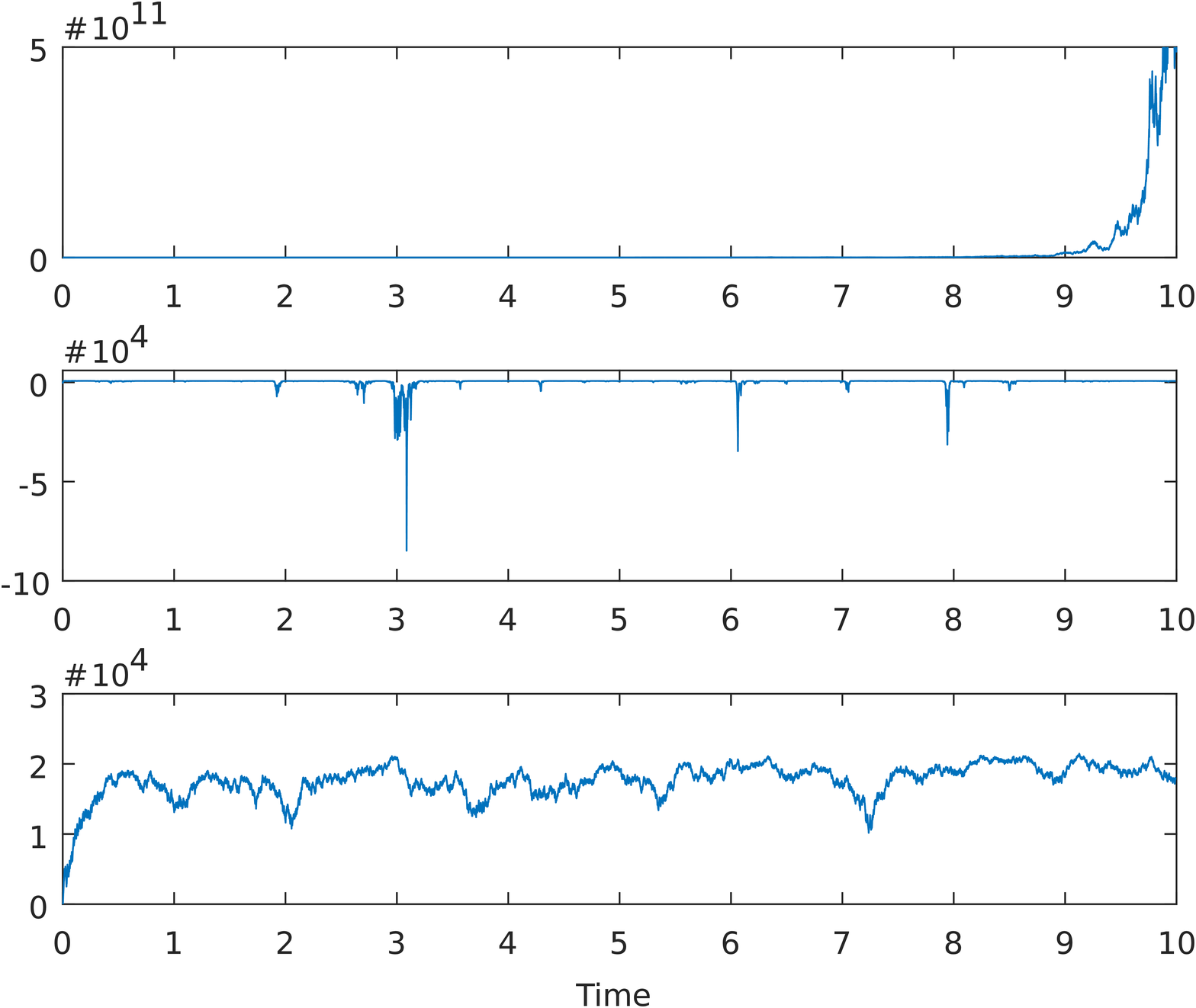}
    \caption{Time history of $\Pi_t$ in Example~\ref{example:scalar_solvability} for $c = -0.5$ (top plot), $c = -5$ (middle plot) and $c=0.5$ (bottom plot).}
    \label{fig:Pi_histories_solvability_example}
\end{figure}
Figure~\ref{fig:Pi_histories_solvability_example} shows the time histories of $\Pi_t$ when $c = -0.5$ (top plot), $c = -5$ (middle plot) or $c = 0.5$ (bottom plot). The first choice is such that $\Pi_t$ has unstable dynamics, so it diverges (top plot). The second and third choices are such that the system is asymptotically stable almost surely (middle plot) and in mean square (bottom plot), respectively; $\Pi_t$ remains bounded in both plots but the bottom one displays a more regular evolution, suggesting the boundedness of the second moment of $\Pi_t$. The plots show that a stronger concept of stability, such as mean-square asymptotic stability, would be unnecessarily conservative for the solvability of the problem. $\hfill\qedsymbol$
\end{example}
We now want to show that the non-resonance condition given in Definition~\ref{definition:non-resonance} is the natural extension of the deterministic solvability condition of the output regulation problem. In the deterministic case $\widehat{\Phi}_t = e^{A_\pi t}$ and $\Upsilon_t = e^{-S^\top t}\otimes e^{A_\pi t}=e^{(I\otimes A_\pi - S^\top\otimes I)t}$, which converges exponentially to zero if and only if
    \begin{equation}
    \label{eq:stability_Pi_deterministic}
        \sigma(I\otimes A_\pi - S^\top \otimes I)\subset \mathbb{C}_{<0}.
    \end{equation} In particular, this implies (but is not implied by)
    \begin{equation}
        \rank{I\otimes (A-BCA^rb^{-1}) - S^\top \otimes I}= n\nu
    \end{equation} and, therefore,
    \begin{equation}
        \rank{I\otimes b} \!+\! \rank{I\otimes (A-BCA^rb^{-1})\! -\! S^\top \otimes I} = (n+1)\nu.
    \end{equation}
    Using the properties of the Kronecker product and of the Schur complement, this is equivalent to
    \begin{equation}
        \text{rank}\begin{bmatrix}
        I\otimes A - S^\top\otimes I & I\otimes B\\ I\otimes CA^r & I\otimes b
        \end{bmatrix} = (n+1)\nu,
    \end{equation}
    which is in turn equivalent to \cite[Theorem 1.9]{Huang2004}
    \begin{equation}
    \label{eq:pre_deterministic_solvability}
        \text{rank}\begin{bmatrix}
        A - \lambda I & B \\ CA^r & b
        \end{bmatrix} = n+1, \quad \forall \lambda \in \sigma(S).
    \end{equation}
    Finally, it is possible to prove that~\eqref{eq:pre_deterministic_solvability} implies
    \begin{equation}
    \label{eq:deterministic_solvability}
        \text{rank}\begin{bmatrix}
        A - \lambda I & B \\ C & 0
        \end{bmatrix} = n+1, \quad \forall \lambda \in \sigma(S).
    \end{equation}
    This is the deterministic non-resonance condition, see \cite[Assumption 1.4]{Huang2004}. Note that Proposition~\ref{proposition:solvability_regulator_equations}, applied to deterministic systems, would state that the matrices $\Pi(t)$ and $\Lambda(t)$ solving the differential-algebraic equations
    \begin{equation}
    \label{eq:diff_regulator_equations_deterministic}
        \begin{aligned}
        \dot{\Pi}(t) &= A\Pi(t) - \Pi(t) S + P + B\Lambda(t),\\
        0 &= \lim_{t\rightarrow\infty}\left[C\Pi(t) + Q + D\Lambda(t)\right]
        \end{aligned}
    \end{equation}
    are bounded if and only if~\eqref{eq:stability_Pi_deterministic} holds. But we know that the solution of the problem in the deterministic case is given by the equilibrium of~\eqref{eq:diff_regulator_equations_deterministic}, the existence of which is equivalent to~\eqref{eq:deterministic_solvability}. However, \eqref{eq:stability_Pi_deterministic} is a stronger condition, sufficient but not necessary for~\eqref{eq:deterministic_solvability} to hold. This discrepancy is due to the fact that~\eqref{eq:stability_Pi_deterministic} is a condition on the stability of system~\eqref{eq:diff_regulator_equations_deterministic}, which is also necessary if we intend to solve the output regulation problem by integration of~\eqref{eq:diff_regulator_equations_deterministic}. Thus the exponential decay of $\Upsilon_t$ (hence condition~\eqref{eq:stability_Pi_deterministic}) is sufficient but not necessary in the deterministic framework. We now show, in contrast, that in the stochastic case it is also necessary. The reason for this is that there exist no constant matrices $\Pi$ and $\Lambda$ that are an equilibrium of the regulator equations~\eqref{eq:IFI_regulator_equations}. To see this, note that if there existed constant $\Pi$ and $\Lambda$ being an equilbrium of~\eqref{eq:IFI_regulator_equations}, then $0\! =\! A\Pi - \Pi S + P + B\Lambda$, $0\! =\! F\Pi + R + G\Lambda$ and $0 \!=\! C\Pi + Q + D\Lambda$ would hold, which is a system of linear equations with more constraints than unknowns. Therefore, the matrices $\Pi_t$ and $\Lambda_t$ can be obtained for any $P$, $R$, $Q$ only by integration of the regulator equations~\eqref{eq:IFI_regulator_equations}, which have a bounded solution if and only if $\Upsilon_t$ decays exponentially to zero almost surely.
    
    \subsection{A note on the internal model property}
    \label{section:internal_model}
    In this section we show that, if the regulator equations~\eqref{eq:IFI_regulator_equations} are solvable, it is possible to design a regulator that achieves a stochastic internal model property without using a feedback of the variable $\omega$. The development and terminology of this section follows the theory of the internal model regulator presented in \cite{Marconi2013}. We first describe the structure of the regulator and then comment on the stochastic internal model property.
    
    Consider the closed-loop system composed of~\eqref{eq:general_system}, \eqref{eq:exogenous_system} and
    \begin{equation}
    \label{eq:internal_model_regulator_general}
        dz_t = (\mathcal{G}_{im}^{1}z_t + \mathcal{G}_{im}^2e_t)dt, \qquad u_t = Kx_t  +K^z_t z_t,
    \end{equation}
    where $z_t \in \mathbb{R}^\nu$, $K_t^z \in \mathbb{R}^{1\times \nu}$, $\mathcal{G}_{im}^{1} \in \mathbb{R}^{\nu \times \nu}$ and $\mathcal{G}_{im}^2 \in \mathbb{R}^{\nu \times 1}$ are such that $(\mathcal{G}_{im}^{1},\mathcal{G}_{im}^2)$ is a controllable pair. Let $\bar{x}_t = \begin{bmatrix}
    x_t^\top & z_t^\top
    \end{bmatrix}^\top$. Then the closed-loop system is given by
    \begin{equation}
    \label{eq:imp_closed_loop}
        d\bar{x}_t = (\bar{A}\bar{x}_t + \bar{P}\omega)dt + (\bar F \bar{x}_t + \bar R \omega)d\mathcal{W}_t,
    \end{equation}
    with
    \begin{equation}
        \bar{A} = \begin{bmatrix}
        A + BK & BK^z_t\\ \mathcal{G}_{im}^2C & \mathcal{G}_{im}^1 + \mathcal{G}_{im}^2DK^z_t
        \end{bmatrix}, \quad \bar P = \begin{bmatrix}
        P \\ \mathcal{G}_{im}^2Q
        \end{bmatrix},
    \end{equation}
    \begin{equation}
        \bar{F} = \begin{bmatrix}
        F + GK & GK^z_t\\ 0 & 0
        \end{bmatrix}, \quad \bar P = \begin{bmatrix}
        R \\ 0
        \end{bmatrix}.
    \end{equation}
    
    The following result characterises the solution of Problem~\ref{problem:IFI} through a regulator of the form~\eqref{eq:internal_model_regulator_general}.
    
    \begin{lemma}
    \label{lemma:preliminary_imp}
    Consider Problem~\ref{problem:IFI} and let Assumption~\ref{assumption:ex_system_marg_stable} hold. Suppose there exists a regulator~\eqref{eq:internal_model_regulator_general} such that condition \textbf{(S$_I^F$)} holds. Then condition \textbf{(R$_I^F$)} holds if and only if there exist bounded matrices $\Pi_t\in \mathbb{R}^{n\times\nu}$, $\Lambda_t\in \mathbb{R}^{1\times\nu}$ and $\Pi^z_t \in \mathbb{R}^{\nu\times\nu}$ solving the equations
    \begin{align}
        d\Pi_t &= \left[A\Pi_t\! -\! \Pi_t S\! +\! P\! +\! B\Lambda_t\right]dt + \left[F\Pi_t\! +\! R\! +\! G\Lambda_t\right]d\mathcal{W}_t,\\
	    0 &= \lim_{t\rightarrow \infty}[C\Pi_t + Q + D\Lambda_t]\quad \text{almost surely}, \\
	    d\Pi_t^z &= \left[\mathcal{G}^1_{im}\Pi_t^z - \Pi_t^z S\right]dt,\\
	    \Lambda_t &= K\Pi_t + K^z_t \Pi^z_t.
	    \label{eq:internal_model_regulator_equations}
    \end{align}
    \end{lemma}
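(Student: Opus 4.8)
The plan is to mirror the argument of Lemma~\ref{lemma:preliminary_IFI}, carried out on the \emph{augmented} closed-loop system~\eqref{eq:imp_closed_loop} rather than on the plant alone. Stack the plant state and the internal-model state into $\bar{x}_t$ as in the excerpt and, given candidate matrices $\Pi_t$ and $\Pi_t^z$, introduce $\bar{\chi}_t = \bar{x}_t - \begin{bmatrix}\Pi_t^\top & (\Pi_t^z)^\top\end{bmatrix}^\top\omega$, with blocks $\chi_t = x_t - \Pi_t\omega$ and $\zeta_t = z_t - \Pi_t^z\omega$. The crucial structural facts are: (i) the relation $\Lambda_t = K\Pi_t + K^z_t\Pi_t^z$ is exactly what absorbs the $B\Lambda_t$, $G\Lambda_t$ forcing of the first regulator equation into the coordinate change (since $BK\Pi_t+BK^z_t\Pi^z_t = B\Lambda_t$ and similarly with $G$); (ii) $d\Pi_t^z = (\mathcal{G}_{im}^1\Pi_t^z - \Pi_t^z S)dt$ does the same for the internal-model block; (iii) the only term that does not cancel identically is proportional to $(C\Pi_t + Q + D\Lambda_t)\omega$, entering through $e_t$ in the $z$-dynamics, and this is controlled by the second regulator equation.

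For \emph{sufficiency}, assume bounded $\Pi_t$, $\Lambda_t$, $\Pi_t^z$ solving~\eqref{eq:internal_model_regulator_equations}. Substitute $u_t = Kx_t + K^z_t z_t$ and $e_t = (C+DK)x_t + DK^z_t z_t + Q\omega$ into the closed loop and compute $d\bar{\chi}_t$ by It\^o's rule, using the first and third equations of~\eqref{eq:internal_model_regulator_equations} together with $\Lambda_t = K\Pi_t + K^z_t\Pi_t^z$. This gives $d\bar{\chi}_t = \bar{A}_{cl}\bar{\chi}_t\,dt + \bar{F}_{cl}\bar{\chi}_t\,d\mathcal{W}_t + \delta_t\,dt$, where $(\bar{A}_{cl},\bar{F}_{cl})$ are the homogeneous closed-loop matrices of~\eqref{eq:imp_closed_loop} and $\delta_t$ is proportional to $(C\Pi_t + Q + D\Lambda_t)\omega$, hence bounded and vanishing by the second equation; by the remark following Theorem~\ref{theorem:IFI_solution} one may in fact take $C\Pi_t+Q+D\Lambda_t\equiv 0$, so that $\delta_t\equiv 0$. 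Condition \textbf{(S$_I^F$)} then forces $\bar{\chi}_t\to 0$, i.e. $x_t\to\Pi_t\omega$ and $z_t\to\Pi_t^z\omega$ almost surely; writing $e_t = (C+DK)\chi_t + DK^z_t\zeta_t + (C\Pi_t + Q + D\Lambda_t)\omega$ and letting $t\to\infty$ yields $e_t\to 0$ almost surely, i.e. \textbf{(R$_I^F$)}.

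For \emph{necessity}, assume \textbf{(S$_I^F$)} and \textbf{(R$_I^F$)}. By Assumption~\ref{assumption:ex_system_marg_stable}, \textbf{(S$_I^F$)} and Lemma~\ref{lemma:general_steady_state} applied to~\eqref{eq:imp_closed_loop} (which has the required form with zero control), there is a bounded $\bar{\Pi}_t = \begin{bmatrix}\Pi_t^\top & (\Pi_t^z)^\top\end{bmatrix}^\top$ solving $d\bar{\Pi}_t = (\bar{A}\bar{\Pi}_t - \bar{\Pi}_t S + \bar{P})dt + (\bar{F}\bar{\Pi}_t + \bar{R})d\mathcal{W}_t$ and such that $\bar{x}_t - \bar{\Pi}_t\omega\to 0$ almost surely. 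Set $\Lambda_t := K\Pi_t + K^z_t\Pi_t^z$, which is bounded. The top block of the Sylvester-type equation for $\bar{\Pi}_t$ is, after using the definition of $\Lambda_t$, precisely the first equation of~\eqref{eq:internal_model_regulator_equations}; the bottom block reads $d\Pi_t^z = (\mathcal{G}_{im}^1\Pi_t^z - \Pi_t^z S + \mathcal{G}_{im}^2(C\Pi_t + Q + D\Lambda_t))dt$. The second equation of~\eqref{eq:internal_model_regulator_equations} is obtained from \textbf{(R$_I^F$)} and $\bar{x}_t-\bar{\Pi}_t\omega\to 0$ exactly as in Lemma~\ref{lemma:preliminary_IFI} (evaluate $e_t$ along $\bar{\chi}_t\to0$, observe it must vanish for every $\omega_0$, and cancel the invertible factor $e^{St}$). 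Finally, once $C\Pi_t+Q+D\Lambda_t$ is taken identically zero, via the equivalence in the remark following Theorem~\ref{theorem:IFI_solution}, the coupling term in the bottom block drops and one is left with the third equation of~\eqref{eq:internal_model_regulator_equations}.

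The main obstacle is precisely this coupling term $\mathcal{G}_{im}^2(C\Pi_t + Q + D\Lambda_t)\omega$: the internal-model block of the augmented Sylvester equation acquires the clean ``free'' form $d\Pi_t^z = (\mathcal{G}_{im}^1\Pi_t^z - \Pi_t^z S)dt$ only when the algebraic regulator equation holds identically, whereas Problem~\ref{problem:IFI} only delivers it in the limit. Bridging the two formulations while keeping all matrices bounded is exactly what the remark after Theorem~\ref{theorem:IFI_solution} provides, and it is the one genuinely non-mechanical point; the remaining steps (the It\^o computation of $d\bar{\chi}_t$, the cancellations induced by $\Lambda_t = K\Pi_t + K^z_t\Pi_t^z$, and the fact that an almost surely asymptotically stable linear stochastic system driven by a bounded, asymptotically vanishing perturbation has an asymptotically vanishing solution) are routine.
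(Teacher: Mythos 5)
Your proposal is correct and follows essentially the same route as the paper: stack the plant and internal-model states, observe that equations~\eqref{eq:internal_model_regulator_equations} are the block form of an augmented Sylvester-type system for the closed loop~\eqref{eq:imp_closed_loop}, and repeat the change-of-coordinates argument of Lemma~\ref{lemma:preliminary_IFI}. Your explicit treatment of the coupling term $\mathcal{G}_{im}^2(C\Pi_t+Q+D\Lambda_t)$ in the $\Pi_t^z$-block, resolved via the remark after Theorem~\ref{theorem:IFI_solution}, is a point the paper's compact rewriting leaves implicit, but it does not change the substance of the argument.
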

    \begin{proof}
        Let $\overline{\Pi}_t = \begin{bmatrix} \Pi_t & \Pi_t^z \end{bmatrix}$. Then note that equations~\eqref{eq:internal_model_regulator_equations} can be rewritten compactly as
        \begin{align}
           d\overline{\Pi}_t &= \left[ \bar A \overline{\Pi}_t - \overline{\Pi}_t S + \bar{P}\right]dt + [\bar F \overline{\Pi}_t + \bar R]d\mathcal{W}_t,\\
           0 &= \lim_{t\rightarrow \infty} \left[ \bar{C}\overline{\Pi}_t + Q\right] \qquad \text{almost surely,}
           \label{eq:closed_loop_imp_regulator_equations}
        \end{align}
        with $\bar C = \begin{bmatrix}
        C + DK & DK_t^z
        \end{bmatrix}$. Therefore, repeating the same arguments of Lemma~\ref{lemma:preliminary_IFI} for the closed-loop system~\eqref{eq:imp_closed_loop} with equations~\eqref{eq:closed_loop_imp_regulator_equations} playing the role of equations~\eqref{eq:IFI_regulator_equations_with_K},  $\bar{x}_t$ converges to $\overline{\Pi}_t\omega$ and the claim follows.
    \end{proof}
    A regulator such that equations~\eqref{eq:internal_model_regulator_equations} are satisfied is said to achieve the stochastic internal model property. We now discuss why this is the case. As shown previously in this section, the existence of bounded $\Pi_t$ and $\Lambda_t$ solving the regulator equations~\eqref{eq:IFI_regulator_equations} (equivalently, the first two equations in~\eqref{eq:internal_model_regulator_equations}) is equivalent to the existence of a steady state of the system such that $x_t^{ss} = \Pi_t^{ss}\omega$, $u_t^{ss} = \Lambda_t^{ss}\omega$ and $e_t^{ss} = 0$. Therefore, any regulator solving the full-information output regulation problem must be such that it generates all the signals generated by the output $y^{g}_t$ of the system
    \begin{equation}
        \dot{\omega} = S\omega, \qquad y^g_t = \Lambda_t \omega,
    \end{equation}
    when the tracking error is identically zero. This is indeed referred to as internal model property. Note that the regulator~\eqref{eq:internal_model_regulator_general} achieves this property if there exists a bounded solution $\Pi_t^z$ of the equations~\eqref{eq:internal_model_regulator_equations}. In fact, if this holds, the steady-state response $z^{ss}_t$ of the process $z_t$ is $\Pi_t^{z,ss}\omega$ while $e_t^{ss} = 0$ and the output of the regulator satisfies $u_t^{ss} = Kx_t^{ss} + K_t^z z_t^{ss} = (K\Pi_t^{ss} + K_t^z \Pi_t^{z,ss})\omega = \Lambda_t^{ss}\omega$ by the last equation in~\eqref{eq:internal_model_regulator_equations}, \emph{i.e.} the regulator~\eqref{eq:internal_model_regulator_general} possesses the stochastic internal model property. 
    
    We now look at how to design the regulator~\eqref{eq:internal_model_regulator_general} so that it achieves said property. It turns out that if a technical assumption is satisfied, then such a regulator can always be designed. In particular, we set\footnote{Although $\mathcal{G}_{im}^1$ is time-varying, we omit the subscript $t$ to avoid notation overload.} $\mathcal{G}_{im}^1 = H_{im} + L_{im}K_t^z$, with $H_{im}\in\mathbb{R}^{\nu \times \nu}$ and $L_{im}\in \mathbb{R}^{\nu \times 1}$ to be designed. Once the solutions $\Pi_t$ and $\Lambda_t$ of the first two regulator equations~\eqref{eq:internal_model_regulator_equations} have been found, the last two equations in~\eqref{eq:internal_model_regulator_equations} become
    \begin{align}
        d\Pi_t^z &= \left[H_{im}\Pi_t^z - \Pi_t^z S + L_{im}(\Lambda_t - K\Pi_t)\right]dt,\\
	    \Lambda_t &= K\Pi_t + K^z_t \Pi^z_t.
	    \label{eq:imp_regulator_equations}
    \end{align}
    
    \begin{proposition}
    \label{proposition:internal_model_regulator}
    Consider Problem~\ref{problem:IFI} and let Assumptions~\ref{assumption:ex_system_marg_stable} and \ref{assumption:system_stabilisable} hold. Assume that there exist bounded solutions $\Pi_t$ and $\Lambda_t$ of the regulator equations~\eqref{eq:IFI_regulator_equations}. Let the regulator~\eqref{eq:internal_model_regulator_general}, with the selection $\mathcal{G}_{im}^1 = H_{im} + L_{im}K_t^z$, be such that the couple $(H_{im}, L_{im})$ is controllable, $\sigma(H_{im}) \cap \sigma(S) = \emptyset$ and condition~\textbf{(S$_I^F$)} holds. If the solution $\Pi^z_t$ of the first equation in~\eqref{eq:imp_regulator_equations} is invertible almost surely for all $t\ge t_0$, then there exists a bounded $K^z_t = (\Lambda_t - K\Pi_t){(\Pi_t^z)}^{-1}$ such that $\Lambda_t = K\Pi_t + K^z_t \Pi^z_t$. Thus the regulator~\eqref{eq:internal_model_regulator_general} solves Problem~\ref{problem:IFI}.
    \end{proposition}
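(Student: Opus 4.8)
The plan is to deduce the claim from Lemma~\ref{lemma:preliminary_imp}: condition~\textbf{(S$_I^F$)} is assumed to hold for the regulator~\eqref{eq:internal_model_regulator_general}, so by that lemma it is enough to produce \emph{bounded} matrices $\Pi_t$, $\Lambda_t$, $\Pi_t^z$ satisfying the four equations~\eqref{eq:internal_model_regulator_equations}; then \textbf{(R$_I^F$)} follows, and since \textbf{(S$_I^F$)} is assumed, the regulator~\eqref{eq:internal_model_regulator_general} solves Problem~\ref{problem:IFI}. The bounded $\Pi_t$ and $\Lambda_t$ solving the first two of those equations (which are exactly~\eqref{eq:IFI_regulator_equations}) are provided by the hypothesis, so the whole matter reduces to constructing a bounded $\Pi_t^z$ and a bounded gain $K_t^z$ that make the last two equations hold.

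First I would set $V_t := \Lambda_t - K\Pi_t$, which is bounded because $\Pi_t$, $\Lambda_t$ are bounded and $K$ is constant, and take $\Pi_t^z$ to be the steady-state solution of the first equation in~\eqref{eq:imp_regulator_equations}, namely $\dot{\Pi}_t^z = H_{im}\Pi_t^z - \Pi_t^z S + L_{im}V_t$. Applying the vectorisation operator gives $\frac{d}{dt}\vect{\Pi_t^z} = (I\otimes H_{im} - S^\top\otimes I)\vect{\Pi_t^z} + (I\otimes L_{im})\vect{V_t}$; since $\sigma(S)$ lies on the imaginary axis and $\sigma(H_{im})\cap\sigma(S)=\emptyset$ with $H_{im}$ Hurwitz, the eigenvalues $\mu-\lambda$ ($\mu\in\sigma(H_{im})$, $\lambda\in\sigma(S)$) of $I\otimes H_{im} - S^\top\otimes I$ all have negative real part, so the bounded forcing $(I\otimes L_{im})\vect{V_t}$ yields a bounded $\Pi_t^z$, analogously to the boundedness argument for $\Pi_t$ in the proof of Lemma~\ref{lemma:preliminary_IFI}.

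Next, since $\Pi_t^z$ is invertible almost surely for all $t\ge t_0$ by hypothesis, I would define $K_t^z := (\Lambda_t - K\Pi_t)(\Pi_t^z)^{-1} = V_t(\Pi_t^z)^{-1}$; the invertibility assumption, together with the boundedness of the steady-state process $\Pi_t^z$, provides an almost-sure uniform bound on $(\Pi_t^z)^{-1}$, hence $K_t^z$ is bounded. With this choice $K_t^z\Pi_t^z = V_t = \Lambda_t - K\Pi_t$, so the fourth equation in~\eqref{eq:internal_model_regulator_equations} holds; moreover $\mathcal{G}_{im}^1\Pi_t^z = (H_{im}+L_{im}K_t^z)\Pi_t^z = H_{im}\Pi_t^z + L_{im}V_t$, so the ODE defining $\Pi_t^z$ rewrites as $d\Pi_t^z = (\mathcal{G}_{im}^1\Pi_t^z - \Pi_t^z S)dt$, which is the third equation in~\eqref{eq:internal_model_regulator_equations}. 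Thus bounded $\Pi_t$, $\Lambda_t$, $\Pi_t^z$ satisfy all of~\eqref{eq:internal_model_regulator_equations}, and Lemma~\ref{lemma:preliminary_imp} completes the argument.

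The main obstacle is the boundedness of the gain $K_t^z$: upgrading the pointwise (almost surely, for each $t\ge t_0$) invertibility of $\Pi_t^z$ to a uniform lower bound on its smallest singular value, so that $(\Pi_t^z)^{-1}$ is a genuinely bounded process. This is precisely what the ``technical assumption'' in the statement addresses; in the deterministic constant-coefficient case, controllability of $(H_{im},L_{im})$ with $\sigma(H_{im})\cap\sigma(S)=\emptyset$ forces the unique solution of the Sylvester equation $\Pi^z S - H_{im}\Pi^z = L_{im}V$ to be nonsingular whenever $(V,S)$ is observable, and one has to argue --- or, as here, simply assume --- that this non-degeneracy survives for the time-varying stochastic $\Pi_t^z$. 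A secondary point is that boundedness of $\Pi_t^z$ itself requires $H_{im}$ to be Hurwitz, a mild strengthening of the stated spectral condition and the natural design choice since $\sigma(S)$ is on the imaginary axis.
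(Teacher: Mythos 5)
Your proposal is correct and follows essentially the same route as the paper: reduce to Lemma~\ref{lemma:preliminary_imp}, obtain a bounded steady-state $\Pi_t^z$ from the spectral separation $\sigma(H_{im})\cap\sigma(S)=\emptyset$ together with boundedness of $\Lambda_t-K\Pi_t$ (the paper cites Lemma~\ref{lemma:general_steady_state} where you argue via vectorisation), and then use the assumed invertibility of $\Pi_t^z$ to define $K_t^z=(\Lambda_t-K\Pi_t)(\Pi_t^z)^{-1}$ and close the loop. Your closing observation that the boundedness of $\Pi_t^z$ really uses $H_{im}$ Hurwitz (which disjointness of spectra alone does not give, since $\sigma(S)$ lies on the imaginary axis) is a fair and slightly sharper reading than the paper's one-line appeal to Lemma~\ref{lemma:general_steady_state}.
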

    \begin{proof}
        By Assumptions~\ref{assumption:ex_system_marg_stable} and \ref{assumption:system_stabilisable} it is possible to design the regulator~\eqref{eq:internal_model_regulator_general} as in the statement of this proposition, in particular achieving closed-loop stability. Moreover, as the eigenvalues of $H_{im}$ and $S$ are disjoint and the matrices $\Pi_t$ and $\Lambda_t$ are bounded, by Lemma~\ref{lemma:general_steady_state} there exists a bounded solution $\Pi_t^z$ of the first equation in~\eqref{eq:imp_regulator_equations}, therefore of the third equation in~\eqref{eq:internal_model_regulator_equations}. Additionally, if $\Pi_t^z$ is invertible almost surely, the selection $K^z_t = (\Lambda_t - K\Pi_t){(\Pi_t^z)}^{-1}$ makes the last equation in~\eqref{eq:internal_model_regulator_equations} hold. Therefore, by Lemma~\ref{lemma:preliminary_imp}, condition~\textbf{(R$_I^F$)} is satisfied and the regulator solves Problem~\ref{problem:IFI}.
    \end{proof}
   
   We now present a series of comments about the internal model regulator.
   
   \begin{remark}
   The requirement of the invertibility of the matrix $\Pi_t^z$ in Proposition~\ref{proposition:internal_model_regulator} is due to the fact that the regulator has the same dimension of the exosystem, thus the steady-state matrix $\Pi_t^z$ is square. More generally, for regulators of higher order, the requirement would be that the steady-state matrix is full rank at all times (see, \emph{e.g.}, \cite{Marconi2013}).
   \end{remark}
   
   \begin{remark}
   The use of a dynamic regulator of the form~\eqref{eq:internal_model_regulator_general} achieving the stochastic internal model property shows that it is possible to design control laws without using the exosystem state variable $\omega$ in the feedback loop, as it happens in control laws of the form~\eqref{eq:static_regulator}. Nevertheless, having shown that such internal model regulators can be designed using standard linear control techniques, in this paper we prefer to use controllers that adopt the variable $\omega$, because this helps us to keep the notation simpler in the following sections, with the understanding that the design of internal model regulators can be done trivially using the results just introduced.
   \end{remark}
   
   \begin{remark}
   It is evident that both the controls~\eqref{eq:static_regulator} and~\eqref{eq:internal_model_regulator_general} are not robust with respect to parametric uncertainties potentially affecting the system matrices. This is due to the fact that said controllers rely on the solution of the regulator equations, where these system matrices explicitly appears. In \cite{Marconi2013} this problem was overcome with an additional technical assumption on the nature of the uncertainty, which however does not hold in the stochastic case. Note that, although the controllers presented in the present paper are inherently robust with respect to stochastic uncertainties, the problem of designing a controller which achieves robust output regulation with respect to parametric uncertainties is still open at this stage.
   \end{remark}
   
\section{Approximate Full-Information Problem}
\label{section:approximate_fi}
The solution provided by Theorem~\ref{theorem:IFI_solution} is not implementable in real contexts. In fact, the integration of the regulator equations~\eqref{eq:IFI_regulator_equations} makes it necessary to access the signal $\mathcal{W}_t$. This quantity is never available for measure, which makes this approach impossible in practice. 
To deal with this fundamental issue, we introduce a hybrid architecture that periodically performs an \emph{a-posteriori} partial estimation of the noise that affected the system between sampling times. We then show that the solution of Problem~\ref{problem:IFI} can be approximated with a degree of accuracy that depends on the frequency of the sampling times. In order to accomplish this, we formulate an approximate problem and we show that it can be solved in practice (\emph{i.e.} access to the Brownian motion is not required) with a hybrid scheme.
\begin{problem}
	\label{problem:epsAFI}
	($\varepsilon$-Approximate Full-Information Output Regulation Problem). Consider system~\eqref{eq:general_system}, driven by the signal generator~\eqref{eq:exogenous_system}. The $\varepsilon$-\textit{approximate full-information output regulation problem} consists in determining a regulator such that the following conditions hold.
	\begin{description}
		\item[\textbf{(S$_A^F$)}] The closed-loop system obtained by interconnecting system~\eqref{eq:general_system} and the regulator with $\omega\equiv 0$ is asymptotically stable.
		\item[\textbf{(R$_A^F$)}] The closed-loop system obtained by interconnecting system~\eqref{eq:general_system}, the signal generator~\eqref{eq:exogenous_system} and the regulator yields a steady-state response of the tracking error $e^{ss}_t(\omega_0,\varepsilon)$, with $\varepsilon \in \mathbb{R}_{>0}$, which is bounded and such that
	\begin{equation}
			\begin{aligned}
			\lim_{\Vert \omega_0\Vert \rightarrow 0}e_t^{ss}(\omega_0, \varepsilon) &= 0,  \qquad \forall \varepsilon \in \mathbb{R}_{>0},\\
			\lim_{\varepsilon\rightarrow 0} e_t^{ss}(\omega_0, \varepsilon) &= 0,  \qquad\forall \omega_0\in \mathbb{R}^\nu,
			\end{aligned}
		\end{equation} 
		almost surely for any $x_0\in \mathbb{R}^n$.
	\end{description}
\end{problem}
By looking at the solution of the ideal problem, we now seek a static regulator of the form
\begin{equation}
	\label{eq:static_regulator_approx}
	u_t = Kx_t + \F{\widehat{\Gamma}_t}\omega,
	\end{equation}
where $K\in \mathbb{R}^{1\times n}$ and $\F{\widehat{\Gamma}_t}\in \mathbb{R}^{1\times\nu}$ is bounded. Observe that condition~\textbf{(S$_A^F$)} is equivalent to~\textbf{(S$_I^F$)} as the stabilisation of the system is independent of the Brownian motion in the case of full information. An element of novelty is introduced with the regulation condition~\textbf{(R$_A^F$)}. The rationale of this condition is that, although we allow for a tracking error which is almost surely non-zero, we recover the ideal case if the parameter $\varepsilon$ is chosen arbitrarily small. In Section~\ref{subsec:solution_AFI} it is shown that this parameter is the period at which samples of the state of the system are compared in order to obtain an estimate of the variations of the Brownian motion. Moreover, given the marginal stability of the exogenous system (see Assumption~\ref{assumption:ex_system_marg_stable}), for small $\Vert \omega_0 \Vert$ we also have small $\Vert \omega(t) \Vert$ for all $t \in \mathbb{R}_{>0}$.

\subsection{Reconstruction of the Brownian Motion}
\label{subsec:reconstruction_brownian_motion_FI}
We now review It\^{o}'s interpretation of a stochastic integral. This is instrumental in developing the theory of approximation of the Brownian motion.

Given a map $f:\mathbb{R}_{\ge 0}\times \nabla \rightarrow \mathbb{R}^n$, a stochastic integral is a stochastic process of the form
 \begin{equation}
 \label{eq:stochastic_integral}
 	X(t_n,w) = \int_{t_0}^{t_n} f(\tau,w)d\mathcal{W}_\tau
 \end{equation}    
which can be approximated as \cite[Section 2.1]{Gard1988}
\begin{equation}
\label{eq:stochastic_integral_approx}
	X(t_n,w) \approx \sum_{k=1}^n f(\tau_{k-1},w)(\mathcal{W}_{t_k} - \mathcal{W}_{t_{k-1}}).
\end{equation}
The value $\tau_{k-1}$ can be arbitrarily chosen in the interval $[t_{k-1}, t_k]$, thus yielding different intepretations and properties. It\^{o}'s interpretation of the stochastic integral~\eqref{eq:stochastic_integral}, which makes it causal, is given by taking the limit of~\eqref{eq:stochastic_integral_approx} with $\tau_{k-1} = t_{k-1}$ for $t_k - t_{k-1}$ tending to zero. As already stated, all the integrals are meant in It\^{o}'s sense in this paper.

In the remainder, it is assumed that the sequence $\{t_k\}_{k\ge 0}$ is defined such that $t_k - t_{k-1} = \varepsilon$ for all $k \in \mathbb{Z}_{>0}$ and we define $\Delta x(k) = x_{t_k} - x_{t_{k-1}}$. We now show that, if the forward-Euler scheme is adopted as a starting point for an approximation of the system dynamics, it is possible to compute an estimate $\Delta \widehat{W}_\varepsilon(k)$ of the variation of the Brownian motion $\Delta W_\varepsilon(k) = \mathcal{W}_{t_k} - \mathcal{W}_{t_{k-1}}$ that ``converges" to the stochastic differential $d\mathcal{W}_t$ as the parameter $\varepsilon$ converges to zero. Specifically, let $\mathcal{L}_I$ be the space of functions that are integrable in It\^o's sense. Then with the notation
\begin{equation}
    \Delta\widehat{W}_\varepsilon \xrightarrow{\varepsilon} d\mathcal{W}_t
\end{equation}
we mean that for all $f\in\mathcal{L}_I$
\begin{equation}
    \lim_{\varepsilon\rightarrow 0}\sum_{k}f(t_{k-1},w)\Delta \widehat{W}_\varepsilon(k) = \int_0^t f(\tau,w)d\mathcal{W}_\tau.
\end{equation}
Before proceeding, for ease of notation define the vector
\begin{equation}
	v(k) = Fx_{t_{k}} + Gu_{t_{k}} + R\omega(t_{k}), \quad \forall k\in \mathbb{Z}_{\ge 0}.
\end{equation}
which is the diffusion coefficient of the system dynamics evaluated at time $t_{k}$. The following assumption ensures that the noise persistently excites the system. 
\begin{assumption}
\label{assumption:v_not_zero_FI}
There exists $\delta \in \mathbb{R}_{>0}$ such that $\vert v(k) \vert > \delta$ almost surely for all $k\in \mathbb{Z}_{\ge 0}$.
\end{assumption}
We postpone the discussion of the rationale of this assumption to the end of this section. Under Assumption~\ref{assumption:v_not_zero_FI}, we define the Moore-Penrose left pseudo-inverse of $v(k)$ as $v(k)^+ = (v(k)^\top v(k))^{-1}v(k)^\top$, the norm of which is uniformly bounded.

\begin{lemma}
	\label{lemma:brownian_motion_estimation_FI}
	Consider system~\eqref{eq:general_system} and let Assumption~\ref{assumption:v_not_zero_FI} hold. Let the sequence $\{\Delta \F{\widehat{W}_\varepsilon}(k)\}_{k > 0}$ be defined as
	\begin{multline}
	\label{eq:dWest_definition_FI}
		\Delta \F{\widehat{W}_\varepsilon}(k) = v(k-1)^+[\Delta x(k) -\\ (Ax_{t_{k-1}} + Bu_{t_{k-1}} + P\omega(t_{k-1}))\varepsilon].
	\end{multline}
	Then  $\Delta \F{\widehat{W}_\varepsilon} \xrightarrow{\varepsilon} d\mathcal{W}_t$ almost surely.
\end{lemma}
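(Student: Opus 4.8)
The plan is to start from the exact (Itô) integral form of the state equation over one sampling interval $[t_{k-1},t_k]$, namely
\begin{equation*}
\Delta x(k) = \int_{t_{k-1}}^{t_k}(Ax_\tau + Bu_\tau + P\omega(\tau))d\tau + \int_{t_{k-1}}^{t_k}(Fx_\tau + Gu_\tau + R\omega(\tau))d\mathcal{W}_\tau,
\end{equation*}
and substitute this into the definition~\eqref{eq:dWest_definition_FI} of $\Delta\widehat{W}_\varepsilon(k)$. Since $v(k-1)^+v(k-1)=1$, the ``main term'' obtained from freezing the diffusion coefficient at its left endpoint $t_{k-1}$ is exactly $\Delta W_\varepsilon(k)=\mathcal{W}_{t_k}-\mathcal{W}_{t_{k-1}}$; everything else is an error term. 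First I would therefore write
\begin{equation*}
\Delta\widehat{W}_\varepsilon(k) = \Delta W_\varepsilon(k) + v(k-1)^+\big(r_k^{\mathrm{drift}} + r_k^{\mathrm{diff}}\big),
\end{equation*}
where $r_k^{\mathrm{drift}} = \int_{t_{k-1}}^{t_k}(Ax_\tau+Bu_\tau+P\omega(\tau))d\tau - (Ax_{t_{k-1}}+Bu_{t_{k-1}}+P\omega(t_{k-1}))\varepsilon$ is the left-endpoint quadrature error of the drift, and $r_k^{\mathrm{diff}} = \int_{t_{k-1}}^{t_k}\big[(Fx_\tau+Gu_\tau+R\omega(\tau)) - v(k-1)\big]d\mathcal{W}_\tau$ is the error from freezing the diffusion coefficient. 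The boundedness of $\|v(k-1)^+\|$ (uniform, by Assumption~\ref{assumption:v_not_zero_FI}) is what lets these prefactors be pulled out harmlessly.

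Next I would fix an arbitrary $f\in\mathcal{L}_I$ and consider $\sum_k f(t_{k-1},w)\Delta\widehat{W}_\varepsilon(k)$. The $\Delta W_\varepsilon(k)$ part gives $\sum_k f(t_{k-1},w)(\mathcal{W}_{t_k}-\mathcal{W}_{t_{k-1}})$, which is precisely the Itô Riemann sum converging to $\int_0^t f(\tau,w)d\mathcal{W}_\tau$ as $\varepsilon\to 0$ — this is the definition of the Itô integral recalled around~\eqref{eq:stochastic_integral_approx}. It then remains to show that the remainder $\sum_k f(t_{k-1},w)v(k-1)^+(r_k^{\mathrm{drift}}+r_k^{\mathrm{diff}})$ vanishes as $\varepsilon\to 0$. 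For the drift remainder, since $x_\tau,u_\tau,\omega(\tau)$ have almost surely continuous sample paths on the compact interval $[0,t]$ (solutions of~\eqref{eq:general_system} are continuous, $u_t$ is a continuous feedback, $\omega$ is $C^\infty$), each increment of the integrand is $o(1)$ uniformly in $k$, so $\sum_k |r_k^{\mathrm{drift}}|$ is bounded by $\varepsilon$ times a path-wise modulus of continuity, which tends to $0$; weighting by the bounded factor $f(t_{k-1},w)v(k-1)^+$ preserves this. For the diffusion remainder I would argue that $\sum_k f(t_{k-1},w)v(k-1)^+ r_k^{\mathrm{diff}}$ is itself an Itô integral of a simple integrand against $d\mathcal{W}$ whose integrand tends to zero pathwise as $\varepsilon\to 0$ (again by continuity of the sample paths of $Fx_\tau+Gu_\tau+R\omega(\tau)$), and invoke the usual $L^2$ / almost-sure convergence of Itô sums of mesh-shrinking simple processes to conclude it vanishes.

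The main obstacle is the diffusion remainder $r_k^{\mathrm{diff}}$: unlike the drift term, it cannot be bounded pathwise interval-by-interval because stochastic integrals of a fixed-size integrand over an interval of length $\varepsilon$ are of order $\sqrt{\varepsilon}$, not $\varepsilon$, so a naive triangle-inequality bound on $\sum_k|r_k^{\mathrm{diff}}|$ does not close. The correct route is to keep the whole remainder as a single stochastic integral $\int_0^t g_\varepsilon(\tau,w)d\mathcal{W}_\tau$ with $g_\varepsilon$ piecewise constant and $g_\varepsilon\to 0$ pointwise, and use the Itô isometry together with dominated convergence (the integrands are uniformly bounded on $[0,t]$ along almost every path, using boundedness of $f$, of $v(k-1)^+$, and continuity/boundedness of the diffusion coefficient) to get $L^2$-convergence to zero, hence almost-sure convergence along a subsequence and, after the standard refinement, almost surely. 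This is the step where Assumption~\ref{assumption:v_not_zero_FI} (through the uniform bound on $\|v(k)^+\|$) and the persistent-excitation rationale are genuinely used.
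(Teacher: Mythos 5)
Your proposal is correct and follows the same overall strategy as the paper: expand $\Delta x(k)$ over $[t_{k-1},t_k]$ with the diffusion coefficient frozen at the left endpoint, identify $v(k-1)\Delta W_\varepsilon(k)$ as the main term, invert with the uniformly bounded pseudo-inverse $v(k-1)^+$ to get $\Delta\widehat{W}_\varepsilon(k)=\Delta W_\varepsilon(k)+(\text{remainder})$, and then sum against $f(t_{k-1},w)$ so that the main terms reproduce the It\^o Riemann sums. The difference is in how the remainder is handled. The paper simply invokes \cite[Theorem 7.1]{Gard1988} to write the one-step truncation error as a single $o(\varepsilon^2)$ term and then asserts that $\sum_k f_{t_{k-1}}v(k-1)^+o(\varepsilon^2)\to 0$, implicitly treating the local error as if it were absolutely summable of order better than $\varepsilon$. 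You instead split the remainder into a drift quadrature error and a diffusion-coefficient-freezing error, and you correctly observe that the latter is only of order $\sqrt{\varepsilon}$ per step, so a termwise triangle-inequality bound does not close; your repair --- keeping the accumulated diffusion remainder as a single It\^o integral of a mesh-shrinking simple integrand and using the It\^o isometry plus dominated convergence --- is the right way to make that step rigorous. In this sense your argument is more self-contained and actually patches a point the paper's proof glosses over; what the paper's route buys in exchange is brevity, by delegating the local error analysis to the cited Euler--Maruyama truncation result.
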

\begin{proof}
	Let $k\in\mathbb{Z}_{>0}$. By \cite[Theorem 7.1]{Gard1988}
	\begin{multline}
		\Delta x(k) =  [Ax_{t_{k-1}} + Bu_{t_{k-1}} + P\omega(t_{k-1})]\varepsilon +\\ v(k-1)\Delta W_\varepsilon(k) + o(\varepsilon^2),
	\end{multline}
	holds, where $o(\varepsilon^2)$, which is the \textit{one-step truncation error} of the forward-Euler scheme, is an infinitesimal of the same order of $\varepsilon^2$. The previous expression can be rewritten as
	\begin{multline}
		v(k-1)\Delta W_\varepsilon(k) = \Delta x(k) -\\ [Ax_{t_{k-1}} + Bu_{t_{k-1}} + P\omega(t_{k-1})]\varepsilon + o(\varepsilon^2).
	\end{multline}
	Since $v(k-1)$ has full column rank almost surely, the expression
	\begin{multline}
		\Delta W_\varepsilon(k) = v(k-1)^+[\Delta x(k) -\\ (Ax_{t_{k-1}} + Bu_{t_{k-1}} + P\omega(t_{k-1}))\varepsilon + o(\varepsilon^2)],
	\end{multline}
	holds almost surely. Defining $\Delta \F{\widehat{W}_\varepsilon}(k)$ as in~\eqref{eq:dWest_definition_FI} yields
	\begin{equation}
	\label{eq:relation_DeltaW_true_est_FI}
		\Delta \F{\widehat{W}_\varepsilon}(k) = \Delta W_\varepsilon(k) + v(k-1)^+o(\varepsilon^2).
	\end{equation}
	almost surely. Let $f_t\in \mathcal{L}_I$. Then
	\begin{equation}
	    \sum_k f_{t_{k-1}} \Delta \F{\widehat{W}_\varepsilon}(k) =  \sum_k f_{t_{k-1}} (\Delta W_\varepsilon(k) + v(k-1)^+ o(\varepsilon^2)).
	\end{equation}
	Taking the limit of both sides for $\varepsilon$ that tends to zero yields $\Delta \widehat{W}_\varepsilon \xrightarrow{\varepsilon}d\mathcal{W}_t$, since for all $f\in\mathcal{L}_I$
	\begin{equation}
	    \lim_{\varepsilon \rightarrow 0} \sum_k f_{t_{k-1}}v(k-1)^+ o(\varepsilon^2) = 0 \quad \text{almost surely.}
	\end{equation}
\end{proof}

\subsection{Solution to the Approximate Problem}
\label{subsec:solution_AFI}
In this section we show how to employ the approximation of the variations of the Brownian motion in order to solve the $\varepsilon$-approximate full-information problem. To this end, we first give a preliminary result.

\begin{lemma}
	\label{lemma:steady_state_approximated}
	Consider the closed-loop system obtained interconnecting \eqref{eq:general_system}, \eqref{eq:exogenous_system} and \eqref{eq:static_regulator_approx} and let Assumptions~\ref{assumption:ex_system_marg_stable}, \ref{assumption:nonresonance} and \ref{assumption:v_not_zero_FI} hold. Then there exist bounded matrices $\F{\widehat{\Pi}_t}\in \mathbb{R}^{n\times\nu}$ and $\F{\widehat{\Lambda}_t}\in \mathbb{R}^{1\times \nu}$ solving
	\begin{equation}
	\label{eq:steady_state_regulation_FI}
	\begin{aligned}
		d \F{\widehat{\Pi}_t} &= [A\F{\widehat{\Pi}_t} - \F{\widehat{\Pi}_t} S + P + B\F{\widehat{\Lambda}_t}]dt,\\
		\F{\widehat{\Pi}_{t_k^+}} &= \F{\widehat{\Pi}_{t_k}} + [F\F{\widehat{\Pi}_{t_{k-1}^+}} + R + G\F{\widehat{\Lambda}_{t_{k-1}^+}}] \Delta \F{\widehat{W}_\varepsilon}(k),\\
		0 &= \lim_{t\rightarrow\infty}[C\F{\widehat{\Pi}_t} + Q + D\F{\widehat{\Lambda}_t}],
	\end{aligned}
	\end{equation}
	where $\Delta \F{\widehat{W}_\varepsilon}(k)$ is given by~\eqref{eq:dWest_definition_FI}. In addition, if there exists a matrix $K$ such that condition~\textbf{(S$_A^F$)} is satisfied, then under the control law~\eqref{eq:static_regulator_approx} with $\F{\widehat{\Gamma}_t} = \F{\widehat{\Lambda}_t} - K\F{\widehat{\Pi}_t}$, the steady-state response of the tracking error of the closed-loop system is bounded and given by
	\begin{equation}
		\label{eq:tracking_error_steady_state_FI}
		e^{ss}_t = [C\F{\widetilde{\Pi}_t^{ss}} + Q + D\F{\widehat{\Lambda}_t^{ss}}]e^{St}\omega_0,
	\end{equation}
	where $\F{\widetilde{\Pi}_t^{ss}}\in \mathbb{R}^{n\times\nu}$ is the steady-state response of
	\begin{multline}
	\label{eq:true_steady_state}
	d\F{\widetilde{\Pi}_t} = [(A+BK)\F{\widetilde{\Pi}_t}-\F{\widetilde{\Pi}_t} S+P+B\F{\widehat{\Gamma}_t}]dt + \\ [(F+GK)\F{\widetilde{\Pi}_t}+R+G\F{\widehat{\Gamma}_t}]d\mathcal{W}_t.
	\end{multline}
\end{lemma}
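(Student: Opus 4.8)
The plan is to mirror the structure of the proof of Lemma~\ref{lemma:preliminary_IFI}, but now keeping track of the approximation error introduced by replacing $d\mathcal{W}_t$ with the sampled estimate $\Delta\widehat{W}_\varepsilon(k)$. First I would establish existence and boundedness of $\widehat{\Pi}_t$ and $\widehat{\Lambda}_t$ solving the hybrid system~\eqref{eq:steady_state_regulation_FI}. The flow part is a deterministic linear ODE driven by $\widehat{\Lambda}_t$, and the jump map adds the increments $[F\widehat{\Pi}_{t_{k-1}^+}+R+G\widehat{\Lambda}_{t_{k-1}^+}]\Delta\widehat{W}_\varepsilon(k)$; since Assumption~\ref{assumption:nonresonance} (non-resonance) gives exponential decay of $\Upsilon_t$ and Assumption~\ref{assumption:v_not_zero_FI} bounds $\|v(k)^+\|$ uniformly, the variation-of-constants representation of the hybrid trajectory (analogous to~\eqref{eq:Pi_explicit}, but with the stochastic integral replaced by a sum over jumps) converges as $t_0\to-\infty$ to a bounded steady-state process. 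The algebraic relation $\widehat{\Lambda}_t = $ (the relative-degree construction of Proposition~\ref{proposition:solvability_regulator_equations}, applied to the hybrid version) then determines $\widehat{\Lambda}_t$ and yields the third equation in~\eqref{eq:steady_state_regulation_FI}.

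Next I would choose $K$ so that~\textbf{(S$_A^F$)} holds (possible by Assumption~\ref{assumption:system_stabilisable}) and set $\widehat{\Gamma}_t = \widehat{\Lambda}_t - K\widehat{\Pi}_t$, which is bounded. With $u_t = Kx_t + \widehat{\Gamma}_t\omega$, I introduce the \emph{true} steady-state matrix $\widetilde{\Pi}_t$ solving the genuine SDE~\eqref{eq:true_steady_state}; this is the object that actually describes the closed-loop state at steady state, because $x_t - \widetilde{\Pi}_t\omega$ satisfies $d\chi_t = (A+BK)\chi_t dt + (F+GK)\chi_t d\mathcal{W}_t$, which tends to zero almost surely by~\textbf{(S$_A^F$)}. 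Hence $x_t$ converges to $\widetilde{\Pi}_t^{ss}\omega(t)=\widetilde{\Pi}_t^{ss}e^{St}\omega_0$, and the error is $e_t = (C+DK)\chi_t + [(C+DK)\widetilde{\Pi}_t + Q + D\widehat{\Gamma}_t]\omega$, whose limit is $[C\widetilde{\Pi}_t^{ss} + Q + D\widehat{\Lambda}_t^{ss}]e^{St}\omega_0$ after substituting $\widehat{\Gamma}_t = \widehat{\Lambda}_t - K\widehat{\Pi}_t$ and recognising $(C+DK)\widetilde{\Pi}_t - DK\widehat{\Pi}_t$; care is needed because $\widetilde{\Pi}_t \neq \widehat{\Pi}_t$ in general, so the cancellation only leaves $C\widetilde{\Pi}_t^{ss}$ and $D\widehat{\Lambda}_t^{ss}$ if one is careful about which $\Pi$ appears with the $DK$ term. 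This gives~\eqref{eq:tracking_error_steady_state_FI}. Boundedness of $e_t^{ss}$ follows from boundedness of $\widetilde{\Pi}_t^{ss}$ (again via non-resonance applied to~\eqref{eq:true_steady_state}) and of $\widehat{\Lambda}_t^{ss}$, together with Assumption~\ref{assumption:ex_system_marg_stable} bounding $e^{St}$.

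The main obstacle, I expect, is the bookkeeping that shows the \emph{hybrid} object $\widehat{\Pi}_t$ is well-defined and bounded and correctly plays the role of a discretised version of $\Pi_t$: one must verify that the jump-sum analogue of the It\^o integral in~\eqref{eq:Pi_explicit} inherits exponential damping from $\widehat{\Phi}_t$ between sampling instants, uniformly in $\varepsilon$, so that summability (hence boundedness) holds. The relation~\eqref{eq:relation_DeltaW_true_est_FI}, $\Delta\widehat{W}_\varepsilon(k) = \Delta W_\varepsilon(k) + v(k-1)^+ o(\varepsilon^2)$, from Lemma~\ref{lemma:brownian_motion_estimation_FI}, is what links the hybrid increments to genuine Brownian increments and will be the key tool; but in this lemma we only need \emph{existence and boundedness} of the solution and the closed-form of $e_t^{ss}$, not yet the convergence $e_t^{ss}\to 0$ as $\varepsilon\to 0$ (that is presumably the content of the subsequent theorem), so I would not push the error estimate further than needed here. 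A secondary subtlety is that~\eqref{eq:steady_state_regulation_FI} specifies jumps only at the sampling times $t_k$ while the flow is continuous-time; I would formalise $\widehat{\Pi}_t$ as a piecewise-$C^1$ process and check that the third (limit) equation is consistent with the relative-degree cascade exactly as in the ideal case, invoking Lemma~\ref{lemma:regulator_equations_relative_degree} and Proposition~\ref{proposition:solvability_regulator_equations} at the sampled level.
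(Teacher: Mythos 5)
Your overall architecture is right and the second half of your argument is essentially the paper's: introduce the \emph{true} closed-loop steady-state matrix $\F{\widetilde{\Pi}_t}$ solving~\eqref{eq:true_steady_state}, use \textbf{(S$_A^F$)} to get $x_t\rightarrow\F{\widetilde{\Pi}_t^{ss}}\omega$ via Lemma~\ref{lemma:general_steady_state}, and substitute $\F{\widehat{\Gamma}_t}=\F{\widehat{\Lambda}_t}-K\F{\widehat{\Pi}_t}$ into the error expression to obtain~\eqref{eq:tracking_error_steady_state_FI} (the $DK$ mismatch between $\F{\widetilde{\Pi}_t}$ and $\F{\widehat{\Pi}_t}$ that you flag is present in the paper's own computation as well, which passes from $(C+DK)\F{\widetilde{\Pi}_t^{ss}}+Q+D\F{\widehat{\Gamma}_t^{ss}}$ to $C\F{\widetilde{\Pi}_t^{ss}}+Q+D\F{\widehat{\Lambda}_t^{ss}}$ without further comment). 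Where you genuinely diverge is the first half: you propose to prove boundedness of $\F{\widehat{\Pi}_t},\F{\widehat{\Lambda}_t}$ by writing a hybrid variation-of-constants formula analogous to~\eqref{eq:Pi_explicit} and extracting exponential damping from the non-resonance condition, whereas the paper instead introduces auxiliary matrices $\Pi^D_{t_k},\Lambda^D_{t_k}$ defined as the forward-Euler discretisation of the \emph{ideal} regulator equations~\eqref{eq:IFI_regulator_equations} (driven by the true increments $\Delta W_\varepsilon(k)$), shows these are bounded for small $\varepsilon$ because $\Pi_t,\Lambda_t$ are, then observes that discretising the flow of~\eqref{eq:steady_state_regulation_FI} and substituting into the jump map produces exactly the same Euler--Maruyama recursion with $\Delta\F{\widehat{W}_\varepsilon}(k)$ in place of $\Delta W_\varepsilon(k)$, and concludes by the principle that boundedness of the discretised solutions implies boundedness of the underlying continuous(-hybrid) ones. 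The paper's detour through the discretisation is not cosmetic: the jump map in~\eqref{eq:steady_state_regulation_FI} is evaluated at the \emph{delayed} state $\F{\widehat{\Pi}_{t_{k-1}^+}}$, so the homogeneous hybrid system does not admit the clean product-form fundamental matrix your variation-of-constants sum presupposes; it is precisely by collapsing flow and jump into a single one-step recursion that the paper recovers a tractable object and injects the relation $\Delta\F{\widehat{W}_\varepsilon}(k)=\Delta W_\varepsilon(k)+v(k-1)^+o(\varepsilon^2)$ from Lemma~\ref{lemma:brownian_motion_estimation_FI}. Your route could probably be made to work, but you would have to resolve this delayed-jump bookkeeping explicitly (the "main obstacle" you name), and you would also be using Assumption~\ref{assumption:system_stabilisable}, which the lemma does not invoke — the statement only conditions on the existence of a stabilising $K$. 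The paper's discretisation-matching argument buys a shorter path to boundedness at the cost of leaning on an unproved "discrete bounded implies continuous bounded" principle; your direct hybrid analysis would be more self-contained if completed, but as written it leaves the hardest step as an acknowledged gap.
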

\begin{proof}
	By Proposition~\ref{proposition:solvability_regulator_equations} there exist bounded matrices~$\Pi_t$ and $\Lambda_t$ solving~\eqref{eq:IFI_regulator_equations}. This implies that $\F{\widehat{\Pi}_t}$ and $\F{\widehat{\Lambda}_t}$ solving~\eqref{eq:steady_state_regulation_FI} exist and are bounded. To show this, first define the auxiliary matrices $\Pi_{t_k}^D$ and $\Lambda_{t_k}^D$ satisfying
	\begin{equation}
	\label{eq:steady_state_regulation_ideal_discretised_FI}
		\begin{aligned}
		\Pi_{t_k}^D\! &=\! \Pi_{t_{k-1}}^D \!+\! [A\Pi_{t_{k-1}}^D \!- \!\Pi_{t_{k-1}}^D S \!+\! P\! +\! B\Lambda_{t_{k-1}}^D]\varepsilon +\\
		& \qquad[F\Pi_{t_{k-1}}^D+R+G\Lambda_{t_{k-1}}^D]\Delta W_\varepsilon(k),\\
		0 &= \lim_{k\rightarrow\infty}\left[C\Pi_{t_{k}}^D + Q + D\Lambda_{t_{k}}^D\right],
		\end{aligned}
	\end{equation}
	and observe that they are the solution of the forward-Euler discretisation of equations~\eqref{eq:IFI_regulator_equations} with sampling time $\varepsilon$. Therefore, let $\varepsilon \in \mathbb{R}_{>0}$ be sufficiently small as to guarantee that $\Pi_{t_k}^D$ and $\Lambda_{t_{k}}^D$ are bounded (this is possible because $\Pi_t$ and $\Lambda_t$ are bounded). Now, using the forward-Euler discretisation scheme with step $\varepsilon$, we can approximate the value of $\F{\widehat{\Pi}_t}$ in~\eqref{eq:steady_state_regulation_FI} at time $t_k$ as
	\begin{equation}
	\F{\widehat{\Pi}_{t_k}} =\F{\widehat{\Pi}_{t_{k-1}^+}}\! +\! [A\F{\widehat{\Pi}_{t_{k-1}^+}} - \F{\widehat{\Pi}_{t_{k-1}^+}} S + P + B\F{\widehat{\Lambda}_{t_{k-1}^+}}]\varepsilon,
	\end{equation}
	and, therefore, substituting this expression in the second equation in~\eqref{eq:steady_state_regulation_FI} yields
	\begin{multline}
	\label{eq:steady_state_regulation_approx_discretised_FI}
	\F{\widehat{\Pi}_{t_k^+}} = \F{\widehat{\Pi}_{t_{k-1}^+}} + [A\F{\widehat{\Pi}_{t_{k-1}^+}} - \F{\widehat{\Pi}_{t_{k-1}^+}} S + P + B\F{\widehat{\Lambda}_{t_{k-1}^+}}]\varepsilon +\\
	[F\F{\widehat{\Pi}_{t_{k-1}^+}} + R + G\F{\widehat{\Lambda}_{t_{k-1}^+}}] \Delta \F{\widehat{W}_\varepsilon}(k).
	\end{multline}
	We conclude that the discretisation of~\eqref{eq:steady_state_regulation_FI} tends to the discretisation of~\eqref{eq:IFI_regulator_equations} as $\varepsilon$ tends to zero, \textit{i.e.} they have the same forward-Euler discretisation. Recall that if the discretised system obtained using the forward-Euler scheme with sufficiently small sampling time $\varepsilon$ has bounded solutions, then the underlying continuous-time system has bounded solutions. Therefore, since $\Pi_{t_k}^D$ and $\Lambda_{t_k}^D$ are bounded, then $\F{\widehat{\Pi}_{t_k}}$ and $\F{\widehat{\Lambda}_{t_k}}$ are bounded for the same choice of $\varepsilon$ and, therefore, the original equations~\eqref{eq:steady_state_regulation_FI} have bounded solutions $\F{\widehat{\Pi}_t}$ and $\F{\widehat{\Lambda}_t}$.
	
	In turn, $\F{\widehat{\Gamma}_t} = \F{\widehat{\Lambda}_t} - K\F{\widehat{\Pi}_t}$ is bounded, therefore $\F{\widetilde{\Pi}_t}$ solving equation~\eqref{eq:true_steady_state} is bounded. As a consequence of Lemma~\ref{lemma:general_steady_state},  when the control law~\eqref{eq:static_regulator_approx} with $\F{\widehat{\Gamma}_t} = \F{\widehat{\Lambda}_t} - K\F{\widehat{\Pi}_t}$ is employed, the steady-state response of the state of the system is $x_{t}^{ss}=\F{\widetilde{\Pi}_t^{ss}}\omega$ and that of the tracking error is
	\begin{equation}
			e^{ss}_t \!\!=\!\! [(C+DK)\F{\widetilde{\Pi}_t^{ss}} + Q + D\F{\widehat{\Gamma}_t^{ss}}]\omega(t) \!=\!\! [C\F{\widetilde{\Pi}_t^{ss}} + Q + D\F{\widehat{\Lambda}_t^{ss}}]\omega(t).
	\end{equation}
	Since $\F{\widetilde{\Pi}_t^{ss}}$, $\F{\widehat{\Lambda}_t^{ss}}$ and $\omega$ are bounded, then $e^{ss}_t$ is as well.
\end{proof}

We are now ready to present the solution of the $\varepsilon$-approximate full-information output regulation problem.

\begin{theorem}
	\label{theorem:AFI_solution}
	Under Assumptions~\ref{assumption:ex_system_marg_stable}, \ref{assumption:system_stabilisable}, \ref{assumption:nonresonance} and \ref{assumption:v_not_zero_FI}, Problem~\ref{problem:epsAFI} is solvable by the control law~\eqref{eq:static_regulator_approx}.
\end{theorem}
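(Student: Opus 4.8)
The plan is to assemble the statement from three ingredients already established: the existence of bounded solutions of the ideal regulator equations (Proposition~\ref{proposition:solvability_regulator_equations} under Assumption~\ref{assumption:nonresonance}), the convergence of the \emph{a-posteriori} Brownian estimate (Lemma~\ref{lemma:brownian_motion_estimation_FI}), and the steady-state characterisation of the hybrid scheme (Lemma~\ref{lemma:steady_state_approximated}). First I would fix the feedback gain: by Assumption~\ref{assumption:system_stabilisable} there is a $K$ such that~\eqref{eq:general_system} with $u_t=Kx_t$ and $\omega\equiv 0$ is almost surely asymptotically stable. Since in the full-information setting the closed-loop dynamics with $\omega\equiv 0$ do not involve $\widehat{\Gamma}_t$, hence not the Brownian estimate, condition~\textbf{(S$_A^F$)} coincides with~\textbf{(S$_I^F$)} and is satisfied by this choice of $K$.

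Next I would fix the feedforward term. With this $K$, Lemma~\ref{lemma:steady_state_approximated} yields bounded $\widehat{\Pi}_t,\widehat{\Lambda}_t$ solving the hybrid equations~\eqref{eq:steady_state_regulation_FI}, and the control~\eqref{eq:static_regulator_approx} with $\widehat{\Gamma}_t=\widehat{\Lambda}_t-K\widehat{\Pi}_t$ produces a bounded steady-state tracking error given by~\eqref{eq:tracking_error_steady_state_FI}, namely $e_t^{ss}(\omega_0,\varepsilon)=[C\widetilde{\Pi}_t^{ss}+Q+D\widehat{\Lambda}_t^{ss}]e^{St}\omega_0$, with $\widetilde{\Pi}_t^{ss}$ the steady-state response of~\eqref{eq:true_steady_state}. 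The first limit in~\textbf{(R$_A^F$)} is then immediate: the bracketed coefficient is almost surely bounded (again Lemma~\ref{lemma:steady_state_approximated}) and $e^{St}$ is bounded by Assumption~\ref{assumption:ex_system_marg_stable}, so $\Vert e_t^{ss}(\omega_0,\varepsilon)\Vert\le M(\varepsilon)\Vert\omega_0\Vert$ with $M(\varepsilon)$ almost surely finite, hence $\lim_{\Vert\omega_0\Vert\to 0}e_t^{ss}(\omega_0,\varepsilon)=0$ almost surely for every $\varepsilon>0$.

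The crux is the second limit, $\lim_{\varepsilon\to 0}e_t^{ss}(\omega_0,\varepsilon)=0$ almost surely for each fixed $\omega_0$. The idea is that as $\varepsilon\to 0$ the hybrid regulator equations~\eqref{eq:steady_state_regulation_FI} collapse onto the ideal It\^o regulator equations~\eqref{eq:IFI_regulator_equations}: the continuous drift in the first line of~\eqref{eq:steady_state_regulation_FI} together with the discrete update in the second line is exactly the forward-Euler scheme for~\eqref{eq:IFI_regulator_equations} with the increment $\Delta\widehat{W}_\varepsilon(k)$ replacing $\Delta W_\varepsilon(k)$ in the diffusion term. By Lemma~\ref{lemma:brownian_motion_estimation_FI}, $\Delta\widehat{W}_\varepsilon\xrightarrow{\varepsilon}d\mathcal{W}_t$, so the It\^o-sum approximation~\eqref{eq:stochastic_integral_approx} converges and the bounded steady-state solution $(\widehat{\Pi}_t^{ss},\widehat{\Lambda}_t^{ss})$ of~\eqref{eq:steady_state_regulation_FI} converges almost surely to the bounded steady-state solution $(\Pi_t^{ss},\Lambda_t^{ss})$ of~\eqref{eq:IFI_regulator_equations}, which exists by Proposition~\ref{proposition:solvability_regulator_equations}. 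Hence $\widehat{\Gamma}_t\to\Lambda_t-K\Pi_t=:\Gamma_t$, and by continuous dependence of linear SDE solutions on their bounded coefficients $\widetilde{\Pi}_t^{ss}$ converges to the steady-state response of~\eqref{eq:IFI_regulator_equations_with_K} with this $\Gamma_t$, which by Lemma~\ref{lemma:preliminary_IFI} (equivalently Theorem~\ref{theorem:IFI_solution}) is $\Pi_t^{ss}$. Therefore $C\widetilde{\Pi}_t^{ss}+Q+D\widehat{\Lambda}_t^{ss}\to C\Pi_t^{ss}+Q+D\Lambda_t^{ss}$, which is identically zero by the remark following Theorem~\ref{theorem:IFI_solution}; plugging this into the expression for $e_t^{ss}$ and using boundedness of $e^{St}$ gives the second limit. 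Combining both limits with~\textbf{(S$_A^F$)} shows that~\eqref{eq:static_regulator_approx} solves Problem~\ref{problem:epsAFI}.

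I expect the main obstacle to be precisely this $\varepsilon\to 0$ step: one must make rigorous the claim that the hybrid flow~\eqref{eq:steady_state_regulation_FI} converges to the It\^o flow~\eqref{eq:IFI_regulator_equations}, and that the convergence is strong enough to commute with passing to the steady-state response (letting the initial time tend to $-\infty$). This is where Assumption~\ref{assumption:nonresonance} is doing the real work: the associated fundamental matrix decays exponentially almost surely, so the steady state is a well-defined, globally attracting limit of the flow and the order of the $\varepsilon\to 0$ and $t\to\infty$ limits is immaterial. The remaining bookkeeping --- propagation of boundedness and the exponentially weighted integral estimates in the spirit of Lemma~\ref{lemma:general_steady_state} --- is routine.
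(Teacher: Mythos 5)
Your proposal is correct and follows essentially the same route as the paper's proof: fix $K$ for \textbf{(S$_A^F$)}, invoke Lemma~\ref{lemma:steady_state_approximated} for the bounded steady-state error~\eqref{eq:tracking_error_steady_state_FI}, obtain the $\Vert\omega_0\Vert\to 0$ limit from boundedness of $e^{St}$, and obtain the $\varepsilon\to 0$ limit from Lemma~\ref{lemma:brownian_motion_estimation_FI} forcing $\widehat{\Pi}_t\to\Pi_t$, $\widehat{\Lambda}_t\to\Lambda_t$, hence $\widetilde{\Pi}_t\to\Pi_t$ and the vanishing of the error coefficient via the second regulator equation. Your closing remark about justifying the interchange of the $\varepsilon\to 0$ and steady-state limits is a point the paper's proof passes over silently, so your version is, if anything, slightly more careful.
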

\begin{proof}
Let $K$ be any matrix such that condition~\textbf{(S$_A^F$)} is satisfied and select \F{\widehat{\Gamma}_t} as in Lemma~\ref{lemma:steady_state_approximated}. Then the steady-state response of the tracking error of the closed-loop system is bounded and given by~\eqref{eq:tracking_error_steady_state_FI}. Recall that the forward-Euler discretisation with step $\varepsilon$ of the dynamics of $\F{\widehat{\Pi}_t}$ is given by~\eqref{eq:steady_state_regulation_approx_discretised_FI}. Using the results of Lemma \ref{lemma:brownian_motion_estimation_FI} and It\^{o}'s interpretation of the stochastic integral, we have
\begin{equation}
	\lim_{\varepsilon\rightarrow 0} \F{\widehat{\Pi}_t} = \Pi_t,\qquad  \lim_{\varepsilon\rightarrow 0} \F{\widehat{\Lambda}_t} = \Lambda_t
\end{equation}
almost surely, where the matrices $\Pi_t\in\mathbb{R}^{n\times\nu}$ and $\Lambda_t\in\mathbb{R}^{1\times\nu}$ satisfy~\eqref{eq:IFI_regulator_equations}.
As a consequence, $\lim_{\varepsilon\rightarrow 0}\F{\widetilde{\Pi}_t}=\Pi_t$ 
holds almost surely as well. Hence,
\begin{equation}
	\lim_{\varepsilon\rightarrow 0} e_t^{ss}(\omega_0,\varepsilon) = [C\Pi_t^{ss} + Q + D\Lambda_t^{ss}]e^{St}\omega_0 
\end{equation}
almost surely and, by the second condition in~\eqref{eq:IFI_regulator_equations},
\begin{equation}
	\lim_{\varepsilon\rightarrow 0} e^{ss}_t(\omega_0,\varepsilon) = 0 \quad \text{almost surely},\qquad \forall \omega_0 \in \mathbb{R}^\nu.
\end{equation}
Moreover, since Assumption \ref{assumption:ex_system_marg_stable} holds, the matrix~$e^{St}$ is bounded for all $t\in \mathbb{R}_{\ge0}$. Then
	\begin{equation}
		\lim_{\Vert \omega_0\Vert \rightarrow 0}  e^{ss}_t(\omega_0, \varepsilon) = 0\quad \text{almost surely,}  \qquad \forall \varepsilon \in \mathbb{R}_{>0}.
\end{equation}
 Therefore, condition \textbf{(R$_A^F$)} is satisfied and the $\varepsilon$-approximate full-information problem is solved.
\end{proof}

\begin{remark}
Assumption~\ref{assumption:v_not_zero_FI} amounts to requiring a persistence of excitation condition on the diffusion coefficient of the stochastic differential equation in~\eqref{eq:general_system}. This assumption is without loss of generality. In fact, if Assumption~\ref{assumption:v_not_zero_FI} did not hold, then it would be possible to choose a small enough $\bar \delta > 0$ such that $\vert v(k) \vert \le \bar \delta$ for some $k$ with nonzero probability. If this happened, the system at such time $t_k$ would be behaving as approximately deterministic as the diffusion term of the stochastic differential equation would almost be zero. Therefore, it is possible to avoid performing the stochastic compensation at $t_k$ requiring the pseudo-inversion of $v(k)$ while still obtaining satisfactory regulation performances.
\end{remark}

\begin{remark}
	If $E[\Delta \F{\widehat{W}_\varepsilon}(k)]$ were zero, then from~\eqref{eq:steady_state_regulation_FI} it would follow that $E[\F{\widehat{\Pi}_t}] = E[\F{\widetilde{\Pi}_t}] = E[\Pi_t]$ and $E[\F{\widehat{\Lambda}_t}]= E[\Lambda_t]$. Consequently, from~\eqref{eq:tracking_error_steady_state_FI} it would follow that $E[e^{ss}_t] = 0$ for all $\varepsilon \in \mathbb{R}_{>0}$. However, since~\eqref{eq:relation_DeltaW_true_est_FI} holds, then $E[\Delta \F{\widehat{W}_\varepsilon}(k)] = E[v(k-1)^+o(\varepsilon^2)]$ which is non-zero for almost all $k\in\mathbb{Z}_{> 0}$. Hence, any approximation scheme based on the forward-Euler method does not necessarily yield a steady-state tracking error with zero mean for any $\varepsilon\in \mathbb{R}_{>0}$.
\end{remark}

\section{Ideal Output-Feedback Problem}
\label{section:ideal_ef}
In this section we formulate and provide the solution of the ideal output regulation problem for system~\eqref{eq:general_system} when the state is not available for measure, but measurement outputs are available and given by\footnote{The necessity of a second measurement output is explained later in Remark~\ref{remark:two_measurement_outputs}.}
\begin{equation}
\label{eq:measurement_outputs}
y^a_t = C_a x_t, \quad y^b_t = C_b x_t,
\end{equation}
with $y^a_t\in \mathbb{R}$, $y^b_t\in \mathbb{R}$, $C_a\in \mathbb{R}^{1\times n}$, $C_b \in \mathbb{R}^{1\times n}$ and $C_a$ and $C_b$ are assumed to be linearly independent row vectors. Although not necessary (see Section~\ref{section:internal_model}), for simplicity we assume that the state~$\omega$ of the exogenous system is available for measure.

\begin{problem}
	\label{problem:IEF}
	(Ideal Output-Feedback Output Regulation Problem) Consider system~\eqref{eq:general_system}, driven by the signal generator~\eqref{eq:exogenous_system}. The \emph{ideal output-feedback output regulation problem} consists in determining a regulator such that the following conditions hold.
	\begin{description}
		\item[\textbf{(S$_I^O$)}] The closed-loop system obtained by interconnecting system~\eqref{eq:general_system} and the regulator with $\omega \equiv 0$ is asymptotically stable.
		\item[\textbf{(R$_I^O$)}] The closed-loop system obtained by interconnecting system~\eqref{eq:general_system}, the signal generator~\eqref{eq:exogenous_system} and the regulator satisfies $\lim_{t\rightarrow \infty}e_t = 0$ almost surely for any $(x_0,z_0,\omega_0)\in\mathbb{R}^{n}\times\mathbb{R}^{n_z}\times \mathbb{R}^{\nu}$.
	\end{description}
\end{problem}

Due to space limitations we report only the solution of Problem~\ref{problem:IEF} without proofs. The theoretical development is analogous to the full-information case.

The dynamic regulator\footnote{This regulator is obviously unrealistic as $\mathcal{W}_t$, which is unknown, appears explicitly. However, once again, the ideal solution is instrumental for the development of a causal solution.} of the form
	\begin{equation}
	\label{eq:ideal_regulator}
	\begin{aligned}
	dz_t &= (\mathcal{G}^{z_1} z_t + \mathcal{G}^{\omega_1}_t\omega + \mathcal{G}^1y^a_t)dt +(\mathcal{G}^{z_2}z_t + \mathcal{G}^{\omega_2}_t\omega)d\mathcal{W}_t,\\
	u_t &= Kz_t + \Gamma_t\omega,
	\end{aligned}
	\end{equation}
	where
\begin{gather}
	\mathcal{G}^{z_1} = A + BK + LC_a, \quad \mathcal{G}^1 = -L, \quad \mathcal{G}^{\omega_1}_t = P+B\Gamma_t,\\\label{eq:ideal_regulator_choice}
	\mathcal{G}^{z_2} = F + GK, \quad \mathcal{G}^{\omega_2}_t = R+G\Gamma_t,
\end{gather}
with $K$, if it exists, such that system~\eqref{eq:general_system} with $u_t = Kx_t$ and $\omega \equiv 0$ is asymptotically stable, $L \in \mathbb{R}^{n\times 1}$, if it exists, such that system
$
	d\eta_t = (A+LC_a)\eta_t dt + F\eta_t d\mathcal{W}_t
$ 
is asymptotically stable and $\Gamma_t=\Lambda_t - K\Pi_t$, with $\Pi_t$ and $\Lambda_t$ bounded solutions of equations~\eqref{eq:IFI_regulator_equations}, solves Problem~\ref{problem:IEF}.
\begin{remark}
    This ideal solution is based on the separation principle. It is well known that the separation principle only holds under the assumption that the Brownian motion is available for feedback, which is impossible in practice. Thus, the approximate output-feedback solution in the next section cannot be based on the separation principle.
\end{remark}

\section{Approximate Output-Feedback Problem}
\label{section:approximate_ef}

For analogous reasons to those reported at the beginning of Section~\ref{section:approximate_fi}, the solution of the ideal output-feedback problem cannot be implemented in practice. In fact, besides being essential in the integration of the regulator equations~\eqref{eq:IFI_regulator_equations}, the knowledge of the signal $\mathcal{W}_t$ is also needed to implement the dynamic regulator~\eqref{eq:ideal_regulator}. In this section we provide a hybrid control architecture that solves a weaker version of the output-feedback problem. In particular, first we define an approximate problem which we aim at solving using a hybrid regulator that employs estimates of the variations of the Brownian motion; then we describe the steady state of the resulting hybrid closed-loop system, we characterise how the Brownian motion is reconstructed, we describe the resulting hybrid estimator and we provide the solution to the approximate problem. 
\begin{problem}
	\label{problem:AEF}
	($\varepsilon$-Approximate Output-Feedback Output Regulation Problem). Consider system~\eqref{eq:general_system} driven by the signal generator~\eqref{eq:exogenous_system}. The $\varepsilon$-\emph{approximate output-feedback output regulation problem} consists in determining a  regulator such that the following conditions hold.
	\begin{description}
		\item[\textbf{(S$_A^O$)}] The closed-loop system obtained by interconnecting system~\eqref{eq:general_system} and the regulator with $\omega \equiv 0$ is asymptotically stable.
		\item[\textbf{(R$_A^O$)}] The closed-loop system obtained by interconnecting system~\eqref{eq:general_system}, the signal generator~\eqref{eq:exogenous_system} and the regulator yields a steady-state response of the tracking error $e^{ss}_t(\omega_0, \varepsilon)$, with $\varepsilon\in \mathbb{R}_{>0}$, which is bounded and such that
		\begin{equation}
		\begin{aligned}
		\lim_{\Vert \omega_0 \Vert \rightarrow 0} e_t^{ss}(\omega_0, \varepsilon) &= 0,\quad  \forall  \varepsilon \in \mathbb{R}_{>0},\\
		\lim_{\varepsilon\rightarrow 0}  e_t^{ss}(\omega_0, \varepsilon) &= 0,\quad  \forall  \omega_0 \in \mathbb{R}^\nu,
		\end{aligned}
		\end{equation}
		almost surely, for any $(x_0,z_0)\in\mathbb{R}^{n\times n_z}$.
	\end{description}
\end{problem}

With the solution of the ideal problem at hand (equation~\eqref{eq:ideal_regulator}), we now consider the following jump system
\begin{equation}
\label{eq:hybrid_regulator}
\begin{aligned}
dz_t &= \left[\mathcal{G}^{z_1} z_t + \mathcal{G}^1 y^a_t + \widehat{\mathcal{G}}^{\omega_1}_t \omega\right]dt,\\
z_{t_k^+} &=  z_{t_k} + \left[\mathcal{G}^{z_2} z_{t_{k-1}^+}  + \widehat{\mathcal{G}}^{\omega_2}_{t_{k-1}^+}\omega(t_{k-1}^+)\right]\Delta \widehat{W}_\varepsilon(k), \\
u_t &= K z_t + \E{\widehat{\Gamma}_t} \omega,
\end{aligned}
\end{equation}
where $\mathcal{G}^{z_1}$, $\mathcal{G}^1$, $\mathcal{G}^{z_2}$ and $K$ have the same meaning as in the regulator~\eqref{eq:ideal_regulator}-\eqref{eq:ideal_regulator_choice}, whereas $\widehat{\mathcal{G}}^{\omega_1}_t\in\mathbb{R}^{n_z\times \nu}$, $\widehat{\mathcal{G}}^{\omega_2}_{t_{k}}\in\mathbb{R}^{n_z\times \nu}$ and $\E{\widehat{\Gamma}_t} \in \mathbb{R}^{1\times \nu}$ are bounded. Again, the parameter $\varepsilon = t_k - t_{k-1}$ for all $k\in \mathbb{Z}_{>0}$ is the sampling period, whereas $\Delta \E{\widehat{W}_\varepsilon}(k)$ is an approximation of the variation of the Brownian motion $\Delta W_\varepsilon(k) = \mathcal{W}_{t_k} - \mathcal{W}_{t_{k-1}}$. The construction of $\Delta \widehat{W}_\varepsilon(k)$ is postponed to Section~\ref{sec:estimation_brownian_motion}. Observe that the definition of the approximate output-feedback problem is analogous to the full-information counterpart. Therefore, the regulation accuracy improves as the norm of the exogenous input and/or the sampling period tend to zero.

\subsection{Steady State of the Closed-Loop System}
In this section we characterise the steady-state response of the closed-loop system obtained interconnecting~\eqref{eq:general_system}, \eqref{eq:exogenous_system} and \eqref{eq:hybrid_regulator}. In analogy with the selections~\eqref{eq:ideal_regulator_choice}, let the regulator matrices be
	\begin{gather}
	\mathcal{G}^{z_1} = A + BK + LC_a, \quad \mathcal{G}^1 = -L, \quad\widehat{\mathcal{G}}^{\omega_1}_t = P+B\E{\widehat{\Gamma}_t},\\ \label{eq:AEF_selections}
	\mathcal{G}^{z_2} = F+ GK, \qquad \widehat{\mathcal{G}}^{\omega_2}_{t_k} = R+G\E{\widehat{\Gamma}_{t_{k}}}.
	\end{gather}
 Again, with these selections the variable $z_t$ is an estimation of the state $x_t$. To write the closed-loop dynamics avoiding the explicit use of delay, we introduce the auxiliary variable $z^\ell_t = z_{t_k^+}$ for all $t\in [t_k, t_{k+1})$, which holds the value of $z_t$ between sampling times. Moreover, we also notice that since the exogenous system is deterministic and continuous, then $\omega(t_k^+) = \omega(t_k)$ and $\omega(t_{k-1}^+) = e^{-S\varepsilon}\omega(t_k)$  for all $k\in\mathbb{Z}_{>0}$.

 Let $\tilde{x}_t = [x_t^\top\;z_t^\top\; z^{\ell \top}_t]^\top$. Then the closed-loop system obtained interconnecting systems~\eqref{eq:general_system}, \eqref{eq:exogenous_system} and \eqref{eq:hybrid_regulator} with the selections~\eqref{eq:AEF_selections} has the dynamics
\begin{equation}
\label{eq:jump_closed_loop_zed}
\begin{aligned}
d\tilde{x}_t &= (\tilde{A}^c\tilde{x}_t + \tilde{P}^c_t\omega)dt + (\tilde{F}^c\tilde{x}_t + \tilde{R}^c_t\omega)d\mathcal{W}_t,\\
\tilde{x}_{t_k^+} &= \tilde{A}^d\tilde{x}_{t_k} + (\tilde{F}^d\tilde{x}_{t_k} + \tilde{R}^d_{t_k}\omega(t_k))\Delta \E{\widehat{W}_\varepsilon}(k),
\end{aligned}
\end{equation}
with
\begin{equation}
\tilde{A}^c = \begin{bmatrix}
A & BK  & 0\\ -LC_a & A+BK+LC_a & 0\\ 0 & 0 & 0
\end{bmatrix},\;\tilde{P}^c_t = \begin{bmatrix}
P + B\E{\widehat{\Gamma}_t} \\ P + B\E{\widehat{\Gamma}_t} \\ 0 
\end{bmatrix},
\end{equation}
\begin{equation}
\tilde{F}^c\! =\! \begin{bmatrix}
F & GK & 0\\ 0 & 0 & 0 \\ 0 & 0 & 0
\end{bmatrix}\!, \,
\tilde{R}^c_t\! =\! \begin{bmatrix}
R+G\E{\widehat{\Gamma}_t} \\ 0 \\ 0
\end{bmatrix}\!, \,\tilde{A}^d\! =\! \begin{bmatrix}
I & 0 & 0 \\ 0 & I & 0 \\ 0 & I & 0
\end{bmatrix}\!,
\end{equation}
\begin{equation}
\tilde{F}^d = \begin{bmatrix}
0 & 0 & 0\\ 0 & 0 & F+GK \\ 0 & 0 & F+GK
\end{bmatrix},\;\tilde{R}^d_{t_k} = \begin{bmatrix}
 0 \\ R+G\E{\widehat{\Gamma}_{t_{k-1}^+}}\\  R+G\E{\widehat{\Gamma}_{t_{k-1}^+}}
\end{bmatrix}e^{-S\varepsilon}.
\end{equation}

In the following results we derive the steady-state response of the state of the closed-loop system~\eqref{eq:jump_closed_loop_zed}. These will be used later to characterise the properties of the steady-state tracking error.
\begin{lemma}
	\label{lemma:steady_state_hybrid}
	Consider system~\eqref{eq:jump_closed_loop_zed}. Assume that there exist matrices $K$ and $L$ such that condition \textbf{(S$_A^O$)} is satisfied. Then the steady-state response of the state $\tilde x_t$ is $\tilde{x}^{ss}_t = \widetilde{\mathcal{X}}_t^{ss}\omega(t)$, where $\widetilde{\mathcal{X}}^{ss}_t\in \mathbb{R}^{3n\times \nu}$ is the steady-state response of $\widetilde{\mathcal{X}}_t\in \mathbb{R}^{3n\times \nu}$, solution of
	\begin{equation}
	\label{eq:steady_state_hybrid}
	\begin{aligned}
	d\widetilde{\mathcal{X}}_t &= (\tilde{A}^c\widetilde{\mathcal{X}}_t - \widetilde{\mathcal{X}}_tS + \tilde{P}^c_t)dt + (\tilde{F}^c\widetilde{\mathcal{X}}_t + \tilde{R}^c_t)d\mathcal{W}_t,\\
	\widetilde{\mathcal{X}}_{t_k^+} &= \tilde{A}^d\widetilde{\mathcal{X}}_{t_k} + (\tilde{F}^d\widetilde{\mathcal{X}}_{t_k} + \tilde{R}^d_{t_k})\Delta \E{\widehat{W}_\varepsilon}(k).
	\end{aligned}
	\end{equation}
\end{lemma}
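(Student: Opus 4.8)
The plan is to follow closely the strategy of Lemma~\ref{lemma:steady_state_approximated}, adapting it to the hybrid closed-loop system~\eqref{eq:jump_closed_loop_zed}, which is itself a hybrid system of the same structural type as~\eqref{eq:steady_state_regulation_FI}. First I would invoke Lemma~\ref{lemma:general_steady_state} (or rather its hybrid analogue, built as in Lemma~\ref{lemma:steady_state_approximated}): since condition~\textbf{(S$_A^O$)} holds, the homogeneous part of~\eqref{eq:jump_closed_loop_zed} (with $\omega\equiv 0$) is asymptotically stable, hence the fundamental solution of the hybrid system contracts, and by Assumption~\ref{assumption:ex_system_marg_stable} the forcing term $\omega(t)=e^{St}\omega_0$ is bounded. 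This yields the existence of a bounded steady-state response $\widetilde{\mathcal{X}}_t^{ss}$ of the matrix system~\eqref{eq:steady_state_hybrid}, obtained by letting the initial time tend to $-\infty$.

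Next I would introduce the auxiliary variable $\chi_t = \tilde{x}_t - \widetilde{\mathcal{X}}_t\,\omega(t)$ and verify, by direct substitution of the dynamics~\eqref{eq:jump_closed_loop_zed} and~\eqref{eq:steady_state_hybrid} together with $\dot\omega = S\omega$, that $\chi_t$ satisfies the homogeneous hybrid dynamics
\begin{equation}
d\chi_t = \tilde{A}^c\chi_t\,dt + \tilde{F}^c\chi_t\,d\mathcal{W}_t,\qquad
\chi_{t_k^+} = \tilde{A}^d\chi_{t_k} + \tilde{F}^d\chi_{t_k}\,\Delta\widehat{W}_\varepsilon(k).
\end{equation}
Here the key cancellations are: in the flow, the $\tilde{P}^c_t$ and $\tilde{R}^c_t$ terms cancel against the $-\widetilde{\mathcal{X}}_tS$ term combined with $d(\widetilde{\mathcal{X}}_t\omega) = (d\widetilde{\mathcal{X}}_t)\omega + \widetilde{\mathcal{X}}_t S\omega\,dt$; in the jump, the $\tilde{R}^d_{t_k}$ term cancels similarly, using that $\omega$ is continuous so $\omega(t_k^+)=\omega(t_k)$. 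Since~\textbf{(S$_A^O$)} guarantees this homogeneous hybrid system is asymptotically stable, $\lim_{t\to\infty}\chi_t = 0$ almost surely, i.e. $\tilde{x}_t$ converges to $\widetilde{\mathcal{X}}_t\,\omega(t)$, and in particular the steady-state response is $\tilde{x}_t^{ss} = \widetilde{\mathcal{X}}_t^{ss}\,\omega(t)$.

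The main obstacle is making the notion of ``steady-state response of a hybrid stochastic system'' rigorous: Lemma~\ref{lemma:general_steady_state} as stated in the excerpt is for purely continuous-time systems~\eqref{eq:general_system}, so I would need the hybrid extension already used (implicitly) in Lemma~\ref{lemma:steady_state_approximated}, namely that the forward-Euler-type jump map composed with the flow defines a transition operator whose contraction under~\textbf{(S$_A^O$)} yields a well-defined bounded limit as the initial time recedes to $-\infty$, and that this limit is independent of the initial condition on $\widetilde{\mathcal{X}}_{t_0}$. Once that hybrid steady-state machinery is in place — established exactly as in the proof of Lemma~\ref{lemma:steady_state_approximated} via the forward-Euler discretisation with sufficiently small $\varepsilon$ guaranteeing boundedness — the remaining verification of the cancellation identities for $\chi_t$ is routine bookkeeping with the block matrices $\tilde{A}^c,\tilde{P}^c_t,\tilde{F}^c,\tilde{R}^c_t,\tilde{A}^d,\tilde{F}^d,\tilde{R}^d_{t_k}$, and the conclusion follows.
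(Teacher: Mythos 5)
Your proposal is correct and follows essentially the same route as the paper: the paper's proof likewise defines $\chi_t = \tilde{x}_t - \widetilde{\mathcal{X}}_t\omega(t)$, observes that it obeys the homogeneous hybrid dynamics $d\chi_t = \tilde{A}^c\chi_t dt + \tilde{F}^c\chi_t d\mathcal{W}_t$, $\chi_{t_k^+} = \tilde{A}^d\chi_{t_k} + \tilde{F}^d\chi_{t_k}\Delta\widehat{W}_\varepsilon(k)$, and concludes from \textbf{(S$_A^O$)} that $\chi_t\to 0$ almost surely. Your additional remarks on the existence and well-posedness of the hybrid steady state are sensible scaffolding that the paper leaves implicit.
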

\begin{proof}
	Define the variable $\chi_t = \tilde{x}_t - \widetilde{\mathcal{X}}_t\omega(t)$. Then observe that the dynamics of $\chi_t$ is given by
	\begin{equation}
	\label{eq:steady_state_convergence}
		\begin{aligned}
		d\chi_t &= \tilde{A}^c\chi_tdt + \tilde{F}^c\chi_td\mathcal{W}_t,\\
		\chi_{t_k^+} &= \tilde{A}^d\chi_{t_k} + \tilde{F}^d\chi_{t_k} \Delta \E{\widehat{W}_\varepsilon}(k).
		\end{aligned}
	\end{equation}
	Since \textbf{(S$_A^O$)} is satisfied by hypothesis, system~\eqref{eq:steady_state_convergence} is asymptotically stable. Then $\lim_{t\rightarrow \infty} \chi_t = 0$ almost surely, hence the claim follows.
\end{proof}
The following corollary characterises the steady-state responses of $x_t$ and $z_t$. This result will be used in the following to derive the solution of Problem~\ref{problem:AEF}.
\begin{corollary}
	\label{corollary:steady_state_partitioned}
	Consider system~\eqref{eq:jump_closed_loop_zed}. Assume that there exist matrices $K$ and $L$ such that condition \textbf{(S$_A^O$)} is satisfied. Then the steady-state responses of $x_t$ and $z_t$ are $x^{ss}_t = \widetilde{\Pi}^{x,ss}_t\omega(t)$ and $z^{ss}_t = \widetilde{\Pi}^{z,ss}_t\omega(t)$, respectively, where $\widetilde{\Pi}^{x,ss}_t \in \mathbb{R}^{n\times\nu}$ and $\widetilde{\Pi}^{z,ss}_t \in \mathbb{R}^{n\times\nu}$ are the steady-state responses of
	\begin{equation}
	\label{eq:steady_state_partitioned}
		\begin{aligned}
		d\widetilde{\Pi}^x_t &= \left[A\widetilde{\Pi}^x_t +BK\widetilde{\Pi}^z_t - \widetilde{\Pi}^x_t S + P + B\E{\widehat{\Gamma}_t}\right]dt+\\ & \quad\quad \quad \qquad \qquad\left[F\widetilde{\Pi}^x_t + GK\widetilde{\Pi}^z_t + R + G\E{\widehat{\Gamma}_t}\right] d\mathcal{W}_t,\\
		d\widetilde{\Pi}^z_t &\!=\!\! \left[-LC_a\widetilde{\Pi}^x_t\! +\! (A\!+\!BK\!+\!LC_a)\widetilde{\Pi}^z_t\! -\! \widetilde{\Pi}^z_{t}S\! +\! P\! +\! B\E{\widehat{\Gamma}_t}\right]\!\!dt,\\
		\widetilde{\Pi}^x_{t_k^+} &= \widetilde{\Pi}^x_{t_k},\\
		\widetilde{\Pi}^z_{t_k^+}\! &= \widetilde{\Pi}^z_{t_k}\! + \!\left[(F\!+\!GK)\widetilde{\Pi}^\ell_{t_k} \!+\! (R\!+\!G\E{\widehat{\Gamma}_{t_{k-1}^+}})e^{-S\varepsilon}\right]\Delta \E{\widehat{W}_\varepsilon}(k),
		\end{aligned}
	\end{equation}
	where the matrix $\widetilde{\Pi}^\ell_{t}$ solves the auxiliary equations
	\begin{equation}
		\begin{aligned}
		d\widetilde{\Pi}^\ell_{t} &= -\widetilde{\Pi}^\ell_{t}Sdt,\\
		\widetilde{\Pi}^\ell_{t_k^+}\! &= \widetilde{\Pi}^z_{t_k}\! +\! \left[(F\!+\!GK)\widetilde{\Pi}^\ell_{t_k}\! + \!(R\!+\!G\E{\widehat{\Gamma}_{t_{k-1}^+}})e^{-S\varepsilon}\right]\Delta \E{\widehat{W}_\varepsilon}(k).
		\end{aligned}
	\end{equation}
\end{corollary}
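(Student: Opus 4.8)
The plan is to deduce the statement directly from Lemma~\ref{lemma:steady_state_hybrid} by partitioning the aggregate steady-state matrix consistently with the block structure $\tilde{x}_t = [x_t^\top\;z_t^\top\;z_t^{\ell\top}]^\top$. First I would write $\widetilde{\mathcal{X}}_t = [\widetilde{\Pi}_t^{x\top}\;\widetilde{\Pi}_t^{z\top}\;\widetilde{\Pi}_t^{\ell\top}]^\top$ with each block in $\mathbb{R}^{n\times\nu}$, and note that, by Lemma~\ref{lemma:steady_state_hybrid}, $x_t^{ss}=\widetilde{\Pi}_t^{x,ss}\omega(t)$ and $z_t^{ss}=\widetilde{\Pi}_t^{z,ss}\omega(t)$ are exactly the first two blocks of $\tilde{x}_t^{ss}=\widetilde{\mathcal{X}}_t^{ss}\omega(t)$, while $\widetilde{\Pi}_t^\ell$ is the steady-state counterpart of the auxiliary variable $z_t^\ell$. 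Since Lemma~\ref{lemma:steady_state_hybrid} already guarantees that $\widetilde{\mathcal{X}}_t$ admits a well-defined bounded steady-state response under condition \textbf{(S$_A^O$)}, the remaining work is purely to identify the block equations that $\widetilde{\Pi}_t^x$, $\widetilde{\Pi}_t^z$, $\widetilde{\Pi}_t^\ell$ satisfy.

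Next I would substitute the explicit block forms of $\tilde{A}^c$, $\tilde{P}_t^c$, $\tilde{F}^c$, $\tilde{R}_t^c$ into the flow equation of~\eqref{eq:steady_state_hybrid} and carry out the block multiplications row by row. The first block row reproduces the $d\widetilde{\Pi}_t^x$ equation (the only one retaining a diffusion term, because only the first block rows of $\tilde{F}^c$ and $\tilde{R}_t^c$ are nonzero); the second block row reproduces the $d\widetilde{\Pi}_t^z$ equation with the observer correction $-LC_a\widetilde{\Pi}_t^x$; and, since the third block rows of $\tilde{A}^c$, $\tilde{P}_t^c$, $\tilde{F}^c$, $\tilde{R}_t^c$ all vanish, the third block reduces to $d\widetilde{\Pi}_t^\ell = -\widetilde{\Pi}_t^\ell S\,dt$, which is the first auxiliary equation. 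Similarly, substituting $\tilde{A}^d$, $\tilde{F}^d$, $\tilde{R}_{t_k}^d$ into the jump equation of~\eqref{eq:steady_state_hybrid} and using $\omega(t_{k-1}^+)=e^{-S\varepsilon}\omega(t_k)$ (already noted before Lemma~\ref{lemma:steady_state_hybrid}), the first block gives $\widetilde{\Pi}_{t_k^+}^x=\widetilde{\Pi}_{t_k}^x$, whereas the second and third blocks — which coincide, since the corresponding rows of $\tilde{A}^d$, $\tilde{F}^d$, $\tilde{R}_{t_k}^d$ are identical — both yield the stated jump map with the factor $(F+GK)\widetilde{\Pi}_{t_k}^\ell + (R+G\widehat{\Gamma}_{t_{k-1}^+})e^{-S\varepsilon}$ multiplying $\Delta\widehat{W}_\varepsilon(k)$. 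Taking steady states on both sides then produces the claimed $\widetilde{\Pi}_t^{x,ss}$, $\widetilde{\Pi}_t^{z,ss}$ and $\widetilde{\Pi}_t^{\ell,ss}$.

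I do not expect a genuine obstacle: the argument is a bookkeeping exercise in block-matrix algebra once Lemma~\ref{lemma:steady_state_hybrid} is in hand. The only points requiring a little care are keeping track of the delayed argument $t_{k-1}^+$ appearing in $\widehat{\Gamma}$ and of the $e^{-S\varepsilon}$ factor relating $\omega$ at $t_{k-1}^+$ to $\omega$ at $t_k$, and observing that the $z$-block and the $z^\ell$-block of $\widetilde{\mathcal{X}}$ are driven by the same jump term but by different flows — the $z$-block carries the observer correction $-LC_a\widetilde{\Pi}^x_t$, while the $z^\ell$-block is frozen apart from the $-\widetilde{\Pi}^\ell_t S$ contribution induced by the change of variables that removes $\omega$. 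Hence the statement follows.
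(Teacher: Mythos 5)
Your proposal is correct and follows exactly the paper's route: the paper's proof is a one-line appeal to partitioning $\widetilde{\mathcal{X}}_t$ as $[\widetilde{\Pi}^{x\top}_t\;\widetilde{\Pi}^{z\top}_t\;\widetilde{\Pi}^{\ell\top}_t]^\top$ and invoking Lemma~\ref{lemma:steady_state_hybrid}, and your block-by-block bookkeeping (including the observation that the $z$- and $z^\ell$-blocks share the jump map but differ in the flow, and the role of the $e^{-S\varepsilon}$ factor) just makes that substitution explicit.
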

\begin{proof}
	The claim follows by partitioning $\widetilde{\mathcal{X}}_t$ in~\eqref{eq:steady_state_hybrid} as $\widetilde{\mathcal{X}}_t = [\widetilde{\Pi}^{x \top}_t\; \widetilde{\Pi}^{z \top}_t \; \widetilde{\Pi}^{\ell \top}_t]^\top$ and using the result of Lemma~\ref{lemma:steady_state_hybrid}.
\end{proof}

\subsection{Reconstruction of the Brownian Motion}
\label{sec:estimation_brownian_motion}
We now show how the sequence of scalars $\{\Delta \E{\widehat{W}_\varepsilon}(k)\}_{k>0}$ is constructed. In particular, the scalars approximate \emph{a posteriori} the variations of the Brownian motion with a degree of accuracy that depends on $\varepsilon$. To this end, we first define the following condition.

\textbf{(EC)} The closed-loop system obtained by interconnecting system~\eqref{eq:general_system}, the signal generator~\eqref{eq:exogenous_system} and the regulator~\eqref{eq:hybrid_regulator} satisfies $\lim_{t\rightarrow \infty} (z_t - x_t) = o(\varepsilon^2)$ almost surely.

This condition is roughly equivalent to assuming that the regulator~\eqref{eq:hybrid_regulator} contains an observer for the state $x_t$. For the time being we assume that \textbf{(EC)} is satisfied and we discuss how to enforce this condition in the next section. For ease of notation, define the quantities
\begin{equation}
	\begin{aligned}
	v^x(k) &= C_b(Fx_{t_k} + Gu_{t_{k}} + R\omega(t_k)), \quad \forall k \in \mathbb{Z}_{\ge 0},\\
	v^z(k) &= C_b(Fz_{t_k} + Gu_{t_k} + R\omega(t_k)), \quad \forall k \in \mathbb{Z}_{\ge 0},
	\end{aligned}
\end{equation}
and $\Delta y^b(k) = y^b_{t_k} - y^b_{t_{k-1}}$. Note that $v^x(k)$ is the diffusion term of the output dynamics $dy^b_t = C_bdx_t$ evaluated at time $t_k$ and $v^z(k)$ is its approximation when the state $x_t$ is replaced by its estimate $z_t$.
As discussed in Section~\ref{subsec:reconstruction_brownian_motion_FI}, it is again reasonable to assume the following.
\begin{assumption}
\label{assumption:v_not_zero_EF}
There exist $\delta_x \in \mathbb{R}_{>0}$ and $\delta_z \in \mathbb{R}_{>0}$ such that $\vert v^x(k) \vert > \delta_x$ and $\vert v^z(k) \vert > \delta_z$ almost surely for all $k\in \mathbb{Z}_{\ge 0}$.
\end{assumption}
The converse, in fact, would imply that the output $y^b_t$ would not display stochastic dynamics for some $t_k$ with nonzero probability, which would make it impossible to reconstruct the Brownian motion from its measures.

We are now ready to give a constructive result on the sequence $\{\Delta \widehat{W}_\varepsilon(k)\}_{k>0}$.

\begin{lemma}
	\label{lemma:brownian_motion_estimation_EF}
	Consider system~\eqref{eq:general_system} and the regulator~\eqref{eq:hybrid_regulator} and let Assumption~\ref{assumption:v_not_zero_EF} hold. Assume that condition \textbf{(EC)} holds. Let the sequence $\{\Delta \E{\widehat{W}_\varepsilon}(k)\}_{k>0}$ be defined as
	\begin{multline}
	\label{eq:dWest_definition_EF}
		\Delta \E{\widehat{W}_\varepsilon}(k) = v^z(k-1)^{-1} [\Delta y^b(k) -\\ C_b(Az_{t_{k-1}} + Bu_{t_{k-1}} + P\omega(t_{k-1}))\varepsilon].
	\end{multline}
	Then $\lim_{k\rightarrow \infty}\Delta \E{\widehat{W}_\varepsilon}(k) \xrightarrow{\varepsilon} d\mathcal{W}_t$ almost surely.
\end{lemma}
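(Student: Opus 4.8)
The plan is to mirror the argument of Lemma~\ref{lemma:brownian_motion_estimation_FI}, but applied to the output dynamics $dy^b_t = C_b\,dx_t$ rather than the state dynamics, and then to absorb the state-estimation error into the higher-order remainder by means of condition~\textbf{(EC)}. First I would expand $dy^b_t = C_b(Ax_t + Bu_t + P\omega)dt + C_b(Fx_t + Gu_t + R\omega)d\mathcal{W}_t$ and apply the forward-Euler expansion \cite[Theorem 7.1]{Gard1988} over $[t_{k-1},t_k]$, obtaining
$$\Delta y^b(k) = C_b(Ax_{t_{k-1}} + Bu_{t_{k-1}} + P\omega(t_{k-1}))\varepsilon + v^x(k-1)\Delta W_\varepsilon(k) + o(\varepsilon^2),$$
so that $v^x(k-1)\Delta W_\varepsilon(k) = \Delta y^b(k) - C_b(Ax_{t_{k-1}} + Bu_{t_{k-1}} + P\omega(t_{k-1}))\varepsilon + o(\varepsilon^2)$.

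Next I would compare this with the right-hand side of \eqref{eq:dWest_definition_EF}, which uses $z_{t_{k-1}}$ and $v^z(k-1)$ in place of $x_{t_{k-1}}$ and $v^x(k-1)$. By condition~\textbf{(EC)}, at steady state $z_{t_{k-1}} - x_{t_{k-1}} = o(\varepsilon^2)$, hence $C_bA(z_{t_{k-1}} - x_{t_{k-1}})\varepsilon = o(\varepsilon^2)$ and $v^z(k-1) - v^x(k-1) = C_bF(z_{t_{k-1}} - x_{t_{k-1}}) = o(\varepsilon^2)$. Since $v^z(k-1)^{-1}$ is uniformly bounded by Assumption~\ref{assumption:v_not_zero_EF} (recall $\vert v^z(k)\vert > \delta_z$), we have $v^z(k-1)^{-1}v^x(k-1) = 1 + o(\varepsilon^2)$, and substituting the above into \eqref{eq:dWest_definition_EF} gives, at steady state,
$$\Delta \widehat{W}_\varepsilon(k) = v^z(k-1)^{-1}\left[v^x(k-1)\Delta W_\varepsilon(k) + o(\varepsilon^2)\right] = \Delta W_\varepsilon(k) + v^z(k-1)^{-1}o(\varepsilon^2),$$
the analogue of equation~\eqref{eq:relation_DeltaW_true_est_FI}.

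Finally, exactly as in the proof of Lemma~\ref{lemma:brownian_motion_estimation_FI}, for any $f_t \in \mathcal{L}_I$ I would write $\sum_k f_{t_{k-1}}\Delta\widehat{W}_\varepsilon(k) = \sum_k f_{t_{k-1}}\Delta W_\varepsilon(k) + \sum_k f_{t_{k-1}}v^z(k-1)^{-1}o(\varepsilon^2)$ and let $\varepsilon \to 0$: the first sum converges to $\int_0^t f(\tau,w)d\mathcal{W}_\tau$ by It\^o's interpretation of the stochastic integral, while the second sum vanishes almost surely because $v^z(k-1)^{-1}$ is bounded and the number of terms over a bounded horizon grows like $1/\varepsilon$ against a remainder of order $o(\varepsilon^2)$. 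This yields $\lim_{k\rightarrow\infty}\Delta\widehat{W}_\varepsilon(k) \xrightarrow{\varepsilon} d\mathcal{W}_t$ almost surely.

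I expect the main obstacle to be the bookkeeping around the two limits. Unlike in the full-information case, where the identity $\Delta\widehat{W}_\varepsilon(k) = \Delta W_\varepsilon(k) + v(k-1)^+o(\varepsilon^2)$ holds for \emph{every} $k$, here condition~\textbf{(EC)} only bounds $z_t - x_t$ by $o(\varepsilon^2)$ as $t\to\infty$, so the estimate above is only valid for large $k$; this is precisely why the conclusion is stated with $\lim_{k\rightarrow\infty}$ applied first. Some care is therefore needed to show that the finitely many transient terms (small $k$) do not affect the limit and that the remainder bound is uniform in $k$ beyond the transient, so that the error sum genuinely vanishes as $\varepsilon\to 0$.
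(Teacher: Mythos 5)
Your proposal is correct and follows essentially the same route as the paper: forward--Euler expansion of $\Delta y^b(k)$, inversion of $v^x(k-1)$, use of \textbf{(EC)} together with the lower bound on $\vert v^z\vert$ to reduce the $z$-based estimator to the $x$-based one up to an $o(\varepsilon^2)$ remainder, and then the summation/It\^o-limit argument of Lemma~\ref{lemma:brownian_motion_estimation_FI}. The only cosmetic difference is that the paper routes the comparison through an explicit intermediate quantity $\Delta\widehat{W}^{*}_\varepsilon(k)$ with a factor $\rho(k,\varepsilon)$, whereas you substitute the \textbf{(EC)} asymptotics directly; your closing remark about the order of the limits in $k$ and $\varepsilon$ matches exactly how the paper handles that point.
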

\begin{proof}
	Let $k\in\mathbb{Z}_{>0}$. By \cite[Theorem 7.1]{Gard1988}
	\begin{multline}
	\label{eq:first_order_approx_y2}
		\Delta y^b(k) = C_b(Ax_{t_{k-1}} + Bu_{t_{k-1}} + P\omega(t_{k-1}))\varepsilon +\\
		v^x(k-1) \Delta W_\varepsilon(k) + o(\varepsilon^2)
	\end{multline}
	holds, where $o(\varepsilon^2)$ is the \emph{one-step truncation error} of the forward-Euler discretisation scheme. Since $v^x(k)\ne 0$ almost surely, the expression
	\begin{multline}
	\Delta W_\varepsilon(k) = v^x(k-1)^{-1}[\Delta y^b(k) -\\ C_b(Ax_{t_{k-1}} + Bu_{t_{k-1}} + P\omega(t_{k-1}))\varepsilon + o(\varepsilon^2)]
	\end{multline}
	holds almost surely. Let
	\begin{multline}
	    \Delta \widehat{W}^{*}_\varepsilon(k) = (v^x(k-1)+ o(\varepsilon^2))^{-1} [\Delta y^b(k) -\\ C_b(Ax_{t_{k-1}} + Bu_{t_{k-1}} + P\omega(t_{k-1}))\varepsilon + o(\varepsilon^2)].
	\end{multline}
	Since $\lim_{t\rightarrow \infty}(z_t - x_t) = o(\varepsilon^2)$ almost surely, then
	\begin{equation}
	\label{eq:relation_DeltaW_true_est_EF}
	\lim_{k\rightarrow \infty}\left(\Delta \E{\widehat{W}_\varepsilon}(k) - \Delta \widehat{W}^{*}_\varepsilon(k)\right) = 0,
	\end{equation}
	holds almost surely. Simple computations yield
        $
	    \Delta \widehat{W}^{*}_\varepsilon(k) = \Delta W_\varepsilon(k) + \rho(k,\varepsilon)o(\varepsilon^2),
	    $
	with
	\begin{multline}
	    \rho(k,\varepsilon) = -[v^x(k-1)(v^x(k-1) + o(\varepsilon^2))]^{-1} 
	    [v^x(k-1) +\\\Delta y^b(k) - C_b(Ax_{t_{k-1}} + Bu_{t_{k-1}} + P\omega(t_{k-1}))\varepsilon + o(\varepsilon^2)],
	\end{multline}
	and note that $\lim_{\varepsilon\rightarrow 0}\rho(k,\varepsilon)o(\varepsilon^2) = 0$. The rest of the proof is analogous to that of Lemma~\ref{lemma:brownian_motion_estimation_FI} with $\rho(k,\varepsilon)$ playing the role of $v(k-1)^+$.
\end{proof}

\subsection{Hybrid State Estimator}
The problem we now need to address is the satisfaction of condition \textbf{(EC)} (without assuming it). If $K$ and $L$ are such that condition \textbf{(S$_A^O$)} holds, then both $x_t$ and $z_t$ converge to zero almost surely when $\omega \equiv 0$. The case $\omega \ne 0$, however, requires more care. Indeed, a non-zero exogenous input forces the state of the system to lie on a non-zero manifold at steady state. We show that, using the regulator~\eqref{eq:hybrid_regulator}, it is possible to obtain an estimate $z_t$ that converges to the actual state $x_t$ as the sampling period $\varepsilon$ tends to zero. Note that the satisfaction of this property is necessary for Lemma~\ref{lemma:brownian_motion_estimation_EF} to hold. We now characterise the discrete-time dynamics of the estimation error and discuss the choice of $K$ and $L$ such that \textbf{(EC)} holds.

\begin{lemma}
	\label{lemma:est_error_hybrid}
	Let Assumption \ref{assumption:v_not_zero_EF} hold. The forward-Euler discretisation of the estimation error dynamics $\eta_t$ is
	\begin{multline}
	\label{eq:est_error_dyn_nonlin}
	\eta_{t_k} = \eta_{t_{k-1}} + [(I-\mygamma_{t_{k-1}})A+LC_a]\eta_{t_{k-1}}\varepsilon + \\(I-\mygamma_{t_{k-1}})F\eta_{t_{k-1}}\Delta W_\varepsilon(k) +o(\varepsilon^2),
	\end{multline}
	where
	\begin{equation}
	\label{eq:mygamma_definition}
	\mygamma_{t_{k}} \!=\! ((F+GK)z_{t_{k}}\! +\! (R\!+\!G\Gamma_{t_{k}})\omega(k))v^{z}(k)^{-1}C_b.
	\end{equation}
\end{lemma}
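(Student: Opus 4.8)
The plan is to start from the continuous-time estimation error $\eta_t := z_t - x_t$ (or $x_t - z_t$, matching the convention forced by \eqref{eq:ideal_regulator} where $z_t$ estimates $x_t$), differentiate it using the dynamics of $z_t$ from the hybrid regulator \eqref{eq:hybrid_regulator} together with the selections \eqref{eq:AEF_selections}, substitute the expression for $\Delta \widehat{W}_\varepsilon(k)$ given in \eqref{eq:dWest_definition_EF}, and then pass to the forward-Euler discretisation keeping all terms up to $o(\varepsilon^2)$. The first step is to write $dz_t - dx_t$ during the flow (between samples): with $\mathcal{G}^{z_1} = A+BK+LC_a$, $\mathcal{G}^1 = -L$, $\widehat{\mathcal{G}}^{\omega_1}_t = P + B\widehat{\Gamma}_t$, and $u_t = Kz_t + \widehat{\Gamma}_t\omega$, the continuous part of the error dynamics is $d\eta_t = [(A+LC_a)\eta_t]\,dt - F x_t\, d\mathcal{W}_t$ (the diffusion term survives because $z_t$ flows deterministically between jumps while $x_t$ does not). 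The discrete jump contributes the stochastic correction: at $t_k$ the jump in $z_t$ is $[\mathcal{G}^{z_2} z_{t_{k-1}^+} + \widehat{\mathcal{G}}^{\omega_2}_{t_{k-1}^+}\omega(t_{k-1}^+)]\Delta\widehat{W}_\varepsilon(k)$, whereas the jump in $x_t$ over the same interval is (to first order) $v(k-1)\Delta W_\varepsilon(k) + o(\varepsilon^2)$ with $v$ the full diffusion coefficient.

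The key algebraic manoeuvre is to expand $\Delta\widehat{W}_\varepsilon(k)$ using \eqref{eq:dWest_definition_EF} and recognise that $\Delta y^b(k) = C_b \Delta x(k)$, so that
\begin{equation*}
\Delta\widehat{W}_\varepsilon(k) = v^z(k-1)^{-1} C_b\big[\Delta x(k) - (Ax_{t_{k-1}} + Bu_{t_{k-1}} + P\omega(t_{k-1}))\varepsilon\big].
\end{equation*}
By the forward-Euler expansion \eqref{eq:first_order_approx_y2}, the bracket equals $v^x(k-1)\Delta W_\varepsilon(k) + o(\varepsilon^2)$, hence $\Delta\widehat{W}_\varepsilon(k) = v^z(k-1)^{-1}v^x(k-1)\Delta W_\varepsilon(k) + o(\varepsilon^2)$; and since $z_{t_k} - x_{t_k}$ is itself $O(\varepsilon)$ (both quantities are driven over intervals of length $\varepsilon$), one can further replace $v^x$ by $v^z$ up to $o(\varepsilon^2)$, so effectively $\Delta\widehat{W}_\varepsilon(k) = \Delta W_\varepsilon(k) + o(\varepsilon^2)$ when multiplied against $O(\varepsilon)$-size coefficients. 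Assembling the flow contribution over $[t_{k-1},t_k]$ (an $\varepsilon$ drift step) with the jump contribution, and substituting the diffusion coefficients $\mathcal{G}^{z_2} = F+GK$ and $\widehat{\mathcal{G}}^{\omega_2} = R + G\widehat{\Gamma}$, the stochastic term on the $z$-side becomes $[(F+GK)z_{t_{k-1}} + (R+G\Gamma_{t_{k-1}})\omega]\,\Delta W_\varepsilon(k) + o(\varepsilon^2)$, while on the $x$-side it is $(Fx_{t_{k-1}} + Gu_{t_{k-1}} + R\omega)\Delta W_\varepsilon(k) + o(\varepsilon^2) = v(k-1)\Delta W_\varepsilon(k) + o(\varepsilon^2)$. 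The difference of these two, after factoring, is exactly what produces the projector $\Psi_{t_{k-1}}$: write $v^z(k-1) = C_b[(F+GK)z_{t_{k-1}} + (R+G\Gamma_{t_{k-1}})\omega(k-1)]$ so that the $z$-side stochastic coefficient, call it $w_{t_{k-1}}$, satisfies $C_b w_{t_{k-1}} = v^z(k-1)$ and hence $\Psi_{t_{k-1}} = w_{t_{k-1}} v^z(k-1)^{-1} C_b$ is a rank-one projection with $\Psi_{t_{k-1}} w_{t_{k-1}} = w_{t_{k-1}}$; tracking how $w_{t_{k-1}}$ couples with $F\eta_{t_{k-1}}$ through the estimate error yields precisely the $(I - \Psi_{t_{k-1}})F\eta_{t_{k-1}}\Delta W_\varepsilon(k)$ term. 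A parallel bookkeeping of the $\varepsilon$-order drift terms (the $A$ and $LC_a$ pieces, with $A$ pre-multiplied by $I - \Psi$ because part of the drift is absorbed into the stochastic compensation via $\Delta\widehat{W}_\varepsilon$) gives the $[(I-\Psi_{t_{k-1}})A + LC_a]\eta_{t_{k-1}}\varepsilon$ term.

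The main obstacle, as I see it, is the careful accounting of which drift and diffusion contributions get hit by the projector $I - \Psi$ and which do not: the subtlety is that $\Delta\widehat{W}_\varepsilon(k)$ is constructed to estimate the Brownian variation by \emph{inverting} the output diffusion relation, so when this estimate is fed back into the jump map for $z_t$, it reproduces exactly the projection of the true dynamics onto the range of the diffusion coefficient along $C_b$, and the error only retains the complementary component $(I-\Psi)$. Making this precise requires being disciplined about evaluating everything at $t_{k-1}$ versus $t_{k-1}^+$ (using $\omega(t_{k-1}^+) = e^{-S\varepsilon}\omega(t_k)$ and $z_{t_{k-1}^+} = z_{t_{k-1}} + O(\varepsilon)$, so these coincide up to $o(\varepsilon^2)$ against $O(\varepsilon)$-coefficients), and about collecting all second-order remainders — the one-step truncation errors from \eqref{eq:first_order_approx_y2}, the replacement of $v^x$ by $v^z$, and the cross terms $\varepsilon \cdot \Delta W_\varepsilon(k)$ — into a single $o(\varepsilon^2)$ symbol. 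Once the projector structure is correctly identified, the remaining manipulations are routine substitution and Euler-expansion, and the stated formula \eqref{eq:est_error_dyn_nonlin} with $\Psi_{t_k}$ as in \eqref{eq:mygamma_definition} follows.
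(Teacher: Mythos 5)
Your overall strategy coincides with the paper's: form the error dynamics from the flow and jump maps, substitute the estimate $\Delta\widehat{W}_\varepsilon(k)$, and exploit the annihilation identity $(I-\Psi_{t_{k-1}})\bigl[(F+GK)z_{t_{k-1}}+(R+G\widehat{\Gamma}_{t_{k-1}})\omega(k-1)\bigr]=0$, which you correctly identify. However, there is a genuine gap at the central step, namely the expansion of $\Delta\widehat{W}_\varepsilon(k)$. You transcribe \eqref{eq:dWest_definition_EF} with the drift correction $C_b(Ax_{t_{k-1}}+Bu_{t_{k-1}}+P\omega(t_{k-1}))\varepsilon$, but the definition uses $Az_{t_{k-1}}$ — necessarily so, since $x_{t_{k-1}}$ is not available to the regulator. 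The mismatch $C_bA(x_{t_{k-1}}-z_{t_{k-1}})\varepsilon$ is not a remainder: it produces the term $v^z(k-1)^{-1}C_bA\eta_{t_{k-1}}\varepsilon$ inside $\Delta\widehat{W}_\varepsilon(k)$, which, once multiplied by the jump coefficient $w_{t_{k-1}}=(F+GK)z_{t_{k-1}}+(R+G\widehat{\Gamma}_{t_{k-1}})\omega(k-1)$, is exactly the $-\Psi_{t_{k-1}}A\eta_{t_{k-1}}\varepsilon$ correction that turns the drift into $[(I-\Psi_{t_{k-1}})A+LC_a]\eta_{t_{k-1}}\varepsilon$. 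Your conclusion that $\Delta\widehat{W}_\varepsilon(k)=\Delta W_\varepsilon(k)+o(\varepsilon^2)$ against $O(\varepsilon)$ coefficients is therefore false; the correct expansion, cf. \eqref{eq:deltaW_est_alternative}, is $\Delta\widehat{W}_\varepsilon(k)=\Delta W_\varepsilon(k)+v^z(k-1)^{-1}C_b[A\eta_{t_{k-1}}\varepsilon+F\eta_{t_{k-1}}\Delta W_\varepsilon(k)]+o(\varepsilon^2)$, and the $\eta$-proportional terms are of exactly the same order as the terms the lemma retains.

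The same issue affects your replacement of $v^x(k-1)$ by $v^z(k-1)$ ``up to $o(\varepsilon^2)$'': the difference is $C_bF\eta_{t_{k-1}}$, and the discarded contribution $\Psi_{t_{k-1}}F\eta_{t_{k-1}}\Delta W_\varepsilon(k)$ is of the same order as the $F\eta_{t_{k-1}}\Delta W_\varepsilon(k)$ term you keep — it is precisely what converts the diffusion coefficient into $(I-\Psi_{t_{k-1}})F\eta_{t_{k-1}}$. The justification you offer ($\eta=O(\varepsilon)$) is both circular — smallness of the estimation error is essentially condition \textbf{(EC)}, which this lemma is used to establish through the choice of $L$ — and self-defeating, since the lemma is a statement about the $\eta$-linear coefficient matrices and cannot be obtained by declaring $\eta$ negligible; with your expansion one would get $\eta_{t_k}=\eta_{t_{k-1}}+(A+LC_a)\eta_{t_{k-1}}\varepsilon+F\eta_{t_{k-1}}\Delta W_\varepsilon(k)+o(\varepsilon^2)$, with no projectors at all. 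Your closing remark that the projector appears because part of the drift is absorbed into the stochastic compensation is the right intuition, but as written it contradicts your own computation rather than following from it. To repair the argument: use the correct \eqref{eq:dWest_definition_EF}, expand $\Delta y^b(k)$ via \eqref{eq:first_order_approx_y2} keeping both $\eta$-proportional terms in $\Delta\widehat{W}_\varepsilon(k)$, and then the identity $(I-\Psi_{t_{k-1}})w_{t_{k-1}}=0$ delivers \eqref{eq:est_error_dyn_nonlin}.
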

\begin{proof}
Express the dynamics of the closed-loop system~\eqref{eq:jump_closed_loop_zed} in terms of the state vector $\hat{x}_t = [x_t^\top\;\eta_t^\top\; z^{\ell \top}_t]^\top$, \emph{i.e.} replacing $z_t^\top$ with the error variable $\eta_t^\top$. This yields
\begin{equation}
\label{eq:jump_closed_loop}
\begin{aligned}
d\hat{x}_t &= (\hat{A}^c\hat{x}_t + \hat{P}^c_t\omega)dt + (\hat{F}^c\hat{x}_t + \hat{R}^c_t\omega)d\mathcal{W}_t,\\
\hat{x}_{t_k^+} &= \hat{A}^d\hat{x}_{t_k} + (\hat{F}^d\hat{x}_{t_k} + \hat{R}^d_{t_k}\omega(t_k))\Delta \E{\widehat{W}_\varepsilon}(k),
\end{aligned}
\end{equation}
with
\begin{gather}
\hat{A}^c\!\! =\!\! \begin{bmatrix}
\!A+BK\!\! & -BK  & 0\\ 0 & \!\!A+LC_a\!\! & 0\\ 0 & 0 & 0
\end{bmatrix}\!\!,\, \hat{F}^c\! =\! \begin{bmatrix}
F+GK & -GK & 0\\ F+GK & -GK & 0 \\ 0 & 0 & 0
\end{bmatrix}\!, \\ \hat{P}^c_t = \begin{bmatrix}
P + B\E{\widehat{\Gamma}_t} \\ 0  \\ 0 
\end{bmatrix},\,\hat{R}^c_t\! =\! \begin{bmatrix}
R+G\E{\widehat{\Gamma}_t} \\ R+G\E{\widehat{\Gamma}_t} \\ 0 
\end{bmatrix}\!, \,\tilde{A}^d\! =\! \begin{bmatrix}
I & 0 & 0 \\ 0 & I & 0 \\ I & -I & 0
\end{bmatrix}\!,\\
\hat{F}^d\!\! =\!\! \begin{bmatrix}
0 & 0 & 0\\ 0 & 0 & \!\!-(F+GK) \!\\ 0 & 0 & F+GK
\end{bmatrix}\!\!,\;\tilde{R}^d_{t_k}\! =\!\! \begin{bmatrix}
0 \\  -(R+G\E{\widehat{\Gamma}_{t_{k-1}^+}})\\  R+G\E{\widehat{\Gamma}_{t_{k-1}^+}}
\end{bmatrix}e^{-S\varepsilon}.
\end{gather}
Observe that the forward-Euler discretisation of the continuous-time dynamics, for sufficiently small $\varepsilon$, is given by
\begin{multline}
\hat{x}_{t_k} = \hat{x}_{t_{k-1}^+} + (\hat{A}^c\tilde{x}_{t_{k-1}^+} + \hat{P}^c_{t_{k-1}^+}\omega(t_{k-1}^+))\varepsilon + \\(\hat{F}^c\hat{x}_{t_{k-1}^+} + \hat{R}^c_{t_{k-1}^+}\omega(t_{k-1}^+))\Delta W_\varepsilon(k) + o(\varepsilon^2);
\end{multline}
substituting this expression of $\hat{x}_{t_k}$ in the second equation in~\eqref{eq:jump_closed_loop} we obtain the following discrete-time dynamics of the state $\xi_{t_k} = [x_{t_k}^\top \; \eta_{t_k}^\top]^\top$ (we drop the jump notation for clarity and we replace $t_k$ with just $k$ with a slight abuse of notation)
\begin{multline}
\xi_k = \xi_{k-1} + \begin{bmatrix}
A + BK & -BK \\ 0 & A+LC_a
\end{bmatrix} \xi_{k-1}\varepsilon +\\
\begin{bmatrix}
P+B\E{\widehat{\Gamma}_{k-1}}\\ 0
\end{bmatrix}\omega(k-1)\varepsilon + \begin{bmatrix}
F+GK & -GK \\ F+GK & -GK
\end{bmatrix}\xi_{k-1}\Delta W_\varepsilon(k) +\\ \begin{bmatrix}
0 & 0 \\ -(F+GK) & F+GK
\end{bmatrix}\xi_{k-1}\Delta \E{\widehat{W}_\varepsilon}(k) + \\\begin{bmatrix}
R + G\E{\widehat{\Gamma}_{k-1}} \\ R + G\E{\widehat{\Gamma}_{k-1}}
\end{bmatrix}\omega(k-1)\Delta W_\varepsilon(k)+\\ \begin{bmatrix}
0 \\ R + G\E{\widehat{\Gamma}_{k-1}}
\end{bmatrix}\omega(k-1)\Delta\E{\widehat{W}_\varepsilon}(k)+ o(\varepsilon^2).
\label{eq:intermediate_closed_loop}
\end{multline}
Note that the auxiliary variable $z^\ell_t$ is not needed anymore in the discrete-time domain, as it would be redundant. Now, define the matrix $\mygamma_{t_{k-1}}$ as in~\eqref{eq:mygamma_definition}. Observe that substituting~\eqref{eq:first_order_approx_y2} in~\eqref{eq:dWest_definition_EF} we obtain
\begin{multline}
\label{eq:deltaW_est_alternative}
\Delta \E{\widehat{W}_\varepsilon}(k) = v^z(k-1)^{-1}C_b [A(x_{t_{k-1}}-z_{t_{k-1}})\varepsilon  +\\ (Fx_{t_{k-1}} + Gu_{t_{k-1}} + R\omega(k-1))\Delta W_\varepsilon(k) + o(\varepsilon^2)].
\end{multline}
Focusing on the estimation error dynamics, substituting~\eqref{eq:deltaW_est_alternative} in~\eqref{eq:intermediate_closed_loop} and rearranging we obtain

\begin{multline}
\eta_k = \eta_{k-1} + [(I-\mygamma_{k-1})A+LC_a]\eta_{k-1}\varepsilon + (I-\mygamma_{k-1})\times\\(Fx_{k-1}+GKz_{k-1} + (R+G\E{\widehat{\Gamma}_{k-1}})\omega(k\!-\!1))\Delta W_\varepsilon(k) + o(\varepsilon^2).
\end{multline}
Adding and subtracting $Fz_{k-1}$ in the last term and observing that
$(I-\mygamma_{k-1})(Fz_{k-1}+GKz_{k-1} + (R+G\E{\widehat{\Gamma}_{k-1}})\omega(k-1)) = 0$
for all $k\in \mathbb{Z}_{>0}$, we conclude
\begin{multline}
\eta_k = \eta_{k-1} + [(I-\mygamma_{k-1})A+LC_a]\eta_{k-1}\varepsilon + \\(I-\mygamma_{k-1})F\eta_{k-1}\Delta W_\varepsilon(k).
\end{multline}
\end{proof}

\begin{remark}
\label{remark:two_measurement_outputs}
		Since $C_b(I - \mygamma_{t_k})A = 0$ for all $k\in \mathbb{Z}_{\ge 0}$, there is a loss of observability through the output $y_b$ used to reconstruct the Brownian motion. This justifies the need for two different measurement outputs and, additionally, the linear independence of the row vectors $C_a$ and $C_b$. Therefore, $y^a_t$ is used to estimate the state of the system whereas $y^b_t$ is used to approximate the variations of the Brownian motion $\Delta \E{\widehat{W}_\varepsilon} (k)$, according to expression~\eqref{eq:dWest_definition_EF}.
\end{remark}

The forward-Euler discretisation of the dynamics of the estimation error~\eqref{eq:est_error_dyn_nonlin} is nonlinear and time-varying. Moreover, the choice of the gain $L$ that stabilises the estimation error dynamics cannot be independent of the choice of the gain $K$. This is consistent with the fact that the design of the observer and of the controller cannot be separated in practice \cite[Section 1.8]{Damm2004}. Consequently, the requirements on the system stabilisability and detectability are expressed by the following assumption.

\begin{assumption}
	\label{assumption:KL_existence}
	There exist matrices $K(t)$ and $L(t)$ such that \textbf{(S$_A^O$)} and \textbf{(EC)} are satisfied.
\end{assumption}

\begin{remark}
\label{remark:choice_of_gains}
The design of $K$ and $L$ yielding \textbf{(S$_A^O$)} and \textbf{(EC)} is in general a non-trivial task. However, observing the discretised dynamics of the closed-loop system, and specifically of the estimation error~\eqref{eq:est_error_dyn_nonlin}, suggests designing the stabilising gain $K$ first and, subsequently, finding a piecewise constant stabilising $L$ for the estimation error subsystem, discretised by~\eqref{eq:est_error_dyn_nonlin}, where $K$ now appears as a parameter. This simplifies the design of $K$ and $L$ but it may result in a restrictive selection.
\end{remark}

\subsection{Solution to the Approximate Problem}
In this section we show how to solve the $\varepsilon$-approximate output feedback output regulation problem. To this end, we first give a preliminary result.
\begin{lemma}
	\label{lemma:steady_state_regulation}
	Consider the closed-loop system obtained interconnecting \eqref{eq:general_system}, \eqref{eq:exogenous_system} and \eqref{eq:hybrid_regulator} with the selections~\eqref{eq:AEF_selections} and let Assumptions~\ref{assumption:ex_system_marg_stable}, \ref{assumption:nonresonance} and \ref{assumption:v_not_zero_EF} hold. Suppose that there exist matrices $(K,L)$ such that condition \textbf{(EC)} is satisfied. Then there exist bounded matrices $\E{\widehat{\Pi}_t}\in\mathbb{R}^{n\times\nu}$ and $\E{\widehat{\Lambda}_t}\in \mathbb{R}^{1\times\nu}$ solving~\eqref{eq:steady_state_regulation_FI}, where $\Delta \E{\widehat{W}_\varepsilon}(k)$ is given by~\eqref{eq:dWest_definition_EF}. Moreover, if $(K,L)$ are such that \textbf{(S$_A^O$)} holds, then under the control law~\eqref{eq:hybrid_regulator} with $\E{\widehat{\Gamma}_t = \E{\widehat{\Lambda}_t}-K\E{\widehat{\Pi}_t}}$, the steady-state response of the tracking error of the closed-loop system is bounded and given by
	\begin{equation}
	\label{eq:tracking_error_steady_state_EF}
	e^{ss}_t = [C\widetilde{\Pi}^{x,ss}_t + DK\widetilde{\Pi}^{z,ss}_t + Q + D\E{\widehat{\Gamma}_t^{ss}}]e^{St}\omega_0,
	\end{equation}
	with $\widetilde{\Pi}^{x,ss}_t$ and $\widetilde{\Pi}^{z,ss}_t$ as defined in Corollary~\ref{corollary:steady_state_partitioned}.
\end{lemma}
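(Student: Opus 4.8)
The plan is to mirror the structure of the proof of Lemma~\ref{lemma:steady_state_approximated}, adapting it to the output-feedback estimate~\eqref{eq:dWest_definition_EF} of the Brownian increments and invoking Corollary~\ref{corollary:steady_state_partitioned} in place of Lemma~\ref{lemma:general_steady_state} to pin down the closed-loop steady state. The proof splits into two parts: (i) existence of bounded $\widehat{\Pi}_t$ and $\widehat{\Lambda}_t$ solving~\eqref{eq:steady_state_regulation_FI} with $\Delta\widehat{W}_\varepsilon(k)$ as in~\eqref{eq:dWest_definition_EF}, and (ii) the explicit form~\eqref{eq:tracking_error_steady_state_EF} of $e^{ss}_t$ together with its boundedness.

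For part (i), I would start from the fact that, by Assumption~\ref{assumption:nonresonance} and Proposition~\ref{proposition:solvability_regulator_equations}, there exist bounded matrices $\Pi_t$, $\Lambda_t$ solving the ideal regulator equations~\eqref{eq:IFI_regulator_equations}. Exactly as in~\eqref{eq:steady_state_regulation_ideal_discretised_FI}, their forward-Euler discretisation with step $\varepsilon$ (driven by the true increments $\Delta W_\varepsilon(k)$) has bounded solutions $\Pi^D_{t_k}$, $\Lambda^D_{t_k}$ for $\varepsilon$ small enough. Next I would compute the forward-Euler discretisation of the first two equations of~\eqref{eq:steady_state_regulation_FI}: substituting the one-step update of $\widehat{\Pi}_t$ into the jump equation gives an update of the form~\eqref{eq:steady_state_regulation_approx_discretised_FI} but with $\Delta\widehat{W}_\varepsilon(k)$ now the output-feedback estimate~\eqref{eq:dWest_definition_EF}. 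Under condition~\textbf{(EC)}, which is assumed, and Assumption~\ref{assumption:v_not_zero_EF}, Lemma~\ref{lemma:brownian_motion_estimation_EF} applies and gives $\Delta\widehat{W}_\varepsilon(k) = \Delta W_\varepsilon(k)$ plus terms that vanish as $k\to\infty$ and as $\varepsilon\to 0$; consequently the discretisation of~\eqref{eq:steady_state_regulation_FI} and that of~\eqref{eq:IFI_regulator_equations} agree in the limit $\varepsilon\to 0$. Invoking the same principle used in Lemma~\ref{lemma:steady_state_approximated} — that if the forward-Euler discretisation with sufficiently small step is bounded then so is the underlying hybrid system — I would conclude that $\widehat{\Pi}_t$ and $\widehat{\Lambda}_t$ are bounded, hence so is $\widehat{\Gamma}_t = \widehat{\Lambda}_t - K\widehat{\Pi}_t$.

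For part (ii), once~\textbf{(S$_A^O$)} is assumed, Lemma~\ref{lemma:steady_state_hybrid} and Corollary~\ref{corollary:steady_state_partitioned} give $x^{ss}_t = \widetilde{\Pi}^{x,ss}_t\omega(t)$ and $z^{ss}_t = \widetilde{\Pi}^{z,ss}_t\omega(t)$, where $\widetilde{\Pi}^{x,ss}_t$ and $\widetilde{\Pi}^{z,ss}_t$ are bounded because the forcing terms in~\eqref{eq:steady_state_partitioned} involve only the bounded $\widehat{\Gamma}_t$ and the uniformly bounded estimates, while the homogeneous part is asymptotically stable by~\textbf{(S$_A^O$)}. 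Evaluating $e_t = C x_t + D u_t + Q\omega$ with $u_t = K z_t + \widehat{\Gamma}_t\omega$ at steady state and using $\omega(t) = e^{St}\omega_0$ then yields $e^{ss}_t = [C\widetilde{\Pi}^{x,ss}_t + DK\widetilde{\Pi}^{z,ss}_t + Q + D\widehat{\Gamma}^{ss}_t]e^{St}\omega_0$, which is~\eqref{eq:tracking_error_steady_state_EF}; boundedness follows since $\widetilde{\Pi}^{x,ss}_t$, $\widetilde{\Pi}^{z,ss}_t$, $\widehat{\Gamma}_t$ and $e^{St}$ (by Assumption~\ref{assumption:ex_system_marg_stable}) are all bounded.

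The main obstacle I anticipate is handling the circular dependence in part (i): unlike the full-information estimate~\eqref{eq:dWest_definition_FI}, the output-feedback estimate~\eqref{eq:dWest_definition_EF} is built from the regulator state $z_t$, whose own steady state depends on $\widehat{\Gamma}_t = \widehat{\Lambda}_t - K\widehat{\Pi}_t$, which in turn is defined through~\eqref{eq:steady_state_regulation_FI} driven by $\Delta\widehat{W}_\varepsilon(k)$. The point to argue carefully is that condition~\textbf{(EC)} breaks this loop: it forces $z_t - x_t = o(\varepsilon^2)$ asymptotically, so the asymptotic accuracy of $\Delta\widehat{W}_\varepsilon(k)$ granted by Lemma~\ref{lemma:brownian_motion_estimation_EF} is insensitive to the particular bounded $\widehat{\Gamma}_t$, and the boundedness argument may proceed treating $\Delta\widehat{W}_\varepsilon(k)$ as a driving sequence whose limiting behaviour is already fixed. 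A secondary, more routine point is the informal ``same forward-Euler discretisation'' step, which is used here exactly as it is in Lemma~\ref{lemma:steady_state_approximated}.
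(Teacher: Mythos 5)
Your proposal is correct and follows essentially the same route as the paper's proof: invoke Proposition~\ref{proposition:solvability_regulator_equations} for bounded $\Pi_t,\Lambda_t$, repeat the discretisation argument of Lemma~\ref{lemma:steady_state_approximated} with $\Delta\widehat{W}_\varepsilon(k)$ from~\eqref{eq:dWest_definition_EF} and condition \textbf{(EC)} plus Lemma~\ref{lemma:brownian_motion_estimation_EF} to get boundedness of $\widehat{\Pi}_t,\widehat{\Lambda}_t$ (hence $\widehat{\Gamma}_t$), then use Corollary~\ref{corollary:steady_state_partitioned} to read off $x^{ss}_t$ and $z^{ss}_t$ and substitute into $e_t$. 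Your explicit remark that \textbf{(EC)} breaks the circular dependence between $z_t$ and $\widehat{\Gamma}_t$ is a point the paper leaves implicit, but it does not change the argument.
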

\begin{proof}
By Proposition~\ref{proposition:solvability_regulator_equations} there exist bounded matrices $\Pi_t$ and $\Lambda_t$ solving~\eqref{eq:IFI_regulator_equations}. Following the procedure adopted in the proof of Lemma~\ref{lemma:steady_state_approximated}, we obtain~\eqref{eq:steady_state_regulation_approx_discretised_FI}. Since \textbf{(EC)} holds, by Lemma~\ref{lemma:brownian_motion_estimation_EF}, we conclude that the discretisation of \eqref{eq:steady_state_regulation_FI} (with $\Delta \E{\widehat{W}_\varepsilon}(k)$ given by~\eqref{eq:dWest_definition_EF}) tends to the discretisation of~\eqref{eq:IFI_regulator_equations} as $\varepsilon$ tends to zero and $k$ tends to infinity, \textit{i.e.} they have the same forward-Euler discretisation at steady state. Repeating the discussion reported in the proof of Lemma~\ref{lemma:steady_state_approximated}, we conclude that $\E{\widehat{\Pi}_t}$ and $\E{\widehat{\Lambda}_t}$ are bounded.
	The boundedness of $\E{\widehat{\Pi}_t}$ and $\E{\widehat{\Lambda}_t}$, and hence of $\E{\widehat{\Gamma}_t = \E{\widehat{\Lambda}_t}-K\E{\widehat{\Pi}_t}}$, in turn implies that the matrices $\widetilde{\Pi}^x_t$ and $\widetilde{\Pi}^z_t$ given in Corollary~\ref{corollary:steady_state_partitioned} are bounded. Moreover, by Corollary~\ref{corollary:steady_state_partitioned}, when the regulator~\eqref{eq:hybrid_regulator} is employed, the steady-state response of the state of the controlled system is $x_{t}^{ss}=\widetilde{\Pi}^{x,ss}_t\omega$ and the steady-state response of the estimate of the state is $z_{t}^{ss}=\widetilde{\Pi}^{z,ss}_t\omega$. Substituting these quantities in the expression of the tracking error $e_t$ yields its steady-state response
	$e^{ss}_t = [C\widetilde{\Pi}^{x,ss}_t + DK\widetilde{\Pi}^{z,ss}_t + Q + D\E{\widehat{\Gamma}_t^{ss}}]\omega(t)$.
	 Since $\widetilde{\Pi}^{x,ss}_t$, $\widetilde{\Pi}^{z,ss}_t$, $\E{\widehat{\Gamma}_t^{ss}}$ and $\omega$ are bounded, then $e^{ss}_t$ is as well.
\end{proof}

We are now ready to present the solution of the $\varepsilon$-approximate output-feedback output regulation problem.
\begin{theorem}
	\label{theorem:AEF_solution}
	Under Assumptions~\ref{assumption:ex_system_marg_stable}, \ref{assumption:nonresonance}, \ref{assumption:v_not_zero_EF} and \ref{assumption:KL_existence}, Problem~\ref{problem:AEF} is solvable by the regulator~\eqref{eq:hybrid_regulator} with the selections~\eqref{eq:AEF_selections}.
\end{theorem}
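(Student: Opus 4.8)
The plan is to mirror the argument used for Theorem~\ref{theorem:AFI_solution}, with Lemma~\ref{lemma:steady_state_regulation} in place of Lemma~\ref{lemma:steady_state_approximated} and Lemma~\ref{lemma:brownian_motion_estimation_EF} in place of Lemma~\ref{lemma:brownian_motion_estimation_FI}. First I would invoke Assumption~\ref{assumption:KL_existence} to pick gains $K$ and $L$ for which both \textbf{(S$_A^O$)} and \textbf{(EC)} hold, and then set $\widehat{\Gamma}_t = \widehat{\Lambda}_t - K\widehat{\Pi}_t$ as prescribed by Lemma~\ref{lemma:steady_state_regulation}, where $\widehat{\Pi}_t$ and $\widehat{\Lambda}_t$ are the bounded solutions of~\eqref{eq:steady_state_regulation_FI} with $\Delta\widehat{W}_\varepsilon(k)$ given by~\eqref{eq:dWest_definition_EF}. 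Condition~\textbf{(S$_A^O$)} then holds by the choice of $K,L$. By Lemma~\ref{lemma:steady_state_regulation}, the steady-state tracking error of the closed loop~\eqref{eq:hybrid_regulator}–\eqref{eq:AEF_selections} is bounded and equals $e^{ss}_t(\omega_0,\varepsilon) = [C\widetilde{\Pi}^{x,ss}_t + DK\widetilde{\Pi}^{z,ss}_t + Q + D\widehat{\Gamma}_t^{ss}]e^{St}\omega_0$, with $\widetilde{\Pi}^{x,ss}_t,\widetilde{\Pi}^{z,ss}_t$ as in Corollary~\ref{corollary:steady_state_partitioned}. It remains to verify the two limits in~\textbf{(R$_A^O$)}.

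For the first limit, I note that by Assumption~\ref{assumption:ex_system_marg_stable} the matrix exponential $e^{St}$ is bounded for all $t\ge 0$, while the bracketed coefficient above is bounded because $\widetilde{\Pi}^{x,ss}_t$, $\widetilde{\Pi}^{z,ss}_t$ and $\widehat{\Gamma}_t^{ss}$ are; hence $e^{ss}_t(\omega_0,\varepsilon)\to 0$ as $\|\omega_0\|\to 0$, for every $\varepsilon\in\mathbb{R}_{>0}$.

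The core of the argument is the limit $\varepsilon\to 0$. By Proposition~\ref{proposition:solvability_regulator_equations} there exist bounded $\Pi_t,\Lambda_t$ solving~\eqref{eq:IFI_regulator_equations}, and by Lemma~\ref{lemma:brownian_motion_estimation_EF}, which applies because~\textbf{(EC)} holds, $\Delta\widehat{W}_\varepsilon\xrightarrow{\varepsilon} d\mathcal{W}_t$ almost surely. Since, as shown in Lemma~\ref{lemma:steady_state_regulation}, the forward-Euler discretisation of~\eqref{eq:steady_state_regulation_FI} coincides with that of~\eqref{eq:IFI_regulator_equations} in the limit, It\^o's interpretation of the stochastic integral yields $\lim_{\varepsilon\to 0}\widehat{\Pi}_t = \Pi_t$ and $\lim_{\varepsilon\to 0}\widehat{\Lambda}_t = \Lambda_t$ almost surely, hence $\lim_{\varepsilon\to 0}\widehat{\Gamma}_t = \Lambda_t - K\Pi_t$. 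Next I would use~\textbf{(EC)}, i.e. that the steady-state estimation error $z_t^{ss} - x_t^{ss}$ is $o(\varepsilon^2)$, to deduce that $\widetilde{\Pi}^{z,ss}_t - \widetilde{\Pi}^{x,ss}_t \to 0$ as $\varepsilon\to 0$; feeding $z_t = x_t$ and $\Delta\widehat{W}_\varepsilon\to d\mathcal{W}_t$ into the equations of Corollary~\ref{corollary:steady_state_partitioned} collapses the $\widetilde{\Pi}^x_t$-equation to the regulator equations~\eqref{eq:IFI_regulator_equations}, so $\widetilde{\Pi}^{x,ss}_t\to \Pi^{ss}_t$ as well. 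Substituting these limits into the expression for $e^{ss}_t$, the term $DK\widetilde{\Pi}^{z,ss}_t$ and the $-DK\Pi^{ss}_t$ contained in $D\widehat{\Gamma}^{ss}_t$ cancel, leaving $\lim_{\varepsilon\to 0}e^{ss}_t(\omega_0,\varepsilon) = [C\Pi^{ss}_t + Q + D\Lambda^{ss}_t]e^{St}\omega_0$, which vanishes by the algebraic identity in~\eqref{eq:IFI_regulator_equations}. Therefore $\lim_{\varepsilon\to 0}e^{ss}_t(\omega_0,\varepsilon) = 0$ almost surely for every $\omega_0\in\mathbb{R}^\nu$, and together with the first limit this establishes~\textbf{(R$_A^O$)}; hence the regulator~\eqref{eq:hybrid_regulator} with the selections~\eqref{eq:AEF_selections} solves Problem~\ref{problem:AEF}.

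The main obstacle I anticipate is making rigorous the joint limit behind $\widetilde{\Pi}^{x,ss}_t\to\Pi^{ss}_t$ and $\widetilde{\Pi}^{z,ss}_t\to\widetilde{\Pi}^{x,ss}_t$: unlike the full-information case, here the estimation error dynamics~\eqref{eq:est_error_dyn_nonlin} are nonlinear and time-varying, condition~\textbf{(EC)} couples the $\varepsilon\to 0$ limit with the $t\to\infty$ limit, and one must ensure that the $o(\varepsilon^2)$ bound on the steady-state estimation error is preserved when passing through the hybrid steady-state equations of Corollary~\ref{corollary:steady_state_partitioned}. Carefully tracking the truncation errors produced both by the forward-Euler scheme and by the reconstruction~\eqref{eq:dWest_definition_EF}, as already done in the proof of Lemma~\ref{lemma:brownian_motion_estimation_EF}, is what makes this step go through.
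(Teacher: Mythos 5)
Your proposal is correct and follows essentially the same route as the paper: fix $(K,L)$ via Assumption~\ref{assumption:KL_existence}, invoke Lemma~\ref{lemma:steady_state_regulation} for boundedness and the expression~\eqref{eq:tracking_error_steady_state_EF}, use Lemma~\ref{lemma:brownian_motion_estimation_EF} together with It\^o's interpretation to get $\widehat{\Pi}_t\to\Pi_t$, $\widehat{\Lambda}_t\to\Lambda_t$ and hence $\widetilde{\Pi}^{x}_t,\widetilde{\Pi}^{z}_t\to\Pi_t$, and conclude as in Theorem~\ref{theorem:AFI_solution}. Your explicit cancellation of the $DK$ terms and your closing caveat about the joint $\varepsilon\to 0$, $t\to\infty$ limit are finer-grained than the paper's (quite terse) argument, but the underlying proof is the same.
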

\begin{proof}
	Let $(K,L)$ be any pair such that conditions~\textbf{(S$_A^O$)} and \textbf{(EC)} are satisfied. By Lemma~\ref{lemma:steady_state_regulation} the steady-state response of the tracking error is bounded and given by~\eqref{eq:tracking_error_steady_state_EF}. Recall that the forward-Euler discretisation with step $\varepsilon$ of the dynamics of $\E{\widehat{\Pi}_t}$ is~\eqref{eq:steady_state_regulation_approx_discretised_FI}.
	Using the results of Lemma \ref{lemma:brownian_motion_estimation_EF} and It\^{o}'s interpretation of the stochastic integral, we have
	\begin{equation}
	\lim_{\varepsilon\rightarrow 0}\lim_{t\rightarrow \infty} \left(\E{\widehat{\Pi}_t} - \Pi_t\right) = 0,\qquad  \lim_{\varepsilon\rightarrow 0}\lim_{t\rightarrow \infty} \left(\E{\widehat{\Lambda}_t} - \Lambda_t\right) = 0
	\end{equation}
	almost surely, where the matrices $\Pi_t\in\mathbb{R}^{n\times\nu}$ and $\Gamma_t\in\mathbb{R}^{1\times\nu}$ satisfy~\eqref{eq:IFI_regulator_equations}.
	As a consequence,
	\begin{equation}
	\lim_{\varepsilon\rightarrow 0}\lim_{t\rightarrow \infty}\left(\widetilde{\Pi}^x_t - \Pi_t\right) = \lim_{\varepsilon\rightarrow 0}\lim_{t\rightarrow \infty}\left(\widetilde{\Pi}^z_t - \Pi_t\right) = 0
	\end{equation} 
	holds almost surely as well. Then the result follows as in the proof of Theorem~\ref{theorem:AFI_solution}.
\end{proof}

\begin{remark}
If $\omega$ is not available, but can be measured through the output $y_t^a$, it is possible to incorporate an observer for $\omega$ in the regulator, alongside the observer for $x_t$. The estimate provided by this observer can then replace $\omega$ in the construction of $\Delta \widehat{W}_\varepsilon$ and in the control $u_t$.
\end{remark}

\section{Example}
\label{sec:example}
In this section we illustrate the theory by means of a numerical example. We show that it is possible to achieve approximate regulation via the hybrid scheme introduced in Sections~\ref{section:approximate_fi}, in the case of full information, and \ref{section:approximate_ef}, in the case of output feedback. In particular, we point out that, as proved in Theorems~\ref{theorem:AFI_solution} and \ref{theorem:AEF_solution}, the accuracy of the approximation increases as the exogenous input approaches zero and as the measurements are acquired with higher frequency.

Consider the electrical circuit displayed in Figure~\ref{fig:circuit}. The exogenous signal $\tilde \omega$ is a combination of two voltage harmonics and the aim is to regulate the voltage on a resistive load $R_L$ to the sinusoid with the smallest frequency. The control input is represented by a current injection and we assume that nearby electrical appliances can cause a random modification of the reactive components, \emph{i.e.} a generic reactive component $X$ is such that $X^{-1} = X_0^{-1} + X_1^{-1}\dot{\mathcal{W}}$, where $\dot{\mathcal{W}}$ is generalised white noise. See \cite{Damm2004}, \cite{Samuels1959} and \cite{Ugrinovskii1999} for more details on this way of modelling uncertain circuits.
\begin{figure}
    \centering
    \includegraphics[width=\columnwidth]{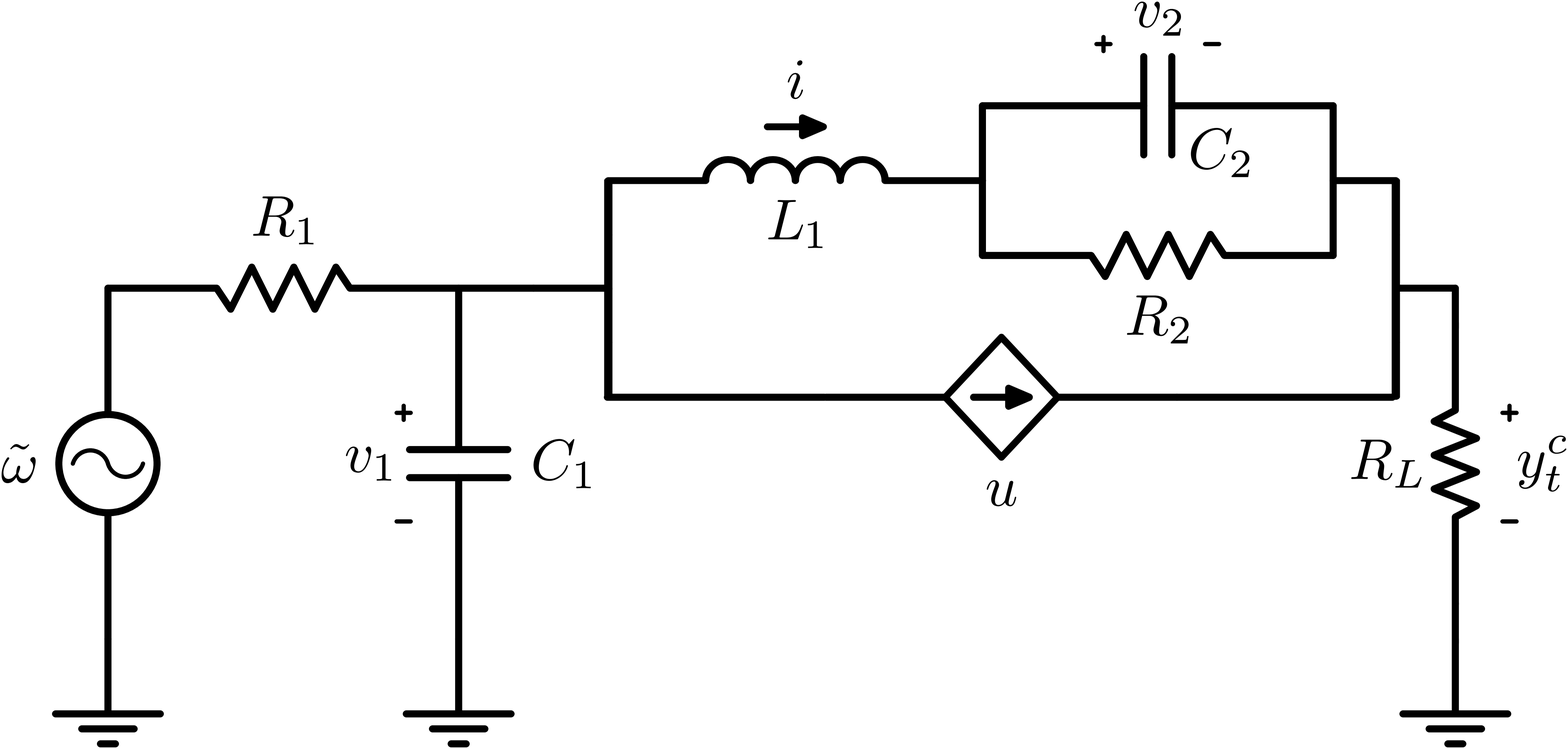}
    \caption{Uncertain electrical circuit.}
    \label{fig:circuit}
\end{figure}

The exogenous system has a matrix $S~=~2\pi\;\text{diag}\,(0,10\,S_0,50\,S_0)$, where
\begin{equation}
    S_0 = \begin{bmatrix}
    0 & 1 \\ -1 & 0\end{bmatrix}
\end{equation}
and $\tilde \omega = P_S \omega $, with $P_S = \begin{bmatrix}1 & 1 & 0 & 1 & 0 \end{bmatrix}$. We select the initial condition $\omega_0 = d\begin{bmatrix}a_0 & a_1 & 0 & a_2 & 0 \end{bmatrix}^\top$, with $a_0 = 60$, $a_1 = 5$, $a_2 = 1$ and $d\in\mathbb{R}$ a free parameter. In this way, the matrix $P_S$ selects the DC and the cosinusoidal components of the vector $\omega$, thus having $\tilde \omega(t) = d(a_0 + a_1\cos(2\pi \, 10t) + a_2 \cos(2\pi\,50t))$.  Setting the state as $x_t = \begin{bmatrix}
i, v_1, v_2
\end{bmatrix}$, the matrices of the system are
\begin{equation}
\begin{aligned}
    A = \begin{bmatrix}
    -\frac{R_L}{L_1} & \frac{1}{L_1} & -\frac{1}{L_1}\\
    -\frac{1}{C_1} & -\frac{1}{R_1C_1} & 0 \\
    \frac{1}{C_2} & 0 & -\frac{1}{R_2C_2}
    \end{bmatrix}, \quad B = \begin{bmatrix}
    -\frac{R_L}{L_1} \\ -\frac{1}{C_1} \\ 0
    \end{bmatrix},\\
    P = \begin{bmatrix}
    0 \\ \frac{1}{R_1C_1} \\ 0
    \end{bmatrix}P_S, \; C = \begin{bmatrix}
    R_L & 0 & 0
    \end{bmatrix}, \; D = R_L.
\end{aligned}
\end{equation}
The values of the parameters are selected as $R_1 = 1 \Omega$, $R_2 = 4\Omega$, $R_L = 20\Omega$, $C_1 = 10$mF, $C_2 = 20$mF, $L_1 = 200$mH. We assume that the uncertainty induced on the reactive components has a standard deviation of the $1\%$ of their nominal values, that is $F = 0.01A$, $G = 0.01B$ and $R = 0.01P$. Our aim is to replicate on the load the harmonic at $10$Hz and to cancel the harmonic at $50$Hz. Therefore, $Q = -\begin{bmatrix}1 & 1 & 0 & 0 & 0 \end{bmatrix}$. The initial condition $x_0$ has been set to zero. The matrix $K$ has been chosen as 
   $K=\begin{bmatrix}
        -0.07 & 0.04 & 0.06
    \end{bmatrix}$ 
and used both as the full-information and as the output-feedback gain. In the output-feedback case we assume that the current $i$ and the voltage $v_1$ are measured, \emph{i.e.}
    $C_a = \begin{bmatrix}
    1 & 0 & 0
    \end{bmatrix}$, 
    $C_b = \begin{bmatrix}
    0 & 1 & 0
    \end{bmatrix}$.
The gain $L(t)$ has been chosen piecewise constant as suggested in Remark~\ref{remark:choice_of_gains}. This choice of $K$ and $L$ is such that the closed-loop system, in both the full-information and output-feedback cases, are asymptotically stable and \textbf{(EC)} is satisfied.

The discrete-time numerical implementation of the hybrid controller has required an integration method involving two different sampling periods: 1) $\varepsilon$ is the sampling period at which the compensations for the diffusion term have been performed; 2) a smaller sampling period ($5\cdot 10^{-7}$)  has been used to simulate the continuous-time dynamics via a forward-Euler scheme.

First, we show that $e^{ss}_t$ decreases as $\omega_0$ approaches zero. To do so, we fix the sampling period $\varepsilon = 5\cdot 10^{-5}$ and we perform three simulations setting $d = 10$, $d = 1$ and $d = 0.1$, respectively. Figures \ref{fig:FI_w0} and \ref{fig:EF_w0} show the time history of the tracking error $e_t$ in the full-information and output-feedback cases, respectively. The insets show the detail when the initial transient response has vanished. The plots confirm that as $d$ is decreased, hence $\Vert\omega_0\Vert$ is decreased, tracking is improved. 
\begin{figure}
    \centering
    \includegraphics[width=1\columnwidth]{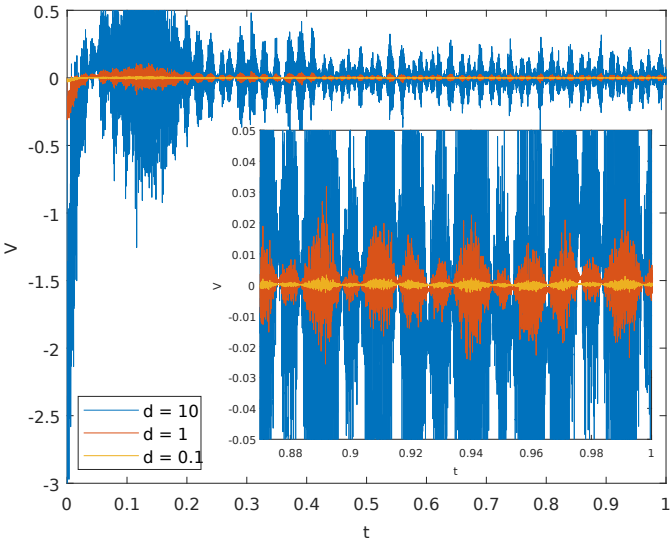}
    \caption{Time history of the tracking error $e_t$ for different values of $d$ and $\varepsilon = 5\cdot 10^{-5}$, in the case of full information. Namely: $d=10$ (blue line), $d=1$ (orange line) and $d=0.1$ (yellow line). Inset: detail at steady state.}
    \label{fig:FI_w0}
\end{figure}

\begin{figure}
    \centering
    \includegraphics[width=1\columnwidth]{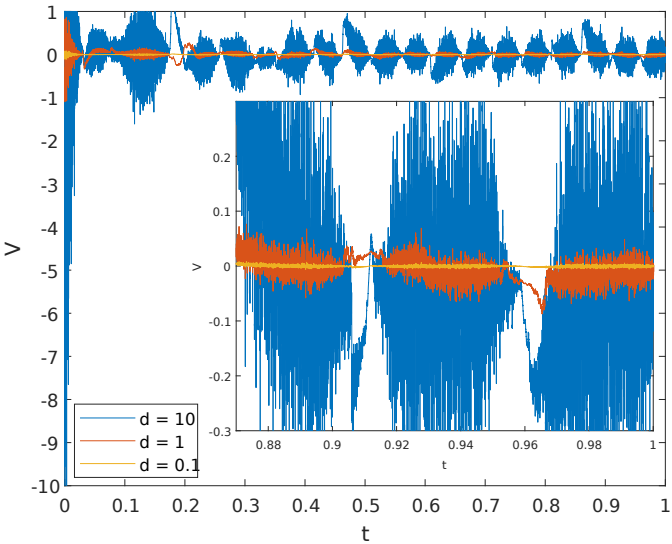}
    \caption{Time history of the tracking error $e_t$ for different values of $d$ and $\varepsilon = 5\cdot 10^{-5}$, in the case of output feedback. Namely: $d=10$ (blue line), $d=1$ (orange line) and $d=0.1$ (yellow line). Inset: detail at steady state.}
    \label{fig:EF_w0}
\end{figure}
Second, we show that $e^{ss}_t$ decreases as $\varepsilon$ approaches zero as well. We fix $d=2$ and we perform three simulations setting $\varepsilon = 5\cdot 10^{-4}$, $\varepsilon = 5\cdot 10^{-5}$ and $\varepsilon = 5\cdot 10^{-6}$, respectively. An additional simulation, where regulation in the mean sense is achieved, \emph{i.e.} jump corrections never happen (equivalently, $\varepsilon = +\infty$), has been carried out. Figures \ref{fig:FI_eps} and \ref{fig:EF_eps} show the time history of the tracking error $e_t$ in the full-information and output-feedback cases, respectively. The insets show the detail when the initial transient response has vanished. The plots confirm that a smaller sampling period improves the steady-state tracking.

\begin{figure}
    \centering
    \includegraphics[width=0.95\columnwidth]{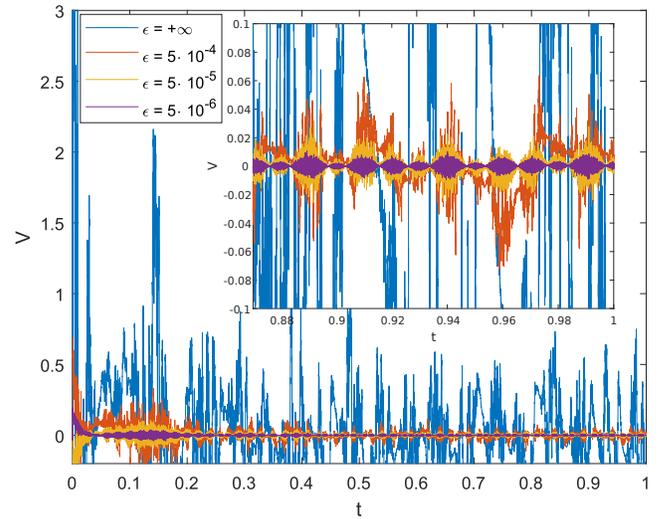}
    \caption{Time history of the tracking error $e_t$ for different values of $\varepsilon$ and $d = 2$, in the case of full information. Namely: $\varepsilon=+\infty$ (blue line), $\varepsilon = 5\cdot 10^{-4}$ (orange line), $\varepsilon = 5\cdot 10^{-5}$ (yellow line) and $\varepsilon = 5\cdot 10^{-6}$ (purple line). Inset: detail at steady state.}
    \label{fig:FI_eps}
\end{figure}

\begin{figure}
    \centering
    \includegraphics[width=0.95\columnwidth]{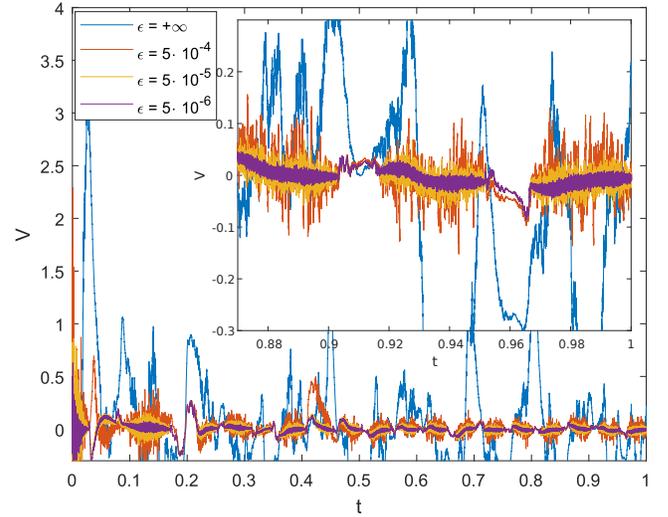}
    \caption{Time history of the tracking error $e_t$ for different values of $\varepsilon$ and $d = 2$, in the case of output feedback. Namely: $\varepsilon=+\infty$ (blue line), $\varepsilon = 5\cdot 10^{-4}$ (orange line), $\varepsilon = 5\cdot 10^{-5}$ (yellow line) and $\varepsilon = 5\cdot 10^{-6}$ (purple line). Inset: detail at steady state.}
    \label{fig:EF_eps}
\end{figure}

\begin{remark}
Assume $z_t \approx x_t$. If $v^z(k-1) \approx C_b(Fx_{t_{k-1}} + Gu_{t_{k-1}} + R\omega(t_{k-1}))$ is close to zero, the noise affecting the system gives a negligible contribution to the dynamics of the output, yet possibly affecting the dynamics of the state. This has a practical implication. In fact, if we had infinite machine precision, a very small $v^z(k-1)$ would still lead to a good \emph{a-posteriori} estimation of the Brownian motion increment in the interval $[t_{k-1}, t_k)$. However, approximation errors cause a considerable mismatch between the true and the estimated increments, thus compromising the integration of both the regulator equations and the state observer. Therefore, when implementing the output-feedback control architecture, it is beneficial not to perform any correction at time $t_k$ when $\vert v^z(k-1)\vert$ is below a predefined threshold. This is equivalent to setting $\Delta\E{\widehat{W}_\varepsilon}(k) = 0$. When this happens, the performances of the controller slightly worsen, but they improve as soon as $\vert v^z\vert$ is above the threshold. The effects of this scheme on the tracking is visible in Figures \ref{fig:EF_w0} and \ref{fig:EF_eps}, where periodical sudden variations, yet small in norms, can be observed in the time history of the tracking error.
\end{remark}

\section{Conclusions}
\label{sec:conclusions}
In this paper we have defined and solved the full-information and output-feedback output regulation problems for a general class of linear stochastic systems. In particular, we have shown that the exact integration of the regulator equations requires access to the Brownian motion. This hypothesis is obviously not practically sound. Therefore, we have formulated and solved approximate full-information and output-feedback problems via hybrid schemes. Namely, sampled measurements of the state or of the output of the system have been employed to estimate \emph{a posteriori} the Brownian motion increments between sampling times. Such estimates have been used to integrate the regulator equations and to synthesise a hybrid state observer. It has been shown that such solutions, though approximate, tend to the ideal counterparts as long as samples are acquired with increasing frequency. A numerical example has been provided to show the validity of the theory.
\bibliographystyle{IEEEtran}
\bibliography{bibliography}

\end{document}